\documentclass[11pt]{article}

\usepackage[a4paper,margin=1in]{geometry}
\usepackage{amsmath,amssymb,amsthm,mathtools,bm}
\usepackage{mathrsfs}
\usepackage{enumitem}
\usepackage{natbib}
\usepackage{hyperref}
\usepackage{booktabs}
\usepackage{microtype}

\hypersetup{
  colorlinks=true,
  citecolor=blue,
  linkcolor=blue,
  urlcolor=blue
}

\newcommand{\R}{\mathbb{R}}
\newcommand{\E}{\mathbb{E}}
\newcommand{\bE}{\mathbf{E}}
\newcommand{\bV}{\mathbf{V}}
\newcommand{\Prob}{\mathbb{P}}
\newcommand{\Var}{\mathrm{Var}}
\newcommand{\Cov}{\mathrm{Cov}}
\newcommand{\1}{\mathbf{1}}

\newcommand{\T}{\top}
\newcommand{\op}{\mathrm{op}}
\newcommand{\tr}{\mathrm{tr}}

\newcommand{\sign}{\mathrm{sign}}

\newcommand{\norm}[1]{\left\lVert #1 \right\rVert}
\newcommand{\maxnorm}[1]{\left\lVert #1 \right\rVert_{\max}}
\newcommand{\onenorm}[1]{\left\lVert #1 \right\rVert_{1}}
\newcommand{\twonorm}[1]{\left\lVert #1 \right\rVert_{2}}
\newcommand{\frobnorm}[1]{\left\lVert #1 \right\rVert_{F}}

\newcommand{\Vor}{\widehat{\mathbf V}^{\,\mathrm{or}}}
\newcommand{\Vdb}{\widehat{\mathbf V}^{\,DB}}

\newcommand{\Vsoft}{\widehat{\mathbf V}^{\,\mathrm{soft}}}
\newcommand{\Vhard}{\widehat{\mathbf V}^{\,\mathrm{hard}}}
\newcommand{\Vtrue}{\mathbf V}
\newcommand{\Gammak}{\mathbf \Gamma_k}
\newcommand{\GhatkD}{\widehat{\mathbf \Gamma}^{\,D}_{k}}

\newcommand{\Goorhatk}{\widehat{\mathbf \Gamma}^{\,U}_{k}}

\theoremstyle{plain}
\newtheorem{theorem}{Theorem}[section]
\newtheorem{assumption}{Assumption}[section]
\newtheorem{proposition}{Proposition}[section]
\newtheorem{lemma}{Lemma}[section]
\newtheorem{corollary}{Corollary}[section]

\theoremstyle{definition}
\newtheorem{definition}{Definition}[section]

\theoremstyle{remark}
\newtheorem{remark}{Remark}[section]

\title{\bf Difference-Based High-Dimensional Long-Run Covariance Matrix Estimation for Mean-shift Time Series}
\author{Yanhong Liu$^1$, Fengyi Song$^2$, and Long Feng$^2$\\ 
$^1$Guangzhou University, $^2$Nankai University}
\date{\today}

\begin{document}
\maketitle

\begin{abstract}
We consider estimation of high-dimensional long-run covariance matrices for time series with nonconstant means, a setting in which conventional estimators can be severely biased. To address this difficulty, we propose a difference-based initial estimator that is robust to a broad class of mean variations, and combine it with hard thresholding, soft thresholding, and tapering to obtain sparse long-run covariance estimators for high-dimensional data. We derive convergence rates for the resulting estimators under general temporal dependence and time-varying mean structures, showing explicitly how the rates depend on covariance sparsity, mean variation, dimension, and sample size. Numerical experiments show that the proposed methods perform favorably in high dimensions, especially when the mean evolves over time.
\end{abstract}

\noindent
{\bf Keywords:} difference-based estimator; long-run covariance matrix; high dimension; thresholding; max-norm convergence; nonstationary mean.
\section{Introduction}
\label{sec:introduction}

Let $\{\bm{X}_t\}_{t=1}^n$ be a $p$-dimensional time series. A fundamental second-order object in time series analysis is the long-run covariance matrix
\[
\mathbf V
:=
\sum_{k=-\infty}^{\infty}\mathbf \Gamma_k,\
\mathbf \Gamma_k:=\Cov(\bm{Z}_0,\bm{Z}_k),
\]
where $\{\bm{Z}_t\}$ denotes the latent stationary noise component. When the spectral density matrix $f_{\bm{Z}}(\omega)$ exists, $\mathbf V=2\pi f_{\bm{Z}}(0)$, so $\mathbf V$ is precisely the zero-frequency summary of temporal dependence. This matrix plays a central role in statistical inference for temporally dependent data, including inference for sample means and linear contrasts, change-point analysis, trend inference, simultaneous confidence bands, and many other procedures driven by partial sums or temporally aggregated quantities \citep{newey1987,andrews1991,chan2022mac,chan2022diff}. In low-dimensional settings, long-run variance or long-run covariance estimation has been extensively studied through heteroskedasticity and autocorrelation consistent (HAC) estimators and related kernel or bandwidth-based procedures \citep{newey1987,andrews1991}. 

Classical HAC methodology is typically developed under the implicit premise that the observed series can be globally centered without materially distorting the serial dependence structure. This paradigm is adequate when the mean is stable, but it can break down when the data contain smooth trends, abrupt level shifts, or multiple change points. In such settings, naive centering may contaminate sample autocovariances, thereby leading to seriously biased long-run variance estimates. To overcome this difficulty, \citet{chan2022mac} and \citet{chan2022diff} developed mean-structure-robust, difference-based estimators that remain valid under unknown trends and even a possibly divergent number of breaks. However, that literature is essentially low-dimensional and does not address the additional regularization required when the ambient dimension $p$ is comparable to or much larger than the sample size $n$.

Meanwhile, the high-dimensional covariance literature has largely evolved from the i.i.d.\ setting. It is now well understood that the sample covariance matrix is no longer a reliable estimator when $p$ is large relative to $n$, which has led to a rich body of work on regularized covariance estimation. Representative examples include hard thresholding \citep{bickel2008threshold}, generalized thresholding and shrinkage \citep{rothman2009generalized}, adaptive thresholding \citep{cai2011adaptive}, and tapering-based estimators \citep{cai2010optimal}. These methods exploit sparsity or approximate sparsity of the target covariance matrix and have become standard tools in modern high-dimensional inference. However, they are designed primarily for contemporaneous covariance matrices under independence, and therefore do not directly address the cumulative serial dependence encoded in $\mathbf V$.

A growing literature extends high-dimensional second-order estimation to temporally dependent observations, but the main target in that line of work is still the contemporaneous covariance matrix $\mathbf \Sigma=\Cov(\bm{Z}_t)$ or its inverse, rather than the long-run covariance matrix $\mathbf V$. For example, \citet{chen2013hdts} established rates for thresholded covariance and graphical-Lasso-type precision matrix estimation for high-dimensional stationary and locally stationary time series under functional dependence conditions. \citet{shu2019temporal} further developed generalized thresholding and precision matrix estimation for temporally dependent observations, allowing slowly decaying dependence and long memory through autocorrelation-based dependence summaries. Under heavy tails and mild moments, \citet{zhang2021robust} studied robust estimation of means, covariance matrices and precision matrices for high-dimensional time series. Although these papers substantially broaden the scope of high-dimensional covariance analysis under dependence, their inferential target remains $\mathbf \Sigma$ or $\mathbf \Sigma^{-1}$ rather than the long-run covariance matrix needed for inference on temporal aggregates.

Related frequency-domain work considers high-dimensional spectral density matrices and their inverses. \citet{sun2018large} proposed thresholded estimators for large spectral density matrices of Gaussian and linear processes. \citet{fiecas2019spectral} developed non-asymptotic theory for smoothed periodogram estimation and sparse inverse spectral density estimation. For locally stationary processes, \citet{zhangwu2021convergence} derived a systematic asymptotic theory for covariance and spectral density estimation in high dimensions. These papers are highly relevant because the spectral density matrix is the frequency-domain analogue of the covariance structure. Nevertheless, they do not directly solve the problem considered here, namely, estimating a sparse long-run covariance matrix in the time domain when the observed series may contain nonconstant and unknown mean structures.

There is also a more recent line of work that comes closer to the present problem by directly targeting zero-frequency or long-memory objects. In particular, \citet{baek2023local} developed thresholded and penalized local Whittle estimators for high-dimensional long-run variance and precision matrices under short- and long-range dependence, and \citet{zhang2025joint} studied joint estimation of multiple precision matrices for long-memory time series. These contributions demonstrate that high-dimensional long-run second-order estimation is both statistically meaningful and technically challenging. However, the existing methods in this direction are mainly formulated in the frequency domain under stationary models, whereas many applications in econometrics, finance, environmental studies, and neuroscience involve unknown trends, structural breaks, or other nonstationary mean dynamics. This creates a genuine methodological gap between mean-robust long-run variance estimation in classical time series and sparse regularization in modern high-dimensional statistics.

The goal of this paper is to fill this gap. We study \emph{difference-based high-dimensional long-run covariance matrix estimation} for multivariate time series. Our starting point is a difference-based pilot estimator motivated by \citet{chan2022diff}, which is intrinsically robust to a broad class of mean structures. We then combine this pilot estimator with three regularization schemes that are standard in high-dimensional covariance estimation, namely, hard thresholding, soft thresholding, and tapering. In this way, we bridge two strands of literature that have largely developed separately: on the one hand, mean-structure-robust long-run variance estimation for dependent data; on the other hand, sparse matrix regularization for high-dimensional covariance estimation.

The proposed framework has several attractive features. First, by using difference statistics instead of global centering, it remains stable when the observed series contains smooth mean variation or abrupt mean changes. Second, the subsequent regularization step exploits sparsity of the long-run covariance matrix and therefore adapts naturally to high-dimensional regimes. Third, our methodology is entirely time-domain, making it conceptually simple and computationally convenient, while still targeting the quantity that is directly relevant for inference based on partial sums and sample averages. From a theoretical perspective, we establish convergence properties for the hard-thresholded, soft-thresholded, and tapered estimators under suitable dependence and sparsity conditions. From a numerical perspective, our simulation results show that these procedures perform well in genuinely high-dimensional settings and remain effective when the mean structure is nonconstant, where classical HAC-type estimators can deteriorate substantially.


The rest of the paper is organized as follows. Section~\ref{sec:method} introduces the difference-based oracle and feasible long-run covariance estimators and the three regularization schemes. We also establishes the main concentration and convergence results. Section~\ref{sec:simulation} reports simulation studies under both constant and nonconstant mean structures. In Section \ref{sec:data}, we apply our method to a change-point inference problem for time series. Section~\ref{sec:conclusion} concludes the paper.

\section{Methods}\label{sec:method}

\subsection{Difference-based long-run covariance matrix estimation}
Let
\begin{equation*}
\bm{X}_t = \bm\mu_t + \bm{Z}_t, \qquad t=1,\ldots,n,
\label{eq:model}
\end{equation*}
where \(\bm{X}_t,\bm\mu_t,\bm{Z}_t \in \R^{p_n}\), \(p_n\) is allowed to diverge with \(n\), and \(\{\bm{Z}_t\}_{t\in\mathbb Z}\) is a zero-mean strictly stationary process. Define the lag-\(k\) autocovariance matrix of \(\bm{Z}_t\) by
\[
\Gammak := \E\left(\bm{Z}_0 \bm{Z}_k^\T\right), \qquad k\in\mathbb Z.
\]
The target long-run covariance matrix is
\begin{equation*}
\Vtrue := \sum_{k\in\mathbb Z} \Gammak \in \R^{p_n\times p_n},
\label{eq:targetV}
\end{equation*}
assuming the series converges entrywise (equivalently, in max norm under our assumptions).

Suppose that \(\boldsymbol{\mu}_1=\cdots=\boldsymbol{\mu}_n\). A large body of work has been devoted to estimating the long-run covariance matrix \(\Vtrue\), and the existing methods may be broadly grouped into three classical categories. The first category consists of subsampling-based estimators; see, for example, \cite{MeketonSchmeiser1984,Carlstein1986,SongSchmeiser1995,PolitisRomanoWolf1999,ChanYau2017b}. A representative example is the overlapping batch means (OBM) estimator,
\[
\widehat{\mathbf{\Sigma}}_{\mathrm{OBM}, n}
:=
\frac{\ell}{n-\ell+1}
\sum_{i=\ell}^n
\left(
\frac{1}{\ell}\sum_{j=i-\ell+1}^i \widehat{\boldsymbol{X}}_j
\right)^{\otimes 2},
\]
where \(\ell \in \mathbb{N}\cap(1,n)\) is the batch size, \(\widehat{\boldsymbol{X}}_i=\boldsymbol{X}_i-\overline{\boldsymbol{X}}_n\), and \(\overline{\boldsymbol{X}}_n=n^{-1}\sum_{i=1}^n \boldsymbol{X}_i\). And \(\boldsymbol{a}^{\otimes 2}:=\boldsymbol{a}\otimes \boldsymbol{a}=\boldsymbol{a}\boldsymbol{a}^\top\) for any \(\boldsymbol{a}\in\mathbb{R}^p\).

The second category is formed by kernel-based estimators; see \cite{newey1987,andrews1991,Politis2011}. Two widely used examples are the Bartlett and quadratic spectral (QS) estimators,
\[
\widehat{\boldsymbol{\Sigma}}_{\mathrm{Bart}, n}
:=
\sum_{k=-\ell}^{\ell} \operatorname{Bart}(k/\ell)\widehat{\boldsymbol{\Gamma}}_k,
\qquad
\widehat{\boldsymbol{\Sigma}}_{\mathrm{QS}, n}
:=
\sum_{k=-(n-1)}^{n-1} \operatorname{QS}(k/\ell)\widehat{\boldsymbol{\Gamma}}_k,
\]
where
\[
\widehat{\boldsymbol{\Gamma}}_k
:=
n^{-1}\sum_{i=|k|+1}^n
\widehat{\boldsymbol{X}}_i\widehat{\boldsymbol{X}}_{i-|k|}^{\top},
\]
\[
\operatorname{Bart}(t):=(1-|t|)\mathbb{I}(|t|\le 1),
\]
and
\[
\operatorname{QS}(t)
:=
\frac{25}{12\pi^2 t^2}
\left\{
\frac{\sin(6\pi t/5)}{6\pi t/5}
-\cos(6\pi t/5)
\right\}.
\]

The third category includes resampling-based procedures; see, for instance, \cite{Kunsch1989,PolitisRomano1994,PaparoditisPolitis2001,Lahiri2003}. More recently, another line of research has developed estimators constructed from orthonormal series; see \cite{Phillips2005,Sun2013}. Practical issues concerning the choice of kernels and orthonormal bases are discussed in \citet{LazarusLewisStockWatson2018}, while \citet{Muller2014} investigated inference in the presence of strong serial dependence.

When the mean structure is nonconstant, that is, \(\bm \mu_i\neq \bm \mu_j\) for some \(i\neq j\), the classical estimators of the long-run covariance matrix may fail to be valid. In particular, standard HAC-type estimators are generally biased unless the mean is constant, and existing modifications for nonstationary time series are typically tailored to specific forms of mean structure and are not equipped with an implementable optimal bandwidth selector. To address this difficulty, \citet{chan2022mac} proposed a mean-structure and autocorrelation consistent (MAC) estimator, which is fully nonparametric and can be computed in a single pass without explicitly estimating the trend or locating change points. The key idea is to use a variate-difference and bi-differencing construction to remove the effect of the unknown mean structure. Specifically, define
\[
\widehat{\mathbf{\Delta}}_k
=
\frac{1}{2(n-|k|+1)}
\sum_{i=|k|+1}^n
\left(\bm X_i-\bm X_{i-|k|}\right)^{\otimes 2},
\qquad |k|=0,1,\ldots,n-1.
\]
Let \(L_k=c_0\ell+c_1|k|\) with \(c_0,c_1>0\). Then the MAC estimator of \(\mathbf{V}\) is given by
\[
\widehat{\mathbf{V}}_{0,q,n}
=
\sum_{k=-\ell}^{\ell}
K_q\!\left(\frac{|k|}{\ell}\right)
\left(
\widehat{\mathbf{\Delta}}_{L_k}-\widehat{\mathbf{\Delta}}_k
\right),
\qquad
K_q(x)=(1-|x|^q)\mathbf{1}(|x|\le 1),
\]
which yields an estimator that is robust to unknown trends and multiple change points.

\cite{chan2022diff} proposed a general framework for the estimation of the long-run variance for time series with nonconstant means. We adopt the difference-based construction in the multivariate extension of \citet{chan2022diff}. Let \(m_n\in\mathbb N\) be the difference order and \(h_n\in\mathbb N\) the lag spacing. Let \(\mathbf d_n=(d_{n,0},\ldots,d_{n,m_n})^\T\) be a normalized difference sequence satisfying
\begin{equation}
\sum_{j=0}^{m_n} d_{n,j}=0,
\qquad
\sum_{j=0}^{m_n} d_{n,j}^2 =1.
\label{eq:diff_normalized}
\end{equation}
Define the vector-valued difference statistics
\begin{equation*}
\mathbf{D}_t
:= \sum_{j=0}^{m_n} d_{n,j}\,\bm{X}_{t-jh_n},
\qquad t=m_n h_n+1,\ldots,n.
\label{eq:D_t}
\end{equation*}

For each integer \(k\) with \(|k|<\ell_n\), where \(\ell_n\) is the kernel bandwidth, define the difference-based sample autocovariance matrix
\begin{equation*}
\GhatkD
:= \frac{1}{n}\sum_{t=m_n h_n+|k|+1}^{n} \mathbf{D}_t \mathbf{D}_{t-|k|}^\T.
\end{equation*}
Let \(K(\cdot)\) be an even kernel supported on \([-1,1]\). The initial difference-based long-run covariance matrix estimator is
\begin{equation*}
\Vdb
:= \sum_{|k|<\ell_n} K\!\left(\frac{k}{\ell_n}\right)\GhatkD.
\label{eq:VDB}
\end{equation*}

To retain the mean contribution explicitly, define
\begin{equation}
\mathbf{M}_t := \sum_{j=0}^{m_n} d_{n,j}\,\bm\mu_{t-jh_n},
\qquad
\mathbf{U}_t := \sum_{j=0}^{m_n} d_{n,j}\,\bm{Z}_{t-jh_n},
\label{eq:MU}
\end{equation}
so that \(\mathbf{D}_t = \mathbf{M}_t + \mathbf{U}_t\). Define the oracle autocovariance estimator (built from the noise-only difference process) by
\begin{equation*}
\Goorhatk
:= \frac{1}{n}\sum_{t=m_n h_n+|k|+1}^{n} \mathbf{U}_t \mathbf{U}_{t-|k|}^\T,
\qquad
\Vor
:= \sum_{|k|<\ell_n} K\!\left(\frac{k}{\ell_n}\right)\Goorhatk.
\label{eq:Vor}
\end{equation*}

The mean-only deterministic contribution is
\begin{equation}
\mathbf B_{\mu,n}
:=
\sum_{|k|<\ell_n} K\!\left(\frac{k}{\ell_n}\right)
\left[
\frac{1}{n}\sum_{t=m_n h_n+|k|+1}^{n} \mathbf{M}_t \mathbf{M}_{t-|k|}^\T
\right].
\label{eq:Bmu}
\end{equation}
Define the residual mean-induced random remainder
\begin{equation}
\mathbf R_{\mu,n}
:= \Vdb - \Vor - \mathbf B_{\mu,n}.
\label{eq:Rmu}
\end{equation}
By construction, \(\mathbf R_{\mu,n}\) consists of the cross terms involving \(\mathbf{M}_t\) and \(\mathbf{U}_t\). We do \emph{not} impose a changepoint structure on \(\bm\mu_t\); instead, \(\mathbf B_{\mu,n}\) and \(\mathbf R_{\mu,n}\) will enter the convergence rate explicitly.

We now state assumptions needed for the high-dimensional max-norm analysis. These assumptions are designed to (i) preserve the construction of the estimator detailed in Section~6.2 of \citet{chan2022diff}, (ii) keep the mean-induced effect explicit, and (iii) enable a high-dimensional union-bound upgrade of the entrywise oracle concentration.

\begin{assumption}[Kernel and difference sequence]
\label{ass:kernel_diff}
The kernel \(K\) is even, bounded, supported on \([-1,1]\), and \(K(0)=1\). There exist constants \(q_0>0\) and \(C_K>0\) such that
\[
|1-K(x)|\le C_K |x|^{q_0}, \qquad |x|\le 1.
\]
The difference sequence \(\{d_{n,j}\}_{j=0}^{m_n}\) is normalized as in equation \eqref{eq:diff_normalized}. Moreover,
\[
m_n h_n = o(n), \qquad \ell_n\to\infty, \qquad \ell_n=o(n).
\]
\end{assumption}

\begin{assumption}[Uniform oracle bias bound inherited from DB kernel theory]
\label{ass:oracle_bias}
There exists a constant \(C_B>0\), independent of \(n\) and \((r,s)\), such that
\begin{equation*}
\maxnorm{\E(\Vor)-\Vtrue}
\le
C_B\left\{\ell_n^{-q_0} + \frac{(m_n+1)h_n}{n}\right\}.
\label{eq:oracle_bias_bound}
\end{equation*}
\end{assumption}

\begin{assumption}[Uniform Bernstein-type concentration for oracle entries]
\label{ass:bernstein}
There exist constants \(c_1,c_2>0\) such that for all sufficiently large \(n\), all \(1\le r,s\le p_n\), and all \(t>0\),
\begin{equation*}
\Prob\!\left(
\left|
\Vor[r,s]-\E\Vor[r,s]
\right|>t
\right)
\le
2\exp\!\left[
-c_1 n \min\!\left\{
\frac{t^2}{\ell_n+m_n h_n},\,
\frac{t}{\ell_n+m_n h_n}
\right\}
\right].
\end{equation*}
\end{assumption}

\begin{remark}
Assumptions \ref{ass:oracle_bias} and \ref{ass:bernstein} are standard in high-dimensional analysis of dependent data. They typically follow under suitable combinations of weak temporal dependence, such as physical dependence or mixing conditions, together with mild tail assumptions, for example, sub-exponential or Bernstein-type moment bounds on the coordinates. We state these assumptions directly in the main text because the primary goal of this paper is to develop the max-norm error analysis and the associated thresholding theory for the difference-based estimator in a high-dimensional setting. For completeness, the appendix provides several sufficient conditions under which both assumptions are satisfied.
\end{remark}

\begin{assumption}[Explicit mean-induced remainder]
\label{ass:mean_remainder}
There exists a deterministic sequence \(r_{\mu,n}\ge 0\) such that
\begin{equation*}
\maxnorm{\mathbf R_{\mu,n}} = O_p(r_{\mu,n}),
\end{equation*}
where \(\mathbf R_{\mu,n}\) is defined in equation \eqref{eq:Rmu}.
\end{assumption}

\begin{remark}
Assumption~\ref{ass:mean_remainder} does \emph{not} require \(\bm\mu_t\) to follow a changepoint model. The mean effect is fully retained through the explicit deterministic bias matrix \(\mathbf B_{\mu,n}\) in equation \eqref{eq:Bmu} and the remainder \(\mathbf R_{\mu,n}\). A detailed discussion of mean shifts is provided in Section~\ref{sec:mean_explicit_controls}.
\end{remark}

Next, we state the Max-norm rate of the initial difference-based estimator $\Vdb$.
Define the overall rate
\begin{equation}
a_n
:=
\maxnorm{\mathbf B_{\mu,n}}
+ r_{\mu,n}
+ \ell_n^{-q_0}
+ \frac{(m_n+1)h_n}{n}
+ \sqrt{\frac{(\ell_n+m_n h_n)\log p_n}{n}}
+ \frac{(\ell_n+m_n h_n)\log p_n}{n}.
\label{eq:an_def}
\end{equation}

\begin{theorem}
\label{thm:maxnorm_rate}
Suppose Assumptions~\ref{ass:kernel_diff}--\ref{ass:mean_remainder} hold and \(\log p_n = o(n)\). Then
\begin{equation*}
\maxnorm{\Vdb-\Vtrue} = O_p(a_n),
\end{equation*}
where \(a_n\) is defined in \eqref{eq:an_def}.
\end{theorem}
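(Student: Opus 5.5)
The plan is to decompose the error, using the definition \eqref{eq:Rmu}, as
\[
\Vdb-\Vtrue
= \bigl(\Vor-\E\Vor\bigr) + \bigl(\E\Vor-\Vtrue\bigr) + \mathbf B_{\mu,n} + \mathbf R_{\mu,n}.
\]
The triangle inequality in max norm then reduces the theorem to bounding each of the four pieces separately. Three of them are immediate. The deterministic piece $\mathbf B_{\mu,n}$ contributes exactly $\maxnorm{\mathbf B_{\mu,n}}$. Assumption~\ref{ass:mean_remainder} gives $\maxnorm{\mathbf R_{\mu,n}}=O_p(r_{\mu,n})$. Assumption~\ref{ass:oracle_bias} gives the deterministic bound $\maxnorm{\E\Vor-\Vtrue}\le C_B\{\ell_n^{-q_0}+(m_n+1)h_n/n\}$. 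A sum of $O_p$ terms is $O_p$ of the sum, so these account for the first four summands of $a_n$.

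The only real work is the stochastic oracle term, which we upgrade from entrywise to max-norm control by a union bound over the $p_n^2$ entries using Assumption~\ref{ass:bernstein}. Write $b_n:=\ell_n+m_nh_n$ and, for a constant $M>0$ to be chosen, set
\[
t_n := M\Bigl\{\sqrt{b_n\log p_n/n}+b_n\log p_n/n\Bigr\}.
\]
We then check that $n\min\{t_n^2/b_n,\,t_n/b_n\}\ge M'\log p_n$, where $M'\to\infty$ as $M\to\infty$. There are two cases.
\begin{itemize}
\item If $t_n\le1$, the minimum is $t_n^2/b_n\ge M^2 b_n\log p_n/(n b_n)$, which gives $n\cdot(\text{min})\ge M^2\log p_n$.
\item If $t_n>1$, the minimum is $t_n/b_n\ge M\log p_n/n$, which gives $n\cdot(\text{min})\ge M\log p_n$.
\end{itemize}
Hence
\[
\Prob\bigl(\maxnorm{\Vor-\E\Vor}>t_n\bigr)\le 2p_n^2\exp\{-c_1\min(M,M^2)\log p_n\}=2p_n^{2-c_1\min(M,M^2)}.
\]
This can be made arbitrarily small by taking $M$ large, provided $p_n\ge2$. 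If $p_n$ stays bounded, one replaces $\log p_n$ by $\log(p_n\vee e)$, a harmless convention. This yields $\maxnorm{\Vor-\E\Vor}=O_p\bigl(\sqrt{b_n\log p_n/n}+b_n\log p_n/n\bigr)$.

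The main (mild) obstacle is this case split between the sub-Gaussian and sub-exponential regimes of the Bernstein bound. It must hold uniformly in $(r,s)$, and Assumption~\ref{ass:bernstein} provides exactly that uniformity, since its constants do not depend on the entry. The condition $\log p_n=o(n)$, together with $b_n=o(n)$ from Assumption~\ref{ass:kernel_diff}, is used only to note that the resulting rate is meaningful. It also ensures that "sufficiently large $n$" in Assumption~\ref{ass:bernstein} applies. Combining the four bounds gives $\maxnorm{\Vdb-\Vtrue}=O_p(a_n)$.
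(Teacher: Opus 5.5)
Your proposal is correct and follows the paper's proof essentially verbatim. You use the same split of $\Vdb-\Vtrue$ into oracle fluctuation, oracle bias (Assumption~\ref{ass:oracle_bias}), $\mathbf B_{\mu,n}$ and $\mathbf R_{\mu,n}$ (Assumption~\ref{ass:mean_remainder}), and you handle the fluctuation by a union bound over the $p_n^2$ entries using Assumption~\ref{ass:bernstein}. Your single threshold $t_n$, with the case split $t_n\le 1$ versus $t_n>1$, merges the two Bernstein regimes slightly more cleanly than the paper's separate thresholds $t_1,t_2$. You also correctly note that $p_n^{2-c_1\min(M,M^2)}$ need only be made small rather than tend to zero, with $\log p_n$ read as $\log(p_n\vee e)$, which tidies a point the paper glosses over.
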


\begin{corollary}[Bandwidth-balanced rate under negligible mean remainder]
\label{cor:balanced_rate}
Assume the conditions of Theorem~\ref{thm:maxnorm_rate}. In addition, suppose
\[
\maxnorm{\mathbf B_{\mu,n}} + r_{\mu,n}
= o\!\left(
\ell_n^{-q_0}
+
\sqrt{\frac{\ell_n \log p_n}{n}}
\right),
\qquad
m_n h_n \lesssim \ell_n,
\qquad
\frac{(m_n+1)h_n}{n}
=o\!\left(
\ell_n^{-q_0}
+
\sqrt{\frac{\ell_n \log p_n}{n}}
\right).
\]
Then
\[
\maxnorm{\Vdb-\Vtrue}
=
O_p\!\left(
\ell_n^{-q_0}
+\sqrt{\frac{\ell_n \log p_n}{n}}
+\frac{\ell_n \log p_n}{n}
\right).
\]
If moreover \(\ell_n \log p_n / n \to 0\) and
\[
\ell_n \asymp \left(\frac{n}{\log p_n}\right)^{1/(2q_0+1)},
\]
then
\begin{equation}
\maxnorm{\Vdb-\Vtrue}
=
O_p\!\left\{
\left(\frac{\log p_n}{n}\right)^{q_0/(2q_0+1)}
\right\}.
\label{eq:balanced_rate}
\end{equation}
\end{corollary}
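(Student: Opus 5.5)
The proof is a direct specialization of Theorem~\ref{thm:maxnorm_rate}, followed by a bandwidth balancing step. Write $b_n:=\ell_n^{-q_0}+\sqrt{\ell_n\log p_n/n}$. By Theorem~\ref{thm:maxnorm_rate}, $\maxnorm{\Vdb-\Vtrue}=O_p(a_n)$, so it suffices to show
\[
a_n\lesssim \ell_n^{-q_0}+\sqrt{\ell_n\log p_n/n}+\ell_n\log p_n/n .
\]
We bound the six summands of $a_n$ in \eqref{eq:an_def} one at a time.

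\begin{itemize}
\item The first two summands, $\maxnorm{\mathbf B_{\mu,n}}+r_{\mu,n}$, are $o(b_n)$ by hypothesis.
\item The bias term $(m_n+1)h_n/n$ is also $o(b_n)$ by hypothesis.
\item The term $\ell_n^{-q_0}$ is already in the target.
\item Since $m_nh_n\lesssim \ell_n$, we have $\ell_n+m_nh_n\lesssim\ell_n$. Hence $\sqrt{(\ell_n+m_nh_n)\log p_n/n}\lesssim\sqrt{\ell_n\log p_n/n}$.
\item By the same comparison, $(\ell_n+m_nh_n)\log p_n/n\lesssim \ell_n\log p_n/n$.
\end{itemize}

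Summing these bounds gives the first display. The only care needed is that $o(\cdot)$ and $\lesssim$ are deterministic statements about sequences, so they combine with the $O_p$ bound without further probabilistic argument.

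For the second claim, assume additionally that $\ell_n\log p_n/n\to0$. Then $\ell_n\log p_n/n\le\sqrt{\ell_n\log p_n/n}$ for large $n$, so the linear term is dominated by the square-root term and the rate reduces to $b_n$. Now substitute $\ell_n\asymp(n/\log p_n)^{1/(2q_0+1)}$. This gives
\[
\ell_n^{-q_0}\asymp(\log p_n/n)^{q_0/(2q_0+1)}.
\]
It also gives
\[
\ell_n\log p_n/n\asymp (\log p_n/n)^{1-1/(2q_0+1)}=(\log p_n/n)^{2q_0/(2q_0+1)},
\]
whose square root is $(\log p_n/n)^{q_0/(2q_0+1)}$. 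Both terms are of the same order, which yields \eqref{eq:balanced_rate}.

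There is no genuine obstacle here. The one thing to check is consistency of the side conditions: under the balanced choice, $\ell_n\log p_n/n\to0$ is equivalent to $\log p_n=o(n)$, which is already assumed. Likewise $\ell_n\to\infty$ and $\ell_n=o(n)$ in Assumption~\ref{ass:kernel_diff} are compatible provided $\log p_n=o(n)$.
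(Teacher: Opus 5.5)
Your proposal is correct and follows the paper's proof essentially step for step. Like the paper, you plug the negligibility hypotheses and $\ell_n+m_nh_n\asymp\ell_n$ into $a_n$ from Theorem~\ref{thm:maxnorm_rate}, let the square-root term dominate the linear term once $\ell_n\log p_n/n\to0$, and balance $\ell_n^{-q_0}$ against $\sqrt{\ell_n\log p_n/n}$. Your extra check that the balanced bandwidth satisfies $\ell_n\log p_n/n\to0$ exactly when $\log p_n=o(n)$ is a harmless addition.
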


\subsection{Sparse Estimators}

When the dimension is fixed, the difference-based (DB) estimator is effective for time series with nonconstant mean structures. In high-dimensional settings, however, the DB estimator may suffer from substantial bias and may fail to be invertible, much like the sample covariance matrix in classical high-dimensional problems. This motivates the use of sparse regularization under the assumption that the long-run covariance matrix is sparse. In this paper, we focus on three widely used regularization schemes: hard thresholding \citep{bickel2008threshold}, soft thresholding \citep{rothman2009generalized}, and tapering \citep{cai2010optimal}. We note that other high-dimensional covariance estimators developed in the literature could also be combined with the DB estimator in a similar manner.

Let \(\tau_n>0\) be a threshold level. Define the hard-thresholding estimator \(\Vhard=(\widehat V^{\mathrm{hard}}_{rs})\) by
\begin{equation}
\widehat V^{\mathrm{hard}}_{rs}
=
\begin{cases}
\Vdb[r,r], & r=s,\\[4pt]
\Vdb[r,s]\1\!\left(|\Vdb[r,s]|\ge \tau_n\right), & r\neq s.
\end{cases}
\label{eq:hard_threshold}
\end{equation}
Similarly, define the soft-thresholding estimator \(\Vsoft=(\widehat V^{\mathrm{soft}}_{rs})\) by
\begin{equation}
\widehat V^{\mathrm{soft}}_{rs}
=
\begin{cases}
\Vdb[r,r], & r=s,\\[4pt]
\sign(\Vdb[r,s])\bigl(|\Vdb[r,s]|-\tau_n\bigr)_+, & r\neq s.
\end{cases}
\label{eq:soft_threshold}
\end{equation}

\begin{assumption}[Weak row-sparsity class]
\label{ass:sparsity}
There exist \(\alpha\in[0,1)\), \(M>0\), and a sequence \(c_{n,p}\ge 1\) such that \(\Vtrue\) is symmetric and
\begin{equation*}
\max_{1\le r\le p_n}\sum_{s=1}^{p_n} |V_{rs}|^\alpha \le c_{n,p},
\qquad
\max_{1\le r\le p_n} V_{rr}\le M.
\end{equation*}
\end{assumption}

\begin{theorem}[Thresholded estimation under weak row-sparsity]
\label{thm:thresholding}
Suppose Theorem~\ref{thm:maxnorm_rate} and Assumption~\ref{ass:sparsity} hold. Let \(\tau_n=C_\tau a_n\) with a sufficiently large constant \(C_\tau>0\). Then, for the hard-thresholding estimator \(\Vhard\),
\begin{align}
\maxnorm{\Vhard-\Vtrue} &= O_p(a_n), \label{eq:thr_max}\\
\twonorm{\Vhard-\Vtrue} &= O_p\!\bigl(c_{n,p}a_n^{\,1-\alpha}\bigr), \label{eq:thr_op}\\
\frac{1}{p_n}\frobnorm{\Vhard-\Vtrue}^2 &= O_p\!\bigl(c_{n,p}a_n^{\,2-\alpha}\bigr). \label{eq:thr_frob}
\end{align}
In particular, if \(c_{n,p}a_n^{1-\alpha}\to 0\), then \(\Vhard\) is operator-norm consistent.
\end{theorem}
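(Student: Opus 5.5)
We follow the standard Bickel--Levina argument, working on a high-probability event supplied by Theorem~\ref{thm:maxnorm_rate}. Given $\epsilon>0$, Theorem~\ref{thm:maxnorm_rate} provides a constant $C_\epsilon$ such that the event
\[
\mathcal E_n:=\bigl\{\maxnorm{\Vdb-\Vtrue}\le C_\epsilon a_n\bigr\}
\]
has probability at least $1-\epsilon$ for all large $n$. We take $C_\tau\ge 2C_\epsilon$, so that on $\mathcal E_n$ every entry satisfies $|\Vdb[r,s]-V_{rs}|\le \tau_n/2$. One caveat: since $C_\epsilon$ depends on $\epsilon$, the phrase ``sufficiently large $C_\tau$'' must be read as allowing this dependence. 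Alternatively, one can replace $\tau_n$ by $C_\tau a_n$ with $C_\tau\to\infty$ slowly. I will note this and proceed on $\mathcal E_n$, where all bounds below are deterministic.

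\textbf{Max-norm bound.} Diagonal entries are left untouched, so their error is at most $\tau_n/2$. For an off-diagonal entry there are two cases.
\begin{itemize}
\item If the entry is kept, its error is the raw error, at most $\tau_n/2$.
\item If it is killed, then $|\Vdb[r,s]|<\tau_n$, hence $|V_{rs}|<3\tau_n/2$, and the error equals $|V_{rs}|$.
\end{itemize}
Either way $\maxnorm{\Vhard-\Vtrue}\le 3\tau_n/2=O(a_n)$, which gives \eqref{eq:thr_max}.

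\textbf{Operator-norm bound.} Since $\Vhard-\Vtrue$ is symmetric, its operator norm is bounded by the maximum absolute row sum. Fix a row $r$ and split the off-diagonal error $e_{rs}$ into three cases.
\begin{itemize}
\item Kept with $|V_{rs}|\ge\tau_n/2$: here $|e_{rs}|\le\tau_n/2$. The number of such $s$ is at most $(\tau_n/2)^{-\alpha}c_{n,p}$, because $\sum_s|V_{rs}|^\alpha\le c_{n,p}$. This piece contributes $O(c_{n,p}\tau_n^{1-\alpha})$.
\item Kept with $|V_{rs}|<\tau_n/2$: this is impossible, since $|\Vdb[r,s]|\ge\tau_n$ would force $|V_{rs}|\ge\tau_n/2$.
\item Killed: here $|e_{rs}|=|V_{rs}|$ with $|V_{rs}|<3\tau_n/2$. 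Then $\sum_s |V_{rs}|\1(|V_{rs}|<3\tau_n/2)\le (3\tau_n/2)^{1-\alpha}\sum_s|V_{rs}|^\alpha\le (3\tau_n/2)^{1-\alpha}c_{n,p}$.
\end{itemize}
Adding the diagonal term $\tau_n/2\le c_{n,p}\tau_n^{1-\alpha}$ (using $c_{n,p}\ge1$ and $\tau_n\lesssim1$, where the latter may be assumed since otherwise the consistency statement is vacuous and the bound is handled separately) gives \eqref{eq:thr_op}. Operator-norm consistency follows immediately when $c_{n,p}a_n^{1-\alpha}\to0$.

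\textbf{Frobenius bound.} We bound $p_n^{-1}\frobnorm{\cdot}^2\le\max_r\sum_s e_{rs}^2$ and use the same three-way split.
\begin{itemize}
\item Kept entries contribute at most $(\tau_n/2)^2\cdot(\tau_n/2)^{-\alpha}c_{n,p}$.
\item Killed entries contribute $\sum_s V_{rs}^2\1(|V_{rs}|<3\tau_n/2)\le(3\tau_n/2)^{2-\alpha}c_{n,p}$.
\item The diagonal contributes $\tau_n^2/4\le c_{n,p}\tau_n^{2-\alpha}$.
\end{itemize}
This yields \eqref{eq:thr_frob}.

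The main obstacle is the first step: passing from the $O_p$ statement to a single event with a fixed threshold constant. The per-row counting arguments are routine once the entrywise deviation bound $\tau_n/2$ is in hand.
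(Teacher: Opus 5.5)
This is essentially the paper's argument. You use the same event $\{\maxnorm{\Vdb-\Vtrue}\le\tau_n/2\}$, the same kept/killed split for the max-norm and row-sum bounds, and the same absorption of the diagonal via $c_{n,p}\ge1$. The only difference is that you get the Frobenius rate from a direct row-wise sum of squares, where the paper uses $\frobnorm{\mathbf A}^2\le p_n\maxnorm{\mathbf A}\,\onenorm{\mathbf A}$.

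Your caveats are more careful than the paper's proof. The paper asserts $\Prob(\mathcal E_n)\to1$ for a fixed $C_0$, which is more than $O_p$ gives, and uses $\tau_n\lesssim1$ without saying so. Without that condition the operator and Frobenius bounds can genuinely fail: the untouched diagonal may carry error of order $a_n$, which exceeds $c_{n,p}a_n^{1-\alpha}$ when $a_n^{\alpha}\gg c_{n,p}$. So $\tau_n\lesssim1$ is an implicit hypothesis, not a case that can be ``handled separately''.

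The one step to repair is your claim that $\Vhard-\Vtrue$ is symmetric. $\GhatkD$ is built from $\mathbf D_t\mathbf D_{t-|k|}^\T$ for both signs of $k$, so $\Vdb$, and hence $\Vhard$, need not be symmetric. Either symmetrize first, as the paper remarks, or also bound the column sums and use $\twonorm{\mathbf A}\le\bigl(\onenorm{\mathbf A}\,\onenorm{\mathbf A^\T}\bigr)^{1/2}$. The column-sum bound goes through identically because $\Vtrue$ is symmetric.
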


\begin{theorem}[Soft-thresholding under weak row-sparsity]
\label{thm:soft_thresholding}
Suppose Theorem~\ref{thm:maxnorm_rate} and Assumption~\ref{ass:sparsity} hold. Let \(\tau_n=C_\tau a_n\) with \(C_\tau>0\) sufficiently large, then, for the soft-thresholding estimator \(\Vsoft\),
\begin{align}
\maxnorm{\Vsoft-\Vtrue} &= O_p(a_n), \label{eq:soft_max}\\
\twonorm{\Vsoft-\Vtrue} &= O_p\!\bigl(c_{n,p}a_n^{\,1-\alpha}\bigr), \label{eq:soft_op}\\
\frac{1}{p_n}\frobnorm{\Vsoft-\Vtrue}^2 &= O_p\!\bigl(c_{n,p}a_n^{\,2-\alpha}\bigr). \label{eq:soft_frob}
\end{align}
\end{theorem}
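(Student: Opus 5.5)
The plan is to work on the high-probability event
\[
\mathcal E_n := \Bigl\{\maxnorm{\Vdb-\Vtrue}\le C a_n\Bigr\},
\]
where \(C\) is chosen via Theorem~\ref{thm:maxnorm_rate} so that \(\Prob(\mathcal E_n)\ge 1-\epsilon\) for all large \(n\). We then take \(C_\tau\ge 2C\), say, so that \(\tau_n\ge 2Ca_n\). All bounds below are deterministic consequences of \(\mathcal E_n\), which gives the \(O_p\) statements.

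\textbf{Max-norm bound.} Diagonal entries of \(\Vsoft\) coincide with \(\Vdb\), so their error is at most \(Ca_n\). For off-diagonal \((r,s)\) we use the elementary soft-thresholding inequality
\[
\bigl|\sign(x)(|x|-\tau)_+ - v\bigr|\le \min\{|v|,\ \tau+|x-v|\},
\]
with \(x=\Vdb[r,s]\) and \(v=V_{rs}\). To check it, consider two cases. If \(|x|\le\tau\), the estimate is \(0\) and the error equals \(|v|\). Otherwise the error is at most \(|x-v|+\tau\). For the \(|v|\) branch: on \(\mathcal E_n\), whenever \(|x|>\tau\) we have \(|v|\ge \tau-Ca_n\ge \tau/2\), so \(\tau+|x-v|\le 3|v|\). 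Hence every off-diagonal error is at most \(\min\{3|v|,\ \tau_n+Ca_n\}\). In particular each entry error is bounded by \(\tau_n+Ca_n=O(a_n)\), which proves \eqref{eq:soft_max}.

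\textbf{Operator norm.} We bound \(\twonorm{\cdot}\) by the maximum absolute row sum, using symmetry (both \(\Vsoft\) and \(\Vtrue\) are symmetric). For row \(r\),
\[
\sum_s|\widehat V^{\mathrm{soft}}_{rs}-V_{rs}|\le Ca_n+\sum_{s\ne r}\min\{3|V_{rs}|,\,(C_\tau+C)a_n\}.
\]
We then use \(\min\{3|v|,\,b\}\le 3^{1-\alpha}\,|v|^{\alpha}b^{1-\alpha}\). This holds because \(\min\{A,b\}\le A^\alpha b^{1-\alpha}\) for \(\alpha\in[0,1)\), with the convention \(|v|^0=\1(v\ne0)\) when \(\alpha=0\). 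The row sum is therefore at most
\[
Ca_n+c'\,c_{n,p}\,a_n^{1-\alpha}=O\bigl(c_{n,p}a_n^{1-\alpha}\bigr),
\]
using \(c_{n,p}\ge1\) and \(a_n\lesssim a_n^{1-\alpha}\) once \(a_n\le1\). If \(a_n\not\to0\), the bound needs the same argument with a constant absorbed; I'd note that the result is only meaningful when \(a_n\to0\). This gives \eqref{eq:soft_op}.

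\textbf{Frobenius norm.} Similarly, \(\min\{3|v|,b\}^2\le 3^{2-\alpha}|v|^\alpha b^{2-\alpha}\). Summing over \(s\) and then over \(r\), and adding the diagonal contribution \(p_nC^2a_n^2\), gives
\[
\frac{1}{p_n}\frobnorm{\Vsoft-\Vtrue}^2\lesssim a_n^2+c_{n,p}a_n^{2-\alpha},
\]
which proves \eqref{eq:soft_frob}.

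\textbf{Main obstacle.} The key step is the entrywise dichotomy bound \(\min\{3|v|,(C_\tau+C)a_n\}\). In particular, the branch \(|x|>\tau\) needs the factor \(3|v|\) and not just \(|v|\), because small true entries are shrunk rather than killed. This is exactly where the requirement \(C_\tau\) "sufficiently large" (\(\ge 2C\)) enters. The assumption \(\max_r V_{rr}\le M\) is not needed beyond ensuring that the diagonal terms are harmless.
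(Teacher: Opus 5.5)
Your proof is correct and uses the same mechanism as the paper: work on $\{\maxnorm{\Vdb-\Vtrue}\le\tau_n/2\}$, compare entrywise, sum over rows using weak row-sparsity, and pass to $\twonorm{\cdot}$ through absolute row sums. The bookkeeping differs.

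\begin{itemize}
\item The paper splits each off-diagonal row sum into three pieces: estimation error on surviving entries, the shrinkage cost $\tau_n$ per surviving entry, and $|V_{rs}|$ on killed entries. It controls the number of survivors by $\#\{s:|V_{rs}|\ge\tau_n/2\}\le(2/\tau_n)^{\alpha}c_{n,p}$, and gets the Frobenius rate from $\frobnorm{\mathbf A}^2\le p_n\maxnorm{\mathbf A}\onenorm{\mathbf A}$.
\item You compress all of this into the single envelope $|e_{rs}|\le\min\{3|V_{rs}|,(C_\tau+C)a_n\}$, together with the interpolation $\min\{A,b\}\le A^{\alpha}b^{1-\alpha}$ and its squared form $\min\{A,b\}^2\le A^{\alpha}b^{2-\alpha}$.
\end{itemize}

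Your route delivers the max, operator and Frobenius bounds in one stroke. It also makes transparent that shrinking the surviving entries costs only a constant factor. The paper's split is longer but parallels its hard-thresholding proof term by term.

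One assertion of yours is not right as stated: $\Vsoft$ need not be symmetric. The reason is that $\Vdb$ uses $\frac1n\sum_t\mathbf D_t\mathbf D_{t-|k|}^\T$ for both $\pm k$, and this matrix is not symmetric for $k\neq0$. The paper covers this only with a parenthetical remark about symmetrizing first. Your envelope handles it directly. It depends on $(r,s)$ only through $|V_{rs}|$, and $\Vtrue$ is symmetric, so the maximal column sum obeys the same bound as the maximal row sum. Then
\[
\twonorm{\mathbf E}^2\le\Bigl(\max_r\sum_s|e_{rs}|\Bigr)\Bigl(\max_s\sum_r|e_{rs}|\Bigr)
\]
finishes the operator-norm step without any symmetry of the estimator.

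Two small slips, neither of which affects the rates:
\begin{itemize}
\item The ``elementary inequality'' $|\sign(x)(|x|-\tau)_+-v|\le\min\{|v|,\tau+|x-v|\}$ is false in general. Take $v=0$ and $x=2\tau$: the error is $\tau$. Only the version with $3|v|$, valid on $\mathcal E_n$ and which you then derive, is true, so state that one directly.
\item The bound $\min\{3|v|,b\}\le 3^{1-\alpha}|v|^{\alpha}b^{1-\alpha}$ fails for $\alpha>1/2$. Take $b=3|v|$: the left side is $3|v|$ and the right side is $9^{1-\alpha}|v|$. The correct constant is $3^{\alpha}$. Your squared version with $3^{2-\alpha}$ happens to be valid.
\end{itemize}
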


Theorems \ref{thm:thresholding} and \ref{thm:soft_thresholding} establish the rates of convergence for the hard-thresholding and soft-thresholding estimators, respectively. These rates depend on several key features of the problem, most notably the sparsity level of the long-run covariance matrix, the magnitude of the mean shift, and the relationship between the sample size and the dimension. Roughly speaking, a sparser covariance structure facilitates more accurate estimation, while a larger mean shift deteriorates performance through its impact on the bias term. Meanwhile, the dimensionality affects the stochastic error through the usual high-dimensional scaling. Consequently, the overall convergence behavior is governed by the interplay among dependence sparsity, signal contamination from the mean shift, and the effective amount of sample information. These results clarify the regimes under which thresholded estimators remain reliable for estimating the long-run covariance matrix in high-dimensional settings.

\begin{definition}[Tapering weights and tapering estimator]
\label{def:tapering}
For an integer \(k_n\ge 1\), let \(\mathbf W^{(k_n)}=(w_{rs}^{(k_n)})_{1\le r,s\le p_n}\) be a symmetric tapering weight matrix satisfying:
\begin{enumerate}[label=(\roman*)]
\item \(0\le w_{rs}^{(k_n)}\le 1\) for all \(r,s\);
\item \(w_{rs}^{(k_n)}=1\) if \(|r-s|\le k_n/2\);
\item \(w_{rs}^{(k_n)}=0\) if \(|r-s|\ge k_n\);
\item for each row \(r\), the number of nonzero weights \(\#\{s:w_{rs}^{(k_n)}\neq 0\}\le C_w k_n\), where \(C_w\) is a universal constant.
\end{enumerate}
Define the tapering estimator by the Hadamard product
\begin{equation}
\mathbf V^{\mathrm{tap}} := \mathbf W^{(k_n)}\circ \Vdb.
\label{eq:taper_est}
\end{equation}
\end{definition}

To establish the convergence rate of the tapering estimator, we need the following assumption.
\begin{assumption}[Bandable target class]
\label{ass:bandable}
There exist \(\nu>0\), constants \(C_{\mathrm{bd}}, C_{\mathrm{bd},2}, M_1>0\), such that \(\Vtrue\) is symmetric and:
\begin{align}
\max_{1\le r\le p_n}\sum_{s=1}^{p_n}|V_{rs}| &\le M_1, \notag\\
\max_{1\le r\le p_n}\sum_{|r-s|>k}|V_{rs}| &\le C_{\mathrm{bd}}\,k^{-\nu}, \qquad \forall\, k\ge 1, \notag\\
\frac{1}{p_n}\sum_{r=1}^{p_n}\sum_{|r-s|>k}V_{rs}^2 &\le C_{\mathrm{bd},2}\,k^{-2\nu}, \qquad \forall\, k\ge 1. \label{eq:bandable_tail_l2}
\end{align}
\end{assumption}

\begin{theorem}[Tapering under bandability]
\label{thm:tapering}
Suppose Theorem~\ref{thm:maxnorm_rate}, Assumption~\ref{ass:bandable}, and Definition~\ref{def:tapering} hold. Then
\begin{equation}
\twonorm{\mathbf V^{\mathrm{tap}}-\Vtrue}
=
O_p\!\left(k_n a_n + k_n^{-\nu}\right).
\label{eq:taper_op}
\end{equation}
In addition,
\begin{equation}
\frac{1}{p_n}\frobnorm{\mathbf V^{\mathrm{tap}}-\Vtrue}^2
=
O_p\!\left(k_n a_n^2 + k_n^{-2\nu}\right).
\label{eq:taper_frob}
\end{equation}
Hence, if \(k_n a_n\to 0\) and \(k_n\to\infty\), the tapering estimator is operator-norm consistent. The balancing choice \(k_n\asymp a_n^{-1/(\nu+1)}\) yields
\[
\twonorm{\mathbf V^{\mathrm{tap}}-\Vtrue}
=
O_p\!\left(a_n^{\nu/(\nu+1)}\right).
\]
\end{theorem}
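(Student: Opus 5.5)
We would decompose the error of the tapering estimator into a stochastic part and a bias part. Writing $\mathbf E_n:=\Vdb-\Vtrue$, we have
\[
\mathbf V^{\mathrm{tap}}-\Vtrue
=
\mathbf W^{(k_n)}\circ \mathbf E_n
+
\bigl(\mathbf W^{(k_n)}\circ\Vtrue-\Vtrue\bigr).
\]
We bound each piece in operator norm by the matrix $\ell_1/\ell_\infty$ norm. For a symmetric matrix $\mathbf A$, $\twonorm{\mathbf A}\le\max_r\sum_s|A_{rs}|$. Symmetry holds here because $\mathbf W^{(k_n)}$ and $\Vtrue$ are symmetric. $\Vdb$ is symmetric as well, since $K$ is even and the sum pairs the $k$ and $-k$ terms through $\GhatkD$ and its transpose. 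We would check the exact indexing convention; if it fails, we replace $\Vdb$ by its symmetrization, which does not increase the max-norm error.

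For the stochastic part, conditions (i) and (iv) of Definition~\ref{def:tapering} give, for every row $r$,
\[
\sum_s |w^{(k_n)}_{rs}E_{n,rs}| \le C_w k_n \maxnorm{\mathbf E_n}.
\]
Hence $\twonorm{\mathbf W^{(k_n)}\circ\mathbf E_n}\le C_w k_n\maxnorm{\mathbf E_n}=O_p(k_na_n)$ by Theorem~\ref{thm:maxnorm_rate}.

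For the bias part, the entries $(1-w_{rs})V_{rs}$ vanish when $|r-s|\le k_n/2$ by condition (ii), and they are bounded by $|V_{rs}|$ otherwise. So the row sums are at most $\max_r\sum_{|r-s|>k_n/2}|V_{rs}|\le C_{\mathrm{bd}}(k_n/2)^{-\nu}$ by Assumption~\ref{ass:bandable}. Combining the two pieces with the triangle inequality gives \eqref{eq:taper_op}.

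For the Frobenius bound, we use $(a+b)^2\le 2a^2+2b^2$ entrywise. Condition (iv) gives
\[
\frac1{p_n}\frobnorm{\mathbf W\circ \mathbf E_n}^2\le \frac1{p_n}\,p_n C_wk_n\maxnorm{\mathbf E_n}^2=O_p(k_na_n^2).
\]
The bias part is at most $p_n^{-1}\sum_r\sum_{|r-s|>k_n/2}V_{rs}^2\le C_{\mathrm{bd},2}(k_n/2)^{-2\nu}$ by \eqref{eq:bandable_tail_l2}. This yields \eqref{eq:taper_frob}.

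The consistency claim then follows directly: if $k_na_n\to0$ and $k_n\to\infty$, both terms in \eqref{eq:taper_op} vanish. For the balancing choice $k_n\asymp a_n^{-1/(\nu+1)}$ we get $k_na_n\asymp a_n^{\nu/(\nu+1)}$ and $k_n^{-\nu}\asymp a_n^{\nu/(\nu+1)}$. Since $k_n$ must be an integer at least $1$, we take the integer part and assume $a_n\to0$, which this choice implicitly requires.

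The only mildly delicate step is justifying the row-sum bound on the operator norm, which needs symmetry (or else $\sqrt{\|\cdot\|_1\|\cdot\|_\infty}$, which gives the same bound here because both pieces have identical row and column sums by the symmetry of $\mathbf W$ and $\Vtrue$). The rest is bookkeeping of constants. The $O_p$ statement passes through because the bounds are deterministic multiples of $\maxnorm{\mathbf E_n}$ and $\maxnorm{\mathbf E_n}^2$.
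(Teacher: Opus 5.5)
Your proposal is correct and matches the paper's proof essentially step for step. It uses the same decomposition into $\mathbf W^{(k_n)}\circ\mathbf E_n$ and $(\mathbf W^{(k_n)}-\mathbf 1\mathbf 1^\T)\circ\Vtrue$, the same row-sum bound $C_wk_n\maxnorm{\mathbf E_n}$, the same bandable tail at $k_n/2$, and the same Frobenius bookkeeping. Keep your fallback $\twonorm{\mathbf A}\le\sqrt{\|\mathbf A\|_1\|\mathbf A\|_\infty}$, because $\GhatkD$ is defined with $|k|$ for both $\pm k$ and so $\Vdb$ need not be symmetric. That fallback works for the estimator exactly as defined, since the column sums of $\mathbf W^{(k_n)}\circ\mathbf E_n$ obey the same $C_wk_n\maxnorm{\mathbf E_n}$ bound (they are not identical to the row sums, but $\mathbf W^{(k_n)}$ is symmetric), which is cleaner than the paper's bare appeal to symmetry.
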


Theorem~\ref{thm:tapering} is complementary to thresholding: thresholding targets sparse matrices without an index-distance structure, whereas tapering exploits a natural ordering and bandability. Both are driven by the same initial max-norm rate \(a_n\), which in our framework explicitly carries the mean-induced component.

\subsection{Selection of tuning parameters}

The regularization parameters are selected in a data-driven manner through a blockwise validation procedure tailored to dependent observations. Since random splitting at the individual-observation level would destroy the temporal dependence structure, we instead repeatedly draw two non-overlapping contiguous blocks from the sample. For the $b$th repetition, denote by $I_b^{\mathrm{tr}}$ and $I_b^{\mathrm{va}}$ the training block and the validation block, respectively, and let $\hat{\mathbf V}^{(b)}_{\lambda}$ be the thresholded estimator of the difference-based long-run covariance matrix constructed from the training block with threshold level $\lambda$. To evaluate the quality of $\hat{\mathbf V}^{(b)}_{\lambda}$, we compute on the validation block a pilot estimator $\tilde{\mathbf V}^{(b)}$, and measure the discrepancy by the squared Frobenius loss
\[
L_b(\lambda)=\bigl\|\hat{\mathbf V}^{(b)}_{\lambda}-\tilde{\mathbf V}^{(b)}\bigr\|_F^2.
\]
Averaging over $B$ random double-block splits yields the validation criterion
\[
\mathrm{CV}_{\lambda}(\lambda)=\frac{1}{B}\sum_{b=1}^B L_b(\lambda),
\]
and the selected threshold parameter is defined by
\[
\hat{\lambda}=\arg\min_{\lambda\in\Lambda}\mathrm{CV}_{\lambda}(\lambda),
\]
where $\Lambda$ is a prespecified candidate set. This criterion formalizes the usual bias--variance trade-off: an excessively small $\lambda$ retains too many spurious small entries and leads to high variability, whereas an excessively large $\lambda$ overshrinks the covariance structure and induces substantial bias.

The tapering bandwidth $k_{\mathrm{taper}}$ is chosen according to the same principle. For each candidate $k\in\mathcal K$, let $\hat{\mathbf V}^{(b)}_{k}$ denote the tapering estimator computed from the training block with bandwidth $k$. Its validation loss is defined as
\[
L_b(k)=\bigl\|\hat{\mathbf V}^{(b)}_{k}-\tilde{\mathbf V}^{(b)}\bigr\|_F^2,
\]
and the corresponding average criterion is
\[
\mathrm{CV}_{k}(k)=\frac{1}{B}\sum_{b=1}^B L_b(k).
\]
We then select
\[
\hat{k}_{\mathrm{taper}}=\arg\min_{k\in\mathcal K}\mathrm{CV}_{k}(k).
\]
From a statistical perspective, $k_{\mathrm{taper}}$ controls the effective range of serial dependence retained in the estimator: a small bandwidth may remove non-negligible autocovariance contributions and thus incur truncation bias, while a large bandwidth tends to preserve more long-lag information at the cost of increased estimation variability. Therefore, the proposed blockwise validation rule provides a practically stable and theoretically reasonable calibration mechanism for both $\lambda$ and $k_{\mathrm{taper}}$, while respecting the dependence structure of the time series.


\subsection{Discussion of Means}
\label{sec:mean_explicit_controls}

The rate \(a_n\) in equation \eqref{eq:an_def} contains two mean-related components,
\[
\maxnorm{\mathbf B_{\mu,n}}
\quad\text{and}\quad
r_{\mu,n},
\]
where \(\mathbf B_{\mu,n}\) is the deterministic mean-induced bias matrix in \eqref{eq:Bmu} and \(r_{\mu,n}\) controls the stochastic cross-term remainder \(\mathbf R_{\mu,n}\) in \eqref{eq:Rmu}. We now provide concrete upper bounds under broad mean regularity classes.

Recall \(\mathbf{M}_t\) from \eqref{eq:MU}. Define
\begin{equation}
\bar M_{2,n}
:=
\frac{1}{n}\sum_{t=m_n h_n+1}^{n}\norm{\mathbf{M}_t}_{\max}^2,
\qquad
\bar M_{\infty,n}
:=
\max_{m_n h_n+1\le t\le n}\norm{\mathbf{M}_t}_{\max}.
\label{eq:Mbar_def}
\end{equation}
Also write \(\norm{K}_{\infty}:=\sup_{|x|\le 1}|K(x)|\) and
\[
L_n := \ell_n + m_n h_n
\]
for short.

\begin{proposition}[Generic deterministic bound for \(\mathbf B_{\mu,n}\)]
\label{prop:Bmu_generic}
Under Assumption~\ref{ass:kernel_diff}, the deterministic mean-induced bias matrix \(\mathbf B_{\mu,n}\) in \eqref{eq:Bmu} satisfies
\begin{equation*}
\maxnorm{\mathbf B_{\mu,n}}
\le
(2\ell_n-1)\norm{K}_{\infty}\,\bar M_{2,n}.
\end{equation*}
Hence \(\maxnorm{\mathbf B_{\mu,n}}=O(\ell_n \bar M_{2,n})\).
\end{proposition}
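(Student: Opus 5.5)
The plan is to bound each entry of $\mathbf B_{\mu,n}$ by the triangle inequality, then bound each lag-$k$ inner sum with a Cauchy--Schwarz (or AM--GM) step. Fix $(r,s)$. By the definition in \eqref{eq:Bmu},
\[
|\mathbf B_{\mu,n}[r,s]|
\le
\sum_{|k|<\ell_n}\Bigl|K\!\left(\tfrac{k}{\ell_n}\right)\Bigr|\,
\frac{1}{n}\sum_{t=m_nh_n+|k|+1}^{n}|M_{t,r}|\,|M_{t-|k|,s}|,
\]
and $|K(k/\ell_n)|\le\norm{K}_\infty$ because $|k/\ell_n|<1$ lies in the support of $K$. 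The number of lags $k$ with $|k|<\ell_n$ is exactly $2\ell_n-1$. This already produces the factor $(2\ell_n-1)\norm{K}_\infty$.

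It remains to show that, for each fixed $k$, the inner average is at most $\bar M_{2,n}$. I will use $|M_{t,r}||M_{t-|k|,s}|\le \tfrac12\bigl(\norm{\mathbf M_t}_{\max}^2+\norm{\mathbf M_{t-|k|}}_{\max}^2\bigr)$. As $t$ runs over $m_nh_n+|k|+1,\dots,n$, both indices $t$ and $t-|k|$ stay inside $\{m_nh_n+1,\dots,n\}$. Each shifted sum is therefore bounded by the full sum $\sum_{u=m_nh_n+1}^n\norm{\mathbf M_u}_{\max}^2$, since all terms are nonnegative. Hence the inner average is at most $\tfrac1n\sum_{u}\norm{\mathbf M_u}_{\max}^2=\bar M_{2,n}$, as defined in \eqref{eq:Mbar_def}. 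Alternatively, Cauchy--Schwarz on the two shifted sequences gives the same bound.

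Combining the two steps and taking the maximum over $(r,s)$ yields the displayed inequality. The $O(\ell_n\bar M_{2,n})$ statement then follows because $K$ is bounded under Assumption~\ref{ass:kernel_diff}, so $\norm{K}_\infty<\infty$, and $2\ell_n-1\le 2\ell_n$. The only mild care point is the index range: I must check that the shifted index $t-|k|$ never falls below $m_nh_n+1$. This holds because $t\ge m_nh_n+|k|+1$. Beyond this, no real obstacle is expected; the argument is purely deterministic.
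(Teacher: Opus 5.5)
Your proposal is correct and follows essentially the same route as the paper. Both arguments use the triangle inequality with $|K|\le\norm{K}_\infty$, the AM--GM bound $|M_t^{(r)}M_{t-|k|}^{(s)}|\le\frac12\bigl(\norm{\mathbf M_t}_{\max}^2+\norm{\mathbf M_{t-|k|}}_{\max}^2\bigr)$, enlargement of each shifted sum to the full sum defining $\bar M_{2,n}$ (your check that $t-|k|\ge m_nh_n+1$ is what justifies this), and a count of the $2\ell_n-1$ lags.
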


To control the stochastic mean-induced remainder \(\mathbf R_{\mu,n}\), we impose a conditional Bernstein inequality on the cross terms. This assumption is standard once \(\mathbf U_t\) admits suitable dependence and tail controls uniformly over coordinates, and \(\mathbf M_t\) is treated as deterministic conditioning information.

\begin{assumption}[Conditional Bernstein inequality for the mean-noise cross term]
\label{ass:cross_bernstein}
There exists a constant \(c_3>0\) such that, conditional on \(\{\bm\mu_t\}_{t=1}^n\), for all sufficiently large \(n\) and all \(t>0\),
\begin{equation*}
\Prob\!\left(
\maxnorm{\mathbf R_{\mu,n}} > t
\ \middle|\ \{\bm\mu_t\}_{t=1}^n
\right)
\le
4p_n^2 \exp\!\left[
-c_3 n \min\!\left\{
\frac{t^2}{L_n^2 \bar M_{2,n}},
\frac{t}{L_n \bar M_{\infty,n}}
\right\}
\right].
\end{equation*}
\end{assumption}

We also give a sufficient condition for Assumption \ref{ass:cross_bernstein} in the Appendix.

\begin{proposition}[Generic stochastic bound for \(\mathbf R_{\mu,n}\)]
\label{prop:Rmu_generic}
Suppose Assumption~\ref{ass:cross_bernstein} holds and \(\log p_n=o(n)\). Then
\begin{equation}
\maxnorm{\mathbf R_{\mu,n}}
=
O_p\!\left(
\sqrt{\bar M_{2,n}}\,L_n\sqrt{\frac{\log p_n}{n}}
+
\bar M_{\infty,n}\frac{L_n\log p_n}{n}
\right).
\label{eq:Rmu_generic_bound}
\end{equation}
Consequently, one may take
\begin{equation}
r_{\mu,n}
\asymp
\sqrt{\bar M_{2,n}}\,L_n\sqrt{\frac{\log p_n}{n}}
+
\bar M_{\infty,n}\frac{L_n\log p_n}{n}.
\label{eq:rmu_generic_choice}
\end{equation}
\end{proposition}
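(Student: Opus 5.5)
The plan is to invert the conditional Bernstein tail bound in Assumption~\ref{ass:cross_bernstein} at a carefully chosen level $t_n$, and then remove the conditioning by averaging over $\{\bm\mu_t\}$. For a constant $C>0$ to be chosen later, set
\[
t_n := C\left(\sqrt{\bar M_{2,n}}\,L_n\sqrt{\frac{\log p_n}{n}} + \bar M_{\infty,n}\frac{L_n\log p_n}{n}\right).
\]
It then suffices to show that $\Prob(\maxnorm{\mathbf R_{\mu,n}}>t_n)$ can be made arbitrarily small, uniformly in large $n$, by taking $C$ large. That is exactly the definition of the $O_p$ statement in \eqref{eq:Rmu_generic_bound}.

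The key computation bounds the two branches of the minimum in the exponent separately, using only $t_n \ge$ each summand. For the quadratic branch, $t_n\ge C\sqrt{\bar M_{2,n}}L_n\sqrt{\log p_n/n}$ gives
\[
\frac{n t_n^2}{L_n^2\bar M_{2,n}}\ \ge\ C^2\log p_n .
\]
For the linear branch, $t_n\ge C\bar M_{\infty,n}L_n\log p_n/n$ gives
\[
\frac{n t_n}{L_n\bar M_{\infty,n}}\ \ge\ C\log p_n .
\]
Hence
\[
n\min\{\cdot,\cdot\}\ \ge\ \min(C^2,C)\log p_n = C\log p_n \quad\text{for } C\ge 1,
\]
and Assumption~\ref{ass:cross_bernstein} yields
\[
\Prob\bigl(\maxnorm{\mathbf R_{\mu,n}}>t_n \mid \{\bm\mu_t\}\bigr)\ \le\ 4p_n^{2-c_3C}.
\]
Choosing $C>3/c_3$ makes this at most $4p_n^{-1}$. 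The bound is deterministic, so taking expectations over $\{\bm\mu_t\}$ (tower property) gives the same bound unconditionally.

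The main obstacle is that $p_n^{-1}\to0$ requires $p_n\to\infty$. If $p_n$ stays bounded, I would instead run the same argument at level $C(\log p_n+\log(1/\epsilon))$, i.e.\ replace $\log p_n$ by $\log(p_n\vee e)$ in $t_n$. Then
\[
\Prob\bigl(\maxnorm{\mathbf R_{\mu,n}}>t_n\bigr)\ \le\ 4p_n^2 e^{-c_3 C\log(p_n\vee e)},
\]
which becomes smaller than any given $\epsilon$ for $C$ large, uniformly in $n$. This is the standard convention that $\log p_n$ is bounded below by a constant. The condition $\log p_n=o(n)$ is not needed for the tail inversion itself. It only keeps the rate meaningful, since then the first term is the leading one when $\bar M_{\infty,n}^2\lesssim \bar M_{2,n}\, n/\log p_n$. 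I would remark on this but not rely on it.

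Finally, the "consequently" part follows by comparing with Assumption~\ref{ass:mean_remainder}. The bound just proved states $\maxnorm{\mathbf R_{\mu,n}}=O_p(r_{\mu,n})$ with $r_{\mu,n}$ as in \eqref{eq:rmu_generic_choice}, and this $r_{\mu,n}$ is deterministic given the mean sequence. It is therefore admissible in Assumption~\ref{ass:mean_remainder}, and so also in the rate $a_n$ of \eqref{eq:an_def}.
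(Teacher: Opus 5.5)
Your proposal is correct and follows essentially the same route as the paper: invert the conditional Bernstein bound of Assumption~\ref{ass:cross_bernstein} at the level $t_1+t_2$, where the paper's $t_1=A_1\sqrt{\bar M_{2,n}}L_n\sqrt{\log p_n/n}$ and $t_2=A_2\bar M_{\infty,n}L_n\log p_n/n$ carry constants large enough to absorb the $p_n^2$ factor, and then integrate out the conditioning on $\{\bm\mu_t\}$. Your single threshold $t_n$ dominates each summand and so makes both branches of the minimum at least $C\log p_n$, which spells out the combination step the paper only asserts; your bounded-$p_n$ remark is a harmless refinement and is only needed for $p_n=1$, since for $p_n\ge 2$ and $C\ge 2/c_3$ the bound $4p_n^{2-c_3C}\le 4\cdot 2^{2-c_3C}$ already gives the $O_p$ claim without requiring $p_n\to\infty$.
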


We next show how \(\bar M_{2,n}\) and \(\bar M_{\infty,n}\) can be bounded under two concrete mean classes.

\subsubsection{Piecewise-H\"older mean class (with possible jumps)}

\begin{assumption}[Piecewise-H\"older mean path]
\label{ass:mean_piecewise_holder}
There exist an integer \(J_n\ge 0\), a partition
\[
0=\eta_0 < \eta_1 < \cdots < \eta_{J_n} < \eta_{J_n+1}=1,
\]
a H\"older exponent \(\beta\in(0,1]\), and constants \(L_\mu,M_\mu>0\) such that for each coordinate \(r\in\{1,\ldots,p_n\}\),
\[
\mu_t^{(r)} = g_r(t/n), \qquad t=1,\ldots,n,
\]
where \(g_r\) may jump at \(\eta_1,\ldots,\eta_{J_n}\), satisfies
\[
\sup_{u\in[0,1]} |g_r(u)|\le M_\mu,
\]
and on each open interval \((\eta_j,\eta_{j+1})\),
\[
|g_r(u)-g_r(v)|\le L_\mu |u-v|^\beta,
\qquad \forall\, u,v\in(\eta_j,\eta_{j+1}).
\]
\end{assumption}

\begin{proposition}[Mean-bias and remainder under piecewise-H\"older means]
\label{prop:holder_mean_bounds}
Suppose Assumption~\ref{ass:mean_piecewise_holder} holds. Let
\[
C_{d,1}:=\sum_{j=0}^{m_n}|d_{n,j}|,
\qquad
C_{d,\beta}:=\sum_{j=1}^{m_n}|d_{n,j}|j^\beta.
\]
Then:
\begin{enumerate}[label=(\roman*)]
\item The difference-mean process satisfies
\begin{equation}
\bar M_{2,n}
\le
C\left\{
\left(\frac{h_n}{n}\right)^{2\beta} C_{d,\beta}^2
+
C_{d,1}^2\,M_\mu^2\,\frac{J_n m_n h_n}{n}
\right\},
\label{eq:Mbar2_holder}
\end{equation}
and
\begin{equation}
\bar M_{\infty,n}
\le
C\left\{
C_{d,\beta}\left(\frac{h_n}{n}\right)^\beta
+
C_{d,1}M_\mu
\right\}.
\label{eq:Mbarinf_holder}
\end{equation}

\item Consequently, by Proposition~\ref{prop:Bmu_generic},
\begin{equation}
\maxnorm{\mathbf B_{\mu,n}}
\le
C\ell_n\left\{
\left(\frac{h_n}{n}\right)^{2\beta} C_{d,\beta}^2
+
C_{d,1}^2 M_\mu^2 \frac{J_n m_n h_n}{n}
\right\}.
\label{eq:Bmu_holder}
\end{equation}

\item If Assumption~\ref{ass:cross_bernstein} also holds, then by Proposition~\ref{prop:Rmu_generic},
\begin{align}
\maxnorm{\mathbf R_{\mu,n}}
&=
O_p\!\Bigg(
\sqrt{\left\{
\left(\frac{h_n}{n}\right)^{2\beta} C_{d,\beta}^2
+
C_{d,1}^2 M_\mu^2 \frac{J_n m_n h_n}{n}
\right\}}
L_n\sqrt{\frac{\log p_n}{n}}
\notag\\
&\hspace{2.2cm}
+
\left\{
C_{d,\beta}\left(\frac{h_n}{n}\right)^\beta + C_{d,1}M_\mu
\right\}
\frac{L_n\log p_n}{n}
\Bigg).
\label{eq:Rmu_holder}
\end{align}
\end{enumerate}
Here \(C>0\) is a constant independent of \(n\) and \(p_n\).
\end{proposition}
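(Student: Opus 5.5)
The plan is to bound $\mathbf M_t$ coordinatewise and then average or maximize over $t$. Parts (ii) and (iii) will follow by substituting the resulting bounds into Proposition~\ref{prop:Bmu_generic} and Proposition~\ref{prop:Rmu_generic}.

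\emph{Step 1: a telescoped representation.} Fix a coordinate $r$ and a time $t\in\{m_nh_n+1,\ldots,n\}$. Since $\sum_j d_{n,j}=0$, we can subtract the $j=0$ term:
\[
M_t^{(r)}=\sum_{j=0}^{m_n} d_{n,j}\bigl\{g_r((t-jh_n)/n)-g_r(t/n)\bigr\}
=\sum_{j=1}^{m_n} d_{n,j}\bigl\{g_r((t-jh_n)/n)-g_r(t/n)\bigr\}.
\]

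\emph{Step 2: classify the time points.} Call $t$ \emph{clean} if the window $[(t-m_nh_n)/n,\,t/n]$ contains no jump point $\eta_1,\ldots,\eta_{J_n}$.

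For clean $t$, every increment lies within one Hölder piece. At window endpoints we interpret $g_r$ by one-sided limits, or take closed intervals. Each increment is then at most $L_\mu(jh_n/n)^\beta$, which gives
\[
|M_t^{(r)}|\le L_\mu C_{d,\beta}(h_n/n)^\beta .
\]
This bound is uniform in $r$.

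For non-clean $t$ we use the crude bound $|M_t^{(r)}|\le \sum_j|d_{n,j}|\,|g_r|\le C_{d,1}M_\mu$. We could instead write $2M_\mu\sum_{j\ge1}|d_{n,j}|$, but that is also at most $2C_{d,1}M_\mu$.

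Each jump $\eta_i$ spoils only those $t$ with $t/n\in[\eta_i,\eta_i+m_nh_n/n]$. That is at most $m_nh_n+1\le 2m_nh_n$ indices, so there are at most $2J_nm_nh_n$ non-clean times. If $m_n h_n=0$, the difference statistic is degenerate, and we may assume $m_n\ge1$, $h_n\ge1$.

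\emph{Step 3: assemble $\bar M_{2,n}$ and $\bar M_{\infty,n}$.} Taking the max over $r$ and squaring, the clean times contribute at most $L_\mu^2C_{d,\beta}^2(h_n/n)^{2\beta}$ to the average. The non-clean times contribute at most
\[
\tfrac1n\cdot 2J_nm_nh_n\cdot C_{d,1}^2M_\mu^2 .
\]
This yields \eqref{eq:Mbar2_holder} with $C=\max(L_\mu^2,2)$.

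For the maximum over $t$, take the larger of the two cases. This gives \eqref{eq:Mbarinf_holder}, where the first term is needed only when $J_n=0$; the sum bound covers both cases.

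\emph{Step 4: the consequences.} Part (ii) follows by inserting \eqref{eq:Mbar2_holder} into Proposition~\ref{prop:Bmu_generic}, absorbing $(2\ell_n-1)\norm{K}_\infty\le 2\norm{K}_\infty\ell_n$ into $C$. Part (iii) is immediate from \eqref{eq:Rmu_generic_bound}, because the right-hand side there is monotone in $\bar M_{2,n}$ and $\bar M_{\infty,n}$. These means are deterministic, so $\bar M_{2,n},\bar M_{\infty,n}$ are deterministic and the substitution inside $O_p$ is legitimate.

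\emph{Main obstacle.} The only delicate point is the counting and boundary bookkeeping in Step 2. Two issues need care. First, the values of $g_r$ exactly at the jump points must be handled, which is done via the closed-interval or one-sided-limit convention. Second, the count of contaminated windows per jump must be shown to be $O(m_nh_n)$ uniformly in $r$. This uniformity holds because the partition $\{\eta_i\}$ is common to all coordinates. That common partition is what makes the max over $r$ inside the sum over $t$ cost nothing.
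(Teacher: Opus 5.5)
Your proposal is correct and follows essentially the same route as the paper. Both arguments use the same steps:
\begin{itemize}
\item the zero-sum rewriting $\mathbf M_t=\sum_{j\ge1}d_{n,j}(\bm\mu_{t-jh_n}-\bm\mu_t)$;
\item a split of the times into windows that do or do not straddle one of the common jump points, with at most $m_nh_n+1$ bad indices per jump;
\item the H\"older bound on clean windows and the crude $C_{d,1}M_\mu$ bound on the rest;
\item direct substitution into Propositions~\ref{prop:Bmu_generic} and \ref{prop:Rmu_generic}.
\end{itemize}
Your extra care about the values of $g_r$ at the partition points and about $m_nh_n\ge1$ covers minor bookkeeping that the paper passes over silently.
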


\subsubsection{Bounded-variation mean class (no changepoint parametrization)}

\begin{assumption}[Coordinatewise discrete bounded variation]
\label{ass:mean_bv}
There exists \(M_\mu>0\) and a sequence \(V_{\mu,n}\ge 0\) such that
\[
\sup_{1\le t\le n}\norm{\bm\mu_t}_{\max}\le M_\mu
\]
and
\begin{equation*}
\max_{1\le r\le p_n}\sum_{t=2}^{n}\left|\mu_t^{(r)}-\mu_{t-1}^{(r)}\right|
\le V_{\mu,n}.
\end{equation*}
\end{assumption}

\begin{proposition}[Mean-bias and remainder under bounded-variation means]
\label{prop:bv_mean_bounds}
Suppose Assumption~\ref{ass:mean_bv} holds. Let \(C_{d,1}:=\sum_{j=0}^{m_n}|d_{n,j}|\). Then
\begin{equation}
\bar M_{2,n}
\le
C\,C_{d,1}^2\,(m_n h_n)^2\,\frac{V_{\mu,n}^2}{n},
\label{eq:Mbar2_bv}
\end{equation}
and
\begin{equation}
\bar M_{\infty,n}
\le 2 C_{d,1} M_\mu.
\label{eq:Mbarinf_bv}
\end{equation}
Consequently,
\begin{equation}
\maxnorm{\mathbf B_{\mu,n}}
\le
C\,\ell_n\,C_{d,1}^2\,(m_n h_n)^2\,\frac{V_{\mu,n}^2}{n},
\label{eq:Bmu_bv}
\end{equation}
and, if Assumption~\ref{ass:cross_bernstein} holds,
\begin{align}
\maxnorm{\mathbf R_{\mu,n}}
=
O_p\!\left(
C_{d,1}m_n h_n\,V_{\mu,n}\,\frac{L_n\sqrt{\log p_n}}{n}
+
C_{d,1}M_\mu \frac{L_n\log p_n}{n}
\right).
\label{eq:Rmu_bv}
\end{align}
\end{proposition}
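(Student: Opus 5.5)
We prove Proposition~\ref{prop:bv_mean_bounds} by bounding $\norm{\mathbf M_t}_{\max}$ pointwise through the mean increments and then averaging over $t$. The remaining claims follow by substitution into Propositions~\ref{prop:Bmu_generic} and~\ref{prop:Rmu_generic}.

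\textbf{Step 1: telescoping representation.} Since $\sum_j d_{n,j}=0$, for each coordinate $r$ we can write
\[
M_t^{(r)}=\sum_{j=0}^{m_n} d_{n,j}\bigl(\mu^{(r)}_{t-jh_n}-\mu^{(r)}_{t}\bigr).
\]
Telescoping each difference gives
\[
\bigl|\mu^{(r)}_{t-jh_n}-\mu^{(r)}_t\bigr|\le \sum_{u=t-m_nh_n+1}^{t}\bigl|\mu^{(r)}_u-\mu^{(r)}_{u-1}\bigr| =: W^{(r)}_t .
\]
Hence $|M_t^{(r)}|\le C_{d,1}W_t^{(r)}$. The sup bound \eqref{eq:Mbarinf_bv} is immediate from the same centering: $|\mu_{t-jh_n}^{(r)}-\mu_t^{(r)}|\le 2M_\mu$, so $|M_t^{(r)}|\le 2C_{d,1}M_\mu$.

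\textbf{Step 2: averaging, the main obstacle.} The window sum $W_t^{(r)}$ covers $m_nh_n$ consecutive increments. Each increment $\delta_u^{(r)}=|\mu_u^{(r)}-\mu_{u-1}^{(r)}|$ therefore appears in at most $m_nh_n$ windows.

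There is a subtlety. $\bar M_{2,n}$ takes the max over $r$ inside the sum over $t$, while the variation bound holds coordinatewise. I would first use
\[
\norm{\mathbf M_t}_{\max}^2 \le C_{d,1}^2\max_r (W_t^{(r)})^2 .
\]
Next I would apply $(W_t^{(r)})^2\le W_t^{(r)}\cdot \sup_t W_t^{(r)}\le (W_t^{(r)})\,V_{\mu,n}$. A cleaner route is the crude bound $(W^{(r)}_t)^2\le (\sum_u \delta_u^{(r)})^2\le V_{\mu,n}^2$, valid for all $t$ and $r$. Summing over at most $n$ values of $t$ and dividing by $n$ then gives $\bar M_{2,n}\le C_{d,1}^2V_{\mu,n}^2$.

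This crude bound misses the stated $(m_nh_n)^2/n$ factor. To recover that factor, I would instead aim for
\[
\sum_t (W_t^{(r)})^2\le \Bigl(\sum_t W_t^{(r)}\Bigr)^2\le (m_nh_n V_{\mu,n})^2,
\]
using nonnegativity and the multiplicity count. This yields $n^{-1}\sum_t\max_r (W_t^{(r)})^2$ provided the max over $r$ can be exchanged with the sum. That exchange is the step I expect to be hardest. I would try to handle it by allowing the constant $C$ to absorb the issue. If that fails, I would interpret the max-norm coordinatewise, since the bias and remainder bounds only ever enter entrywise through Cauchy--Schwarz in $t$ for fixed $(r,s)$. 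In that reading, what is actually needed is $\max_r n^{-1}\sum_t (M_t^{(r)})^2$, which the argument above bounds by $C_{d,1}^2(m_nh_n)^2V_{\mu,n}^2/n$. I would then note that Proposition~\ref{prop:Bmu_generic} holds with this coordinatewise quantity by rerunning its proof:
\[
\Bigl|n^{-1}\sum_t M_t^{(r)}M_{t-k}^{(s)}\Bigr|\le \Bigl(\max_r n^{-1}\sum_t (M_t^{(r)})^2\Bigr)
\]
by Cauchy--Schwarz, and then summing over $|k|<\ell_n$.

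\textbf{Step 3: consequences.} Inserting the Step~2 bound into Proposition~\ref{prop:Bmu_generic} gives \eqref{eq:Bmu_bv}. Inserting both bounds into Proposition~\ref{prop:Rmu_generic} gives the two rates:
\[
\sqrt{\bar M_{2,n}}\,L_n\sqrt{\log p_n/n}\lesssim C_{d,1}m_nh_nV_{\mu,n}L_n\sqrt{\log p_n}/n,
\qquad
\bar M_{\infty,n}L_n\log p_n/n\lesssim C_{d,1}M_\mu L_n\log p_n/n.
\]
These are exactly the two terms in \eqref{eq:Rmu_bv}.
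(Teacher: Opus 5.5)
\emph{This is a genuine gap.} You located the weak point correctly, but neither of your remedies proves the statement as written, and no remedy can. With $\bar M_{2,n}$ as in \eqref{eq:Mbar_def}, the maximum over coordinates sits \emph{inside} the sum over $t$, and for that quantity \eqref{eq:Mbar2_bv} is false. So letting $C$ ``absorb'' the exchange of maximum and sum is impossible.

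Here is a counterexample. Take $m_n=1$ and $\mathbf d_n=(2^{-1/2},-2^{-1/2})^\top$. For $r\le\lceil n/h_n\rceil\le p_n$ set $\mu_u^{(r)}=\mathbf 1\{u\ge\tau_r\}$ with $\tau_r=1+(r-1)h_n$, and set the other coordinates to zero. Then:
\begin{itemize}
\item Assumption~\ref{ass:mean_bv} holds with $M_\mu=V_{\mu,n}=1$.
\item $M_t^{(r)}=2^{-1/2}\mathbf 1\{\tau_r\le t<\tau_r+h_n\}$, and these windows cover every $t\in\{h_n+1,\ldots,n\}$.
\item Hence $\norm{\mathbf M_t}_{\max}=2^{-1/2}$ for all such $t$, and $\bar M_{2,n}=(n-h_n)/(2n)\to 1/2$.
\item The right side of \eqref{eq:Mbar2_bv} is $2Ch_n^2/n$, which tends to $0$ whenever $h_n=o(\sqrt n)$ (e.g.\ $h_n\asymp n^{1/4}$ and $p_n\ge n^{3/4}$).
\end{itemize}
Your ``crude'' bound $\bar M_{2,n}\le C_{d,1}^2V_{\mu,n}^2$ is therefore the correct order for this quantity. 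The paper's proof does not avoid the problem. It asserts $\sum_u\max_r|\mu_u^{(r)}-\mu_{u-1}^{(r)}|\le\max_r\sum_u|\mu_u^{(r)}-\mu_{u-1}^{(r)}|\le V_{\mu,n}$, which is the reverse of the valid inequality. Apart from that step it uses the same window-and-multiplicity counting as your Step~2.

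Your coordinatewise fallback is the right repair, but be explicit about what it proves. Define $\widetilde M_{2,n}:=\max_r n^{-1}\sum_{t=m_nh_n+1}^{n}(M_t^{(r)})^2$.
\begin{itemize}
\item Your Step~2 rigorously gives $\widetilde M_{2,n}\le C_{d,1}^2(m_nh_n)^2V_{\mu,n}^2/n$. You can even get $C_{d,1}^2m_nh_nV_{\mu,n}^2/n$ via $\sum_t(W_t^{(r)})^2\le\max_tW_t^{(r)}\sum_tW_t^{(r)}$, as you began to write.
\item Cauchy--Schwarz in $t$ then gives $\maxnorm{\mathbf B_{\mu,n}}\le(2\ell_n-1)\norm{K}_\infty\widetilde M_{2,n}$.
\item So \eqref{eq:Mbarinf_bv} and \eqref{eq:Bmu_bv} are genuinely established.
\item \eqref{eq:Mbar2_bv} holds only with $\widetilde M_{2,n}$ in place of $\bar M_{2,n}$.
\end{itemize}

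Your Step~3 for \eqref{eq:Rmu_bv}, however, is not valid as written. Assumption~\ref{ass:cross_bernstein} is a hypothesis stated in terms of $\bar M_{2,n}$. Feeding the valid bound $\bar M_{2,n}\le C_{d,1}^2V_{\mu,n}^2$ (attained in the example) into Proposition~\ref{prop:Rmu_generic} gives only $C_{d,1}V_{\mu,n}L_n\sqrt{\log p_n/n}$ for the first term. That is a factor $\sqrt n/(m_nh_n)$ larger than the first term of \eqref{eq:Rmu_bv}. To obtain \eqref{eq:Rmu_bv} you must restate Assumption~\ref{ass:cross_bernstein} with $\widetilde M_{2,n}$. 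That stronger version is in fact what the appendix verifies. The proof of Proposition~\ref{prop:Rmu_unified_direct} uses $\bar M_{2,n}$ only through $\|\mathbf m_r\|_2^2\le n\bar M_{2,n}$, and $\|\mathbf m_r\|_2^2\le n\widetilde M_{2,n}$ holds equally well.
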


\begin{remark}[Plug-in rates for \(a_n\)]
\label{rem:an_plug_in}
Theorem~\ref{thm:maxnorm_rate} remains unchanged, but Propositions~\ref{prop:holder_mean_bounds} and \ref{prop:bv_mean_bounds} provide explicit, directly usable upper bounds for the mean-related components in \(a_n\). For example, under Assumption~\ref{ass:mean_piecewise_holder}, one may substitute \eqref{eq:Bmu_holder} and \eqref{eq:Rmu_holder} into \eqref{eq:an_def}.
\end{remark}

\begin{corollary}[Explicit \(a_n\) under piecewise-H\"older means]
\label{cor:an_holder_explicit}
Suppose the conditions of Theorem~\ref{thm:maxnorm_rate} hold, and Assumptions~\ref{ass:cross_bernstein} and \ref{ass:mean_piecewise_holder} hold. Then one may choose
\[
a_n
\asymp
\ell_n^{-q_0}
+\frac{(m_n+1)h_n}{n}
+\sqrt{\frac{L_n\log p_n}{n}}
+\frac{L_n\log p_n}{n}
+\ell_n\left\{
\left(\frac{h_n}{n}\right)^{2\beta} C_{d,\beta}^2
+
C_{d,1}^2 M_\mu^2 \frac{J_n m_n h_n}{n}
\right\}
\]
\[
\quad
+\sqrt{\left\{
\left(\frac{h_n}{n}\right)^{2\beta} C_{d,\beta}^2
+
C_{d,1}^2 M_\mu^2 \frac{J_n m_n h_n}{n}
\right\}}
L_n\sqrt{\frac{\log p_n}{n}}
+
\left\{
C_{d,\beta}\left(\frac{h_n}{n}\right)^\beta + C_{d,1}M_\mu
\right\}
\frac{L_n\log p_n}{n}.
\]
In particular, if the mean-induced terms are of smaller order than the oracle terms, the rate reduces to the bandwidth-balanced form in Corollary~\ref{cor:balanced_rate}.
\end{corollary}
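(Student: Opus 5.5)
The corollary is a substitution argument and needs no new probabilistic input. Theorem~\ref{thm:maxnorm_rate} already gives $\maxnorm{\Vdb-\Vtrue}=O_p(a_n)$ with $a_n$ as in \eqref{eq:an_def}. It therefore suffices to show that the two mean-related summands, $\maxnorm{\mathbf B_{\mu,n}}$ and $r_{\mu,n}$, are bounded by the displayed mean terms, and then to rewrite the oracle part of $a_n$ in terms of $L_n=\ell_n+m_nh_n$.

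The first step handles the deterministic bias. By Proposition~\ref{prop:holder_mean_bounds}(ii), which combines Proposition~\ref{prop:Bmu_generic} with \eqref{eq:Mbar2_holder}, $\maxnorm{\mathbf B_{\mu,n}}$ is at most $C\ell_n\{(h_n/n)^{2\beta}C_{d,\beta}^2+C_{d,1}^2M_\mu^2J_nm_nh_n/n\}$. This is exactly the fifth summand of the displayed $a_n$, up to a constant.

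The second step handles the stochastic remainder. Under Assumption~\ref{ass:cross_bernstein}, Proposition~\ref{prop:Rmu_generic} says we may take $r_{\mu,n}$ as in \eqref{eq:rmu_generic_choice}. We then insert the upper bounds \eqref{eq:Mbar2_holder} and \eqref{eq:Mbarinf_holder} for $\bar M_{2,n}$ and $\bar M_{\infty,n}$, as done in Proposition~\ref{prop:holder_mean_bounds}(iii), using monotonicity of $x\mapsto\sqrt{x}$. This yields the last two summands.

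There is one point to check: Assumption~\ref{ass:mean_remainder} only requires a deterministic sequence $r_{\mu,n}$ with $\maxnorm{\mathbf R_{\mu,n}}=O_p(r_{\mu,n})$. Any larger deterministic sequence is also admissible, so the upper bound from \eqref{eq:Rmu_holder} qualifies. If the mean path is random, one conditions on $\{\bm\mu_t\}$ and uses that the bounds of Assumption~\ref{ass:mean_piecewise_holder} are nonrandom, so the conditional tail bound integrates to an unconditional one.

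The third step rewrites the oracle part. Since $\ell_n\le L_n$ and $m_nh_n\le L_n$, the oracle terms of \eqref{eq:an_def} already equal $\sqrt{L_n\log p_n/n}+L_n\log p_n/n$. Together with $\ell_n^{-q_0}+(m_n+1)h_n/n$, they are unchanged. Replacing each summand of \eqref{eq:an_def} by something at least as large preserves $\maxnorm{\Vdb-\Vtrue}=O_p(\cdot)$, which gives the display. Here ``$\asymp$'' should be read as choosing $a_n$ equal to this upper bound.

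The final claim follows directly. When the mean-induced terms are of smaller order than the oracle terms, they drop out. Under $m_nh_n\lesssim\ell_n$ we have $L_n\asymp\ell_n$, so we recover the setting of Corollary~\ref{cor:balanced_rate}.

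The only step needing care is the admissibility argument in the second step: that the upper bound in \eqref{eq:Rmu_holder} legitimately plays the role of $r_{\mu,n}$ in Theorem~\ref{thm:maxnorm_rate}, including the conditioning issue. Everything else is bookkeeping.
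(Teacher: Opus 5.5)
Your proposal is correct and matches the paper's argument, which is just the plug-in substitution described in the Remark on plug-in rates: take $r_{\mu,n}$ to be the Hölder bound \eqref{eq:Rmu_holder} (admissible because Assumption~\ref{ass:mean_remainder} only asks for some deterministic dominating sequence), bound $\maxnorm{\mathbf B_{\mu,n}}$ by \eqref{eq:Bmu_holder}, and insert both into \eqref{eq:an_def}. Your remarks on admissibility and conditioning add care the paper omits; one small correction is that the oracle terms equal $\sqrt{L_n\log p_n/n}+L_n\log p_n/n$ simply because $L_n:=\ell_n+m_nh_n$ by definition, not because of the inequalities $\ell_n\le L_n$ and $m_nh_n\le L_n$.
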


\section{Simulation} \label{sec:simulation}

We conduct Monte Carlo experiments to assess finite-sample performance of high-dimensional long-run covariance (LRC) matrix estimation under temporal dependence and cross-sectional dependence. For each replication, we generate a $p$-variate time series $\{\bm{X}_t\}_{t=1}^n$ with different sample sizes $n=200,400,800,1600$ and dimensions $p=300,500$. The temporal dependence follows an AR(1) recursion
\[
\bm Z_t=\phi \bm Z_{t-1}+\bm{\varepsilon}_t,\qquad t=2,\ldots,n,
\]
with $\phi=0.5$, and we discard an initial burn-in of length 200 to mitigate initialization effects. Throughout, the innovations $\{\bm{\varepsilon}_t\}$ are i.i.d.\ Gaussian with mean zero and covariance matrix ${\bm\Sigma}_{\bm\varepsilon}$, whose structure differs across the following two models.

\paragraph{Model I (tridiagonal cross-sectional dependence).}
We impose local cross-sectional dependence by taking ${\bm\Sigma}_{\bm\varepsilon}$ to be tridiagonal. Specifically, let $a=0.5$ and define ${\bm\Sigma}_{\bm\varepsilon}$ by
\[
({\bm\Sigma}_{\bm\varepsilon})_{11}=1,\qquad ({\bm\Sigma}_{\bm\varepsilon})_{ii}=1+a^2~(i\ge 2),\qquad
({\bm\Sigma}_{\bm\varepsilon})_{i,i-1}=({\bm\Sigma}_{\bm\varepsilon})_{i-1,i}=a~(i\ge 2),
\]
and $({\bm\Sigma}_{\bm\varepsilon})_{ij}=0$ otherwise.

\paragraph{Model II (Toeplitz cross-sectional dependence).}
We consider a bandable Toeplitz covariance with exponential decay:
\[
({\bm\Sigma}_{\bm\varepsilon})_{ij}=\rho^{|i-j|},\qquad \rho=0.7,\qquad 1\le i,j\le p .
\]

\paragraph{Model III (Permuted Block Diagonal Structure).} 
We generate data from a multivariate model with covariance matrix
 \[
 \mathbf{\Sigma}=\mathbf{\Pi}\mathbf{\Sigma}_0\mathbf{\Pi}^\top,
 \qquad
 \mathbf{\Sigma}_0=\operatorname{diag}(\mathbf B,\ldots,\mathbf B),
 \qquad
 \mathbf B=
 \begin{pmatrix}
 1 & \rho\\
 \rho & 1
 \end{pmatrix},
 \]
where $\mathbf{\Pi}$ is a permutation matrix. This covariance matrix is highly sparse, since each row has only one nonzero off-diagonal entry, but it is not banded after permutation.

To examine robustness to nonconstant means, we add a deterministic mean component to the first $m=20$ coordinates of the observed series. Let $t=i/n$ for $i=1,\ldots,n$ and define
\[
\mu(t)=\exp(t)+\mathbf{1}(t>0.3)+2\mathbf{1}(t>0.6)+4\mathbf{1}(t>0.8).
\]
We set $\mathbb{E}(X_{t,j})=\mu(t)$ for $j=1,\ldots,m$ and $\mathbb{E}(X_{t,j})=0$ for $j>m$. The target LRC matrix is defined for the centered stationary component and, under both models, equals
\[
\mathbf{V} \;=\; \sum_{k=-\infty}^{\infty}\bm{\Gamma}(k)\;=\;\frac{{\bm\Sigma}_{\bm\varepsilon}}{(1-\phi)^2},
\]
where ${\bm\Gamma}(k)=\mathrm{Cov}(\bm{X}_t,\bm{X}_{t-k})$.

We compare six estimators of $\mathbf{V}$.

\begin{itemize}
\item[1.] \textbf{HAC.} The standard heteroskedasticity and autocorrelation consistent estimator of \citet{newey1987}. For a fair comparison, we use the same kernel as in the DB estimator, namely
\[
K_q(u)=(1-|u|^q)^+,\qquad q=2.
\]

\item[2.] \textbf{MAC.} The mean-structure and autocorrelation consistent covariance matrix estimator proposed by \citet{chan2022mac}.

\item[3.] \textbf{DB.} The difference-based estimator proposed by \citet{chan2022diff}. In our implementation, we set $m=3$ and use
\[
\bm d=(0.1942,\,0.2809,\,0.3832,\,-0.8582)^\top.
\]
We choose $h_n=2\ell_n$ and $\ell_n=\min\left\{\left\lfloor \left(\frac{n}{\log p}\right)^{1/4}\right\rfloor,\ \left\lfloor \frac{n-10}{28}\right\rfloor\right\}.
$

\item[4.] \textbf{Hard.} The hard-thresholded DB estimator defined in \eqref{eq:hard_threshold}.

\item[5.] \textbf{Soft.} The soft-thresholded DB estimator defined in \eqref{eq:soft_threshold}.

\item[6.] \textbf{Taper.} The tapered DB estimator defined in \eqref{eq:taper_est}, where the tapering weights are given by
\[
W^{(k)}_{ij}=
\begin{cases}
1, & |i-j|\le k/2,\\
2-2|i-j|/k, & k/2<|i-j|<k,\\
0, & |i-j|\ge k.
\end{cases}
\]
\end{itemize}

For each replication and each method, we compute the estimation error $\mathbf{E}=\widehat{\mathbf{V}}-\mathbf{V}$ and report four matrix norms:
\[
\|\mathbf{E}\|_{F},\qquad \|\mathbf{E}\|_{1},\qquad \|\mathbf{E}\|_{\max},\qquad \|\mathbf{E}\|_{2},
\]
where $\|\cdot\|_{F}$ is the Frobenius norm, $\|\cdot\|_{1}$ is the induced matrix $1$-norm, $\|\cdot\|_{\max}=\max_{i,j}|E_{ij}|$ is the entrywise maximum norm, and $\|\cdot\|_{2}$ is the spectral norm. We also report their relative errors, obtained by dividing each error norm by the corresponding norm of $\mathbf{V}$ (with a small numerical safeguard in the denominator). Results are summarized as Monte Carlo averages over 1000 replications.

Several conclusions emerge clearly from Tables~\ref{tab:lrc_errors}--\ref{tab:lrc_errors_model35}. First, the unregularized HAC estimator performs very poorly in all six settings. Its absolute and relative errors remain extremely large across all matrix norms, and the improvement with increasing sample size is negligible. This instability is especially pronounced under the induced $\ell_1$ and spectral norms, indicating that direct high-dimensional HAC estimation is not reliable in the present regime. In the settings where the MAC estimator is reported, it improves substantially over HAC, but it is still uniformly less accurate than the difference-based procedures, particularly after structural regularization. By contrast, the raw DB estimator already achieves a dramatic reduction in error, and its accuracy improves steadily as $n$ increases, confirming the practical advantage of differencing for removing the nonstationary mean component before long-run covariance estimation.

Second, the effect of regularization depends in an informative way on the underlying cross-sectional structure. For Models~I and II, where the covariance matrix is banded or bandable under the observed ordering, the regularized DB estimators uniformly improve upon the raw DB estimator, and the tapered estimator is typically the most accurate overall. Its advantage is especially clear under the Frobenius, matrix $\ell_1$, and spectral norms, and this pattern is stable for both $p=300$ and $p=500$. The gains from regularization are therefore not only substantial but also robust to increasing dimensionality. In contrast, Model~III is sparse but explicitly non-banded after permutation. In this setting, the tapering estimator loses much of its advantage, while hard- and soft-thresholding become more competitive and often outperform tapering under the Frobenius norm, with soft-thresholding also performing very well under the spectral norm in several cases. This contrast is consistent with the structural design of the three regularization schemes: tapering is most effective when the variable ordering carries meaningful locality information, whereas thresholding is better adapted to general sparsity without bandedness. Finally, across all six tables, the max-norm errors of the regularized estimators are often quite close to one another, suggesting that the main benefit of regularization lies in improving global matrix recovery rather than merely reducing the largest entrywise deviation.

\begin{table}[htbp]
\centering
\caption{Estimation errors of different methods under Model I with $p=300$.}
\label{tab:lrc_errors}
\resizebox{\textwidth}{!}{%
\begin{tabular}{lrrrrrrrr}
\toprule
 Method & $\|\bE\|_{F}$ & $\|\bE\|_{1}$ & $\|\bE\|_{\max}$ & $\|\bE\|_{2}$  & $\frac{\|\bE\|_{F}}{\|\bV\|_{F}}$ & $\frac{\|\bE\|_{1}}{\|\bV\|_{1}}$ & $\frac{\|\bE\|_{\max}}{\|\bV\|_{\max}}$ & $\frac{\|\bE\|_{2}}{\|\bV\|_{2}}$ \\
\midrule
\multicolumn{9}{c}{$(n,p)=(200,300)$}\\ \hline
 HAC   & 4106.84 & 6867.44 & 235.29 & 4088.23 & 41.31 & 763.05 & 47.06 & 454.26 \\
 MAC & 298.80 & 395.14 & 7.05 & 113.76 & 3.01 & 43.90 & 1.41 & 12.64 \\
 DB    &  114.23 &  130.85 &   3.47 &   33.43 &  1.15 &  14.54 &  0.69 &   3.71 \\
 Hard  &   61.08 &   21.90 &   3.47 &   10.77 &  0.61 &   2.43 &  0.69 &   1.20 \\
 Soft  &   60.16 &   16.83 &   3.47 &    7.38 &  0.61 &   1.87 &  0.69 &   0.82 \\
 Taper &   47.54 &    6.44 &   3.47 &    5.56 &  0.48 &   0.72 &  0.69 &   0.62 \\
\midrule
\multicolumn{9}{c}{$(n,p)=(400,300)$}\\ \hline
 HAC   & 4269.37 & 6320.69 & 237.59 & 4258.68 & 42.95 & 702.30 & 47.52 & 473.20 \\
 MAC & 233.92 & 280.08 & 4.94 & 74.85 & 2.35 & 31.12 & 0.99 & 8.32 \\
 DB    &   86.31 &   80.67 &   3.13 &   16.57 &  0.87 &   8.96 &  0.63 &   1.84 \\
 Hard  &   49.90 &    7.90 &   3.13 &    6.19 &  0.50 &   0.88 &  0.63 &   0.69 \\
 Soft  &   53.45 &   11.65 &   3.13 &    6.05 &  0.54 &   1.29 &  0.63 &   0.67 \\
 Taper &   45.44 &    5.83 &   3.13 &    5.07 &  0.46 &   0.65 &  0.63 &   0.56 \\
\midrule
\multicolumn{9}{c}{$(n,p)=(800,300)$}\\ \hline
 HAC   & 4350.53 & 5825.08 & 235.43 & 4344.89 & 43.76 & 647.23 & 47.09 & 482.78 \\
 MAC & 179.73 & 199.80 & 3.62 & 49.01 & 1.81 & 22.20 & 0.72 & 5.45 \\
 DB    &   83.31 &   82.84 &   2.45 &   17.53 &  0.84 &   9.20 &  0.49 &   1.95 \\
 Hard  &   33.81 &   10.17 &   2.45 &    4.96 &  0.34 &   1.13 &  0.49 &   0.55 \\
 Soft  &   41.38 &    6.51 &   2.45 &    5.05 &  0.42 &   0.72 &  0.49 &   0.56 \\
 Taper &   30.99 &    4.56 &   2.45 &    3.87 &  0.31 &   0.51 &  0.49 &   0.43 \\
\midrule
\multicolumn{9}{c}{$(n,p)=(1600,300)$}\\ \hline
 HAC   & 4378.63 & 5425.14 & 232.24 & 4375.73 & 44.04 & 602.79 & 46.45 & 486.20 \\
 MAC & 137.11 & 142.28 & 2.47 & 32.32 & 1.38 & 15.81 & 0.49 & 3.59 \\
 DB    &   72.96 &   71.94 &   1.89 &   15.02 &  0.73 &   7.99 &  0.38 &   1.67 \\
 Hard  &   27.09 &   11.41 &   1.89 &    5.21 &  0.27 &   1.27 &  0.38 &   0.58 \\
 Soft  &   30.45 &    8.61 &   1.89 &    4.01 &  0.31 &   0.96 &  0.38 &   0.45 \\
 Taper &   21.98 &    3.57 &   1.89 &    2.97 &  0.22 &   0.40 &  0.38 &   0.33 \\
\bottomrule
\end{tabular}%
}
\end{table}

\begin{table}[htbp]
\centering
\caption{Estimation errors of different methods under Model II with $p=300$.}
\label{tab:lrc_errors_model2}
\resizebox{\textwidth}{!}{%
\begin{tabular}{lrrrrrrrr}
\toprule
Method & $\|\bE\|_{F}$ & $\|\bE\|_{1}$ & $\|\bE\|_{\max}$ & $\|\bE\|_{2}$  & $\frac{\|\bE\|_{F}}{\|\bV\|_{F}}$ & $\frac{\|\bE\|_{1}}{\|\bV\|_{1}}$ & $\frac{\|\bE\|_{\max}}{\|\bV\|_{\max}}$ & $\frac{\|\bE\|_{2}}{\|\bV\|_{2}}$ \\
\midrule
\multicolumn{9}{c}{$(n,p)=(200,300)$}\\ \hline
HAC   & 4054.94 & 6474.14 & 226.25 & 4043.14 & 34.32 & 285.62 & 56.56 & 178.52 \\
MAC & 241.73 & 334.23 & 5.88 & 106.17 & 2.05 & 14.75 & 1.47 & 4.69 \\
DB    &  100.44 &  108.19 &   2.78 &   28.82 &  0.85 &   4.77 &  0.69 &   1.27 \\
Hard  &   75.79 &   26.32 &   2.83 &   17.90 &  0.64 &   1.16 &  0.71 &   0.79 \\
Soft  &   78.71 &   27.24 &   2.78 &   17.91 &  0.67 &   1.20 &  0.70 &   0.79 \\
Taper &   59.71 &   17.73 &   2.78 &   14.31 &  0.51 &   0.78 &  0.69 &   0.63 \\
\midrule
\multicolumn{9}{c}{$(n,p)=(400,300)$}\\ \hline
HAC   & 4248.96 & 6027.40 & 230.56 & 4242.28 & 35.96 & 265.91 & 57.64 & 187.31 \\
MAC & 189.48 & 237.01 & 4.07 & 72.58 & 1.60 & 10.46 & 1.02 & 3.20 \\
DB    &   79.35 &   67.94 &   2.51 &   17.12 &  0.67 &   3.00 &  0.63 &   0.76 \\
Hard  &   64.03 &   18.88 &   2.51 &   15.40 &  0.54 &   0.83 &  0.63 &   0.68 \\
Soft  &   69.15 &   22.26 &   2.51 &   16.01 &  0.59 &   0.98 &  0.63 &   0.71 \\
Taper &   55.53 &   15.97 &   2.51 &   12.93 &  0.47 &   0.70 &  0.63 &   0.57 \\
\midrule
\multicolumn{9}{c}{$(n,p)=(800,300)$}\\ \hline
HAC   & 4341.36 & 5634.79 & 230.95 & 4337.80 & 36.74 & 248.59 & 57.74 & 191.53 \\
MAC & 146.19 & 170.41 & 2.90 & 48.77 & 1.24 & 7.52 & 0.73 & 2.15 \\
DB    &   71.74 &   69.05 &   1.95 &   16.10 &  0.61 &   3.05 &  0.49 &   0.71 \\
Hard  &   46.80 &   16.35 &   1.96 &   11.97 &  0.40 &   0.72 &  0.49 &   0.53 \\
Soft  &   56.43 &   17.77 &   1.97 &   14.03 &  0.48 &   0.78 &  0.49 &   0.62 \\
Taper &   39.26 &   13.12 &   1.95 &   10.04 &  0.33 &   0.58 &  0.49 &   0.44 \\
\midrule
\multicolumn{9}{c}{$(n,p)=(1600,300)$}\\ \hline
HAC   & 4372.59 & 5291.68 & 229.19 & 4370.74 & 37.00 & 233.46 & 57.30 & 192.99 \\
MAC & 112.72 & 123.15 & 2.01 & 33.86 & 0.95 & 5.43 & 0.50 & 1.50 \\
DB    &   61.33 &   60.52 &   1.51 &   14.43 &  0.52 &   2.67 &  0.38 &   0.64 \\
Hard  &   35.56 &   15.01 &   1.51 &    9.27 &  0.30 &   0.66 &  0.38 &   0.41 \\
Soft  &   42.28 &   17.35 &   1.52 &   11.02 &  0.36 &   0.77 &  0.38 &   0.49 \\
Taper &   28.73 &   10.70 &   1.51 &    7.69 &  0.24 &   0.47 &  0.38 &   0.34 \\
\bottomrule
\end{tabular}%
}
\end{table}

\begin{table}[htbp]
\centering
\caption{Estimation errors of different methods under Model III with $p=300$.}
\label{tab:lrc_errors_model33}
\resizebox{\textwidth}{!}{%
\begin{tabular}{lrrrrrrrr}
\toprule
 Method & $\|\bE\|_{F}$ & $\|\bE\|_{1}$ & $\|\bE\|_{\max}$ & $\|\bE\|_{2}$ &  $\frac{\|\bE\|_{F}}{\|\bV\|_{F}}$ & $\frac{\|\bE\|_{1}}{\|\bV\|_{1}}$ & $\frac{\|\bE\|_{\max}}{\|\bV\|_{\max}}$ & $\frac{\|\bE\|_{2}}{\|\bV\|_{2}}$ \\
\midrule
\multicolumn{9}{c}{$(n,p)=(200,300)$}\\ \hline
 HAC   & 4073.82 & 6571.00 & 233.42 & 4061.75 & 52.59 & 1095.17 & 58.35 & 676.96 \\
 DB    &   92.37 &  111.94 &   2.78 &   30.07 &  1.19 &   18.66 &  0.70 &   5.01 \\
 Hard  &   46.23 &   26.73 &   2.78 &   14.74 &  0.60 &    4.45 &  0.70 &   2.46 \\
 Soft  &   45.46 &   16.64 &   2.78 &    7.44 &  0.59 &    2.77 &  0.70 &   1.24 \\
 Taper &   47.57 &    4.78 &   2.78 &    4.57 &  0.61 &    0.80 &  0.70 &   0.76 \\
\midrule
\multicolumn{9}{c}{$(n,p)=(400,300)$}\\ \hline
 HAC   & 4253.51 & 6094.31 & 235.23 & 4246.63 & 54.91 & 1015.72 & 58.81 & 707.77 \\
 DB    &   69.03 &   67.77 &   2.52 &   14.60 &  0.89 &   11.30 &  0.63 &   2.43 \\
 Hard  &   36.64 &    8.79 &   2.52 &    4.41 &  0.47 &    1.47 &  0.63 &   0.73 \\
 Soft  &   40.29 &   10.47 &   2.52 &    4.28 &  0.52 &    1.75 &  0.63 &   0.71 \\
 Taper &   46.80 &    4.52 &   2.52 &    4.34 &  0.60 &    0.75 &  0.63 &   0.72 \\
\midrule
\multicolumn{9}{c}{$(n,p)=(800,300)$}\\ \hline
 HAC   & 4338.73 & 5657.96 & 232.83 & 4335.09 & 56.01 &  942.99 & 58.21 & 722.52 \\
 DB    &   67.04 &   69.93 &   1.96 &   15.66 &  0.87 &   11.65 &  0.49 &   2.61 \\
 Hard  &   27.07 &   12.86 &   1.96 &    6.23 &  0.35 &    2.14 &  0.49 &   1.04 \\
 Soft  &   30.29 &    5.59 &   1.96 &    3.50 &  0.39 &    0.93 &  0.49 &   0.58 \\
 Taper &   40.61 &    3.96 &   2.02 &    3.78 &  0.52 &    0.66 &  0.50 &   0.63 \\
\midrule
\multicolumn{9}{c}{$(n,p)=(1600,300)$}\\ \hline
 HAC   & 4374.31 & 5317.32 & 231.56 & 4372.45 & 56.47 &  886.22 & 57.89 & 728.74 \\
 DB    &   58.78 &   61.05 &   1.52 &   13.48 &  0.76 &   10.17 &  0.38 &   2.25 \\
 Hard  &   22.23 &   12.88 &   1.52 &    6.67 &  0.29 &    2.15 &  0.38 &   1.11 \\
 Soft  &   22.29 &    8.30 &   1.52 &    3.61 &  0.29 &    1.38 &  0.38 &   0.60 \\
 Taper &   37.78 &    3.84 &   2.00 &    3.42 &  0.49 &    0.64 &  0.50 &   0.57 \\
\bottomrule
\end{tabular}%
}
\end{table}

\begin{table}[htbp]
\centering
\caption{Estimation errors of different methods under Model I with $p=500$.}
\label{tab:lrc_errors_four_runs}
\resizebox{\textwidth}{!}{%
\begin{tabular}{lrrrrrrrr}
\toprule
Method & $\|\bE\|_{F}$ & $\|\bE\|_{1}$ & $\|\bE\|_{\max}$ & $\|\bE\|_{2}$
& $\frac{\|\bE\|_{F}}{\|\bV\|_{F}}$ & $\frac{\|\bE\|_{1}}{\|\bV\|_{1}}$ & $\frac{\|\bE\|_{\max}}{\|\bV\|_{\max}}$ & $\frac{\|\bE\|_{2}}{\|\bV\|_{2}}$ \\
\midrule
\multicolumn{9}{c}{$(n,p)=(200,500)$}\\ \hline
HAC   & 4257.80 & 8742.62 & 235.88 & 4207.16 & 33.16 & 971.40 & 47.18 & 467.47 \\
MAC & 489.54 & 627.42 & 6.85 & 163.23 & 3.81 & 69.71 & 1.37 & 18.14 \\
DB    &  181.36 &  204.51 &   3.53 &   47.34 &  1.41 &  22.72 &  0.71 &   5.26 \\
Hard  &   79.14 &   21.24 &   3.53 &   10.08 &  0.62 &   2.36 &  0.71 &   1.12 \\
Soft  &   78.73 &   21.49 &   3.53 &    7.82 &  0.61 &   2.39 &  0.71 &   0.87 \\
Taper &   61.79 &    6.61 &   3.53 &    5.66 &  0.48 &   0.73 &  0.71 &   0.63 \\
\midrule
\multicolumn{9}{c}{$(n,p)=(400,500)$}\\ \hline
HAC   & 4345.56 & 7651.68 & 238.36 & 4316.10 & 33.85 & 850.19 & 47.67 & 479.57 \\
MAC & 386.65 & 452.15 & 5.16 & 107.04 & 3.01 & 50.24 & 1.03 & 11.89 \\
DB    &  135.03 &  130.78 &   3.17 &   24.70 &  1.05 &  14.53 &  0.63 &   2.74 \\
Hard  &   65.93 &    7.82 &   3.17 &    6.34 &  0.51 &   0.87 &  0.63 &   0.70 \\
Soft  &   70.54 &   14.23 &   3.17 &    6.22 &  0.55 &   1.58 &  0.63 &   0.69 \\
Taper &   59.38 &    6.36 &   3.17 &    5.16 &  0.46 &   0.71 &  0.63 &   0.57 \\
\midrule
\multicolumn{9}{c}{$(n,p)=(800,500)$}\\ \hline
HAC   & 4394.94 & 6790.50 & 237.17 & 4379.21 & 34.23 & 754.50 & 47.43 & 486.58 \\
MAC & 298.51 & 323.50 & 3.83 & 69.46 & 2.33 & 35.94 & 0.77 & 7.72 \\
DB    &  134.28 &  131.98 &   2.50 &   24.28 &  1.05 &  14.66 &  0.50 &   2.70 \\
Hard  &   43.87 &   10.33 &   2.50 &    5.10 &  0.34 &   1.15 &  0.50 &   0.57 \\
Soft  &   54.42 &    7.53 &   2.50 &    5.14 &  0.42 &   0.84 &  0.50 &   0.57 \\
Taper &   40.32 &    4.67 &   2.50 &    3.94 &  0.31 &   0.52 &  0.50 &   0.44 \\
\midrule
\multicolumn{9}{c}{$(n,p)=(1600,500)$}\\ \hline
HAC   & 4400.11 & 6097.05 & 232.43 & 4392.01 & 34.27 & 677.45 & 46.49 & 488.01 \\
MAC & 226.50 & 230.85 & 2.77 & 45.52 & 1.76 & 25.65 & 0.55 & 5.06 \\
DB    &  118.73 &  114.32 &   1.94 &   19.73 &  0.92 &  12.70 &  0.39 &   2.19 \\
Hard  &   35.35 &   10.90 &   1.95 &    4.73 &  0.28 &   1.21 &  0.39 &   0.53 \\
Soft  &   41.13 &   11.08 &   1.94 &    4.19 &  0.32 &   1.23 &  0.39 &   0.47 \\
Taper &   28.59 &    3.70 &   1.94 &    3.07 &  0.22 &   0.41 &  0.39 &   0.34 \\
\bottomrule
\end{tabular}}
\end{table}

\begin{table}[htbp]
\centering
\caption{Estimation errors of different methods under Model II with $p=500$.}
\label{tab:lrc_errors_four_runs_new}
\resizebox{\textwidth}{!}{%
\begin{tabular}{lrrrrrrrr}
\toprule
Method & $\|\bE\|_{F}$ & $\|\bE\|_{1}$ & $\|\bE\|_{\max}$ & $\|\bE\|_{2}$
& $\frac{\|\bE\|_{F}}{\|\bV\|_{F}}$ & $\frac{\|\bE\|_{1}}{\|\bV\|_{1}}$ & $\frac{\|\bE\|_{\max}}{\|\bV\|_{\max}}$ & $\frac{\|\bE\|_{2}}{\|\bV\|_{2}}$ \\
\midrule
\multicolumn{9}{c}{$(n,p)=(200,500)$}\\ \hline
HAC   & 4179.92 & 8117.72 & 227.08 & 4147.18 & 27.38 & 358.13 & 56.77 & 183.02 \\
MAC   & 393.19 & 527.70 & 5.87 & 148.61 & 2.58 & 23.28 & 1.47 & 6.56 \\
DB    & 154.85 & 168.22 & 2.82 & 40.67 & 1.01 & 7.42 & 0.71 & 1.79 \\
Hard  & 99.55 & 26.01 & 2.84 & 18.40 & 0.65 & 1.15 & 0.71 & 0.81 \\
Soft  & 104.16 & 30.70 & 2.82 & 18.36 & 0.68 & 1.35 & 0.71 & 0.81 \\
Taper & 77.20 & 17.95 & 2.82 & 14.45 & 0.51 & 0.79 & 0.71 & 0.64 \\
\midrule
\multicolumn{9}{c}{$(n,p)=(400,500)$}\\ \hline
HAC   & 4305.60 & 7216.19 & 232.09 & 4286.98 & 28.20 & 318.36 & 58.02 & 189.19 \\
MAC   & 309.61 & 372.73 & 4.09 & 101.23 & 2.03 & 16.44 & 1.02 & 4.47 \\
DB    & 119.57 & 108.33 & 2.53 & 21.06 & 0.78 & 4.78 & 0.63 & 0.93 \\
Hard  & 84.39 & 19.18 & 2.54 & 15.74 & 0.55 & 0.85 & 0.63 & 0.69 \\
Soft  & 91.95 & 24.61 & 2.53 & 16.43 & 0.60 & 1.09 & 0.63 & 0.73 \\
Taper & 72.51 & 16.64 & 2.53 & 13.11 & 0.47 & 0.73 & 0.63 & 0.58 \\
\midrule
\multicolumn{9}{c}{$(n,p)=(800,500)$}\\ \hline
HAC   & 4383.82 & 6491.83 & 232.63 & 4373.89 & 28.71 & 286.40 & 58.16 & 193.02 \\
MAC   & 239.79 & 270.08 & 3.03 & 68.07 & 1.57 & 11.92 & 0.76 & 3.00 \\
DB    & 112.91 & 109.69 & 1.99 & 22.36 & 0.74 & 4.84 & 0.50 & 0.99 \\
Hard  & 61.36 & 16.86 & 1.99 & 12.21 & 0.40 & 0.74 & 0.50 & 0.54 \\
Soft  & 75.20 & 18.73 & 2.02 & 14.40 & 0.49 & 0.83 & 0.50 & 0.64 \\
Taper & 50.91 & 13.34 & 1.99 & 10.13 & 0.33 & 0.59 & 0.50 & 0.45 \\
\midrule
\multicolumn{9}{c}{$(n,p)=(1600,500)$}\\ \hline
HAC   & 4386.06 & 5881.84 & 229.14 & 4380.89 & 28.73 & 259.49 & 57.28 & 193.33 \\
MAC   & 183.70 & 194.87 & 2.19 & 46.32 & 1.20 & 8.60 & 0.55 & 2.04 \\
DB    & 98.07 & 94.97 & 1.55 & 19.01 & 0.64 & 4.19 & 0.39 & 0.84 \\
Hard  & 46.83 & 15.11 & 1.55 & 9.46 & 0.31 & 0.67 & 0.39 & 0.42 \\
Soft  & 56.96 & 19.62 & 1.57 & 11.48 & 0.37 & 0.87 & 0.39 & 0.51 \\
Taper & 37.19 & 11.01 & 1.55 & 7.87 & 0.24 & 0.49 & 0.39 & 0.35 \\
\bottomrule
\end{tabular}}
\end{table}

\begin{table}[htbp]
\centering
\caption{Estimation errors of different methods under Model III with $p=500$.}
\label{tab:lrc_errors_model35}
\resizebox{\textwidth}{!}{%
\begin{tabular}{lrrrrrrrr}
\toprule
 Method & $\|\bE\|_{F}$ & $\|\bE\|_{1}$ & $\|\bE\|_{\max}$ & $\|\bE\|_{2}$ & rel\_frob & rel\_$\ell_1$ & rel\_$\ell_\infty$ & rel\_spec \\
\midrule
\multicolumn{9}{c}{$(n,p)=(200,300)$}\\ \hline
 HAC   & 4187.35 & 8199.56 & 232.71 & 4154.57 & 41.87 & 1366.59 & 58.18 & 692.43 \\
 DB    &  145.68 &  174.69 &   2.82 &   39.86 &  1.46 &   29.12 &  0.71 &   6.64 \\
 Hard  &   58.45 &   26.97 &   2.82 &   13.98 &  0.58 &    4.50 &  0.71 &   2.33 \\
 Soft  &   59.05 &   21.96 &   2.82 &    8.28 &  0.59 &    3.66 &  0.71 &   1.38 \\
 Taper &   61.66 &    4.82 &   2.82 &    4.61 &  0.62 &    0.80 &  0.71 &   0.77 \\
\midrule
\multicolumn{9}{c}{$(n,p)=(400,300)$}\\ \hline
 HAC   & 4307.64 & 7275.95 & 235.89 & 4288.61 & 43.08 & 1212.66 & 58.97 & 714.77 \\
 DB    &  107.91 &  107.68 &   2.53 &   20.23 &  1.08 &   17.95 &  0.63 &   3.37 \\
 Hard  &   47.30 &    8.23 &   2.53 &    4.44 &  0.47 &    1.37 &  0.63 &   0.74 \\
 Soft  &   52.74 &   12.30 &   2.53 &    4.44 &  0.53 &    2.05 &  0.63 &   0.74 \\
 Taper &   60.89 &    5.37 &   2.53 &    4.47 &  0.61 &    0.90 &  0.63 &   0.75 \\
\midrule
\multicolumn{9}{c}{$(n,p)=(800,300)$}\\ \hline
 HAC   & 4371.29 & 6505.31 & 234.85 & 4361.16 & 43.71 & 1084.22 & 58.71 & 726.86 \\
 DB    &  107.67 &  109.62 &   2.01 &   20.34 &  1.08 &   18.27 &  0.50 &   3.39 \\
 Hard  &   34.32 &   12.76 &   2.01 &    5.96 &  0.34 &    2.13 &  0.50 &   0.99 \\
 Soft  &   39.36 &    6.48 &   2.01 &    3.65 &  0.39 &    1.08 &  0.50 &   0.61 \\
 Taper &   52.54 &    4.01 &   2.04 &    3.81 &  0.53 &    0.67 &  0.51 &   0.63 \\
\midrule
\multicolumn{9}{c}{$(n,p)=(1600,300)$}\\ \hline
 HAC   & 4390.88 & 5917.29 & 231.45 & 4385.66 & 43.91 &  986.22 & 57.86 & 730.94 \\
 DB    &   95.27 &   94.72 &   1.55 &   16.52 &  0.95 &   15.79 &  0.39 &   2.75 \\
 Hard  &   28.36 &   12.80 &   1.55 &    6.20 &  0.28 &    2.13 &  0.39 &   1.03 \\
 Soft  &   29.95 &   10.70 &   1.55 &    3.81 &  0.30 &    1.78 &  0.39 &   0.64 \\
 Taper &   48.92 &    4.00 &   2.00 &    3.45 &  0.49 &    0.67 &  0.50 &   0.58 \\
\bottomrule
\end{tabular}%
}
\end{table}

\section{Data Application}\label{sec:data}
Following \citet{WangLiuFeng2025}, we study weekly log-returns of NASDAQ constituent stocks from January 2016 to December 2024. Financial return panels of this type are well known to exhibit substantial temporal dependence and strong cross-sectional comovement, so a realistic changepoint analysis must account for the long-run covariance structure; see also the empirical discussion in \citet{WangLiuFeng2025}. Our objective is to examine whether the high-dimensional return process experienced a structural break during the sample period.

Formally, let $\bm{X}_1,\ldots,\bm{X}_n\in\mathbb{R}^p$ denote the weekly return vectors where in our processed sample $n=471$ and $p=1556$. We consider the single-changepoint testing problem
\[
H_0:\ {\bm\mu}_1=\cdots={\bm\mu}_n
\qquad\text{versus}\qquad
H_1:\ {\bm\mu}_1=\cdots={\bm\mu}_{\tau}\neq {\bm\mu}_{\tau+1}=\cdots={\bm\mu}_n
\]
for some unknown $\tau\in\{1,\ldots,n-1\}$. In the present section, we focus exclusively on the dense-alternative statistic $\bm{S}_{n,p}$ proposed by \citet{WangLiuFeng2025}. Specifically, for each candidate split point $k$, let $\bm{W}(k)$ denote the quadratic CUSUM-type statistic and let $\widehat{\bm\mu}_{M,k}$ be the corresponding centering term defined in that paper. We then analyze the normalized process
\[
\widetilde{\bm{S}}_{n,p}(k)=\frac{\bm{W}(k)-\widehat{\bm\mu}_{M,k}}{\widehat\omega},
\qquad 1\le k\le n-1,
\]
and estimate the changepoint by the maximizer of $\widetilde{\bm{S}}_{n,p}(k)$ over $k$.

The only modification relative to \citet{WangLiuFeng2025} is in the estimation of the scaling factor $\omega$. Instead of using the original plug-in estimator based on the long-run covariance structure appearing in that paper, we replace $\mathbf{V}$ in the quantity $\{2\,\mathrm{tr}({\mathbf{V}}^2)/p\}^{1/2}$ by our sparse difference-based long-run covariance estimator. This yields three versions,
\[
\widehat\omega_{\mathrm{hard}}
=
\biggl\{\frac{2}{p}\,\mathrm{tr}\bigl(\widehat{\mathbf{V}}_{\mathrm{hard}}^2\bigr)\biggr\}^{1/2},
\qquad
\widehat\omega_{\mathrm{soft}}
=
\biggl\{\frac{2}{p}\,\mathrm{tr}\bigl(\widehat{\mathbf{V}}_{\mathrm{soft}}^2\bigr)\biggr\}^{1/2},
\qquad
\widehat\omega_{\mathrm{taper}}
=
\biggl\{\frac{2}{p}\,\mathrm{tr}\bigl(\widehat{\mathbf{V}}_{\mathrm{taper}}^2\bigr)\biggr\}^{1/2},
\]
which in turn produce three normalized $\bm{S}_{n,p}$ paths. The corresponding changepoint estimators are defined by
\[
\widehat\tau_{\mathrm{hard}}=\arg\max_{1\le k\le n-1}\widetilde{\bm{S}}^{(\mathrm{hard})}_{n,p}(k),
\qquad
\widehat\tau_{\mathrm{soft}}=\arg\max_{1\le k\le n-1}\widetilde{\bm{S}}^{(\mathrm{soft})}_{n,p}(k),
\qquad
\widehat\tau_{\mathrm{taper}}=\arg\max_{1\le k\le n-1}\widetilde{\bm{S}}^{(\mathrm{taper})}_{n,p}(k).
\]

\begin{figure}[htb]
	\centering
	\includegraphics[width=\linewidth]{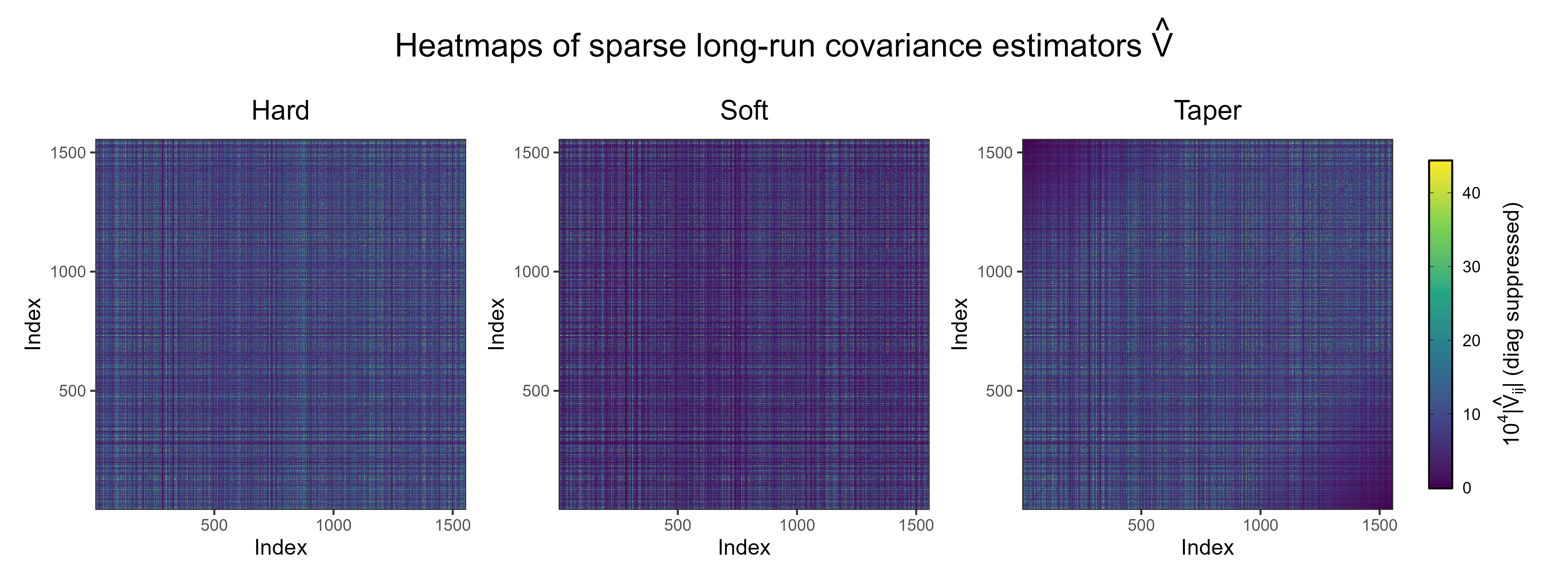}
 \caption{Heatmaps of sparse long run covariance matrix estimators (all entries are multiplied by 10000).}\label{fig:V_heatmap}
\end{figure}

Figure~\ref{fig:V_heatmap} provides additional empirical support for the sparse long-run covariance formulation adopted in our modification of the $\bm{S}_{n,p}$ statistic. Across the hard-thresholding, soft-thresholding, and tapering estimators, the estimated long-run covariance matrix $\widehat{\mathbf{V}}$ exhibits a pronounced near-sparse pattern: most off-diagonal entries are heavily shrunk toward zero, while only a relatively small fraction of entries remain visibly non-negligible. This feature is particularly evident away from the main diagonal, where large connected blocks are largely absent and dependence appears to be concentrated in a limited subset of local coordinates. Such a pattern is consistent with the working assumption that the cross-sectional and serially aggregated dependence structure is globally weak but locally persistent, thereby making sparse regularization a natural device for stabilizing the estimation of $\mathbf{V}$ in high dimensions.

The three regularization schemes reveal a broadly similar structural picture, although they differ in the degree of shrinkage. The hard-thresholding estimator preserves a slightly larger number of moderate signals, the soft-thresholding estimator induces more aggressive global shrinkage, and the tapering estimator retains a smoother decay pattern. Importantly, however, all three estimators lead to the same qualitative conclusion that $\widehat{\mathbf{V}}$ is approximately sparse rather than dense. This visual evidence complements the change-point analysis based on the modified $\bm{S}_{n,p}$ statistic, and supports our use of the sparse estimator $\widehat{\mathbf{V}}$ as a plug-in device in the normalization term involving $\tr({\mathbf{V}}^2)$.

Figure~\ref{fig:nasdaq_sparse_snp} displays the three normalized trajectories based on the hard-thresholding, soft-thresholding, and tapering estimators of the long-run covariance matrix. Although the three normalizations produce different peak magnitudes, they all attain their maximum at the same location, namely $\widehat{k}=269$, corresponding to 2021-02-19. As reported in Table~\ref{tab:nasdaq_sparse_snp}, the soft-thresholding version yields the largest peak value, $\max_k \widetilde{\bm{S}}_{n,p}(k)=0.577$, together with the smallest Monte Carlo $p$-value of $0.078$, while the hard-thresholding and tapering versions give weaker signals, with corresponding $p$-values $0.173$ and $0.134$, respectively. This pattern is consistent with the behavior observed under Model III in the simulation study, where soft thresholding tends to outperform hard thresholding in settings with less sharply structured sparsity.

From a financial perspective, the relatively weaker performance of tapering is not surprising. Tapering is most effective when the index ordering carries structural information, so that entries farther away from the diagonal are naturally less important. In the present application, however, the ordering of stocks in the data matrix is essentially nominal, and there is no intrinsic reason why stocks with nearby indices should exhibit stronger dependence than those far apart. Consequently, shrinking the estimated long-run covariance matrix according to index distance may be less appropriate in this setting. Overall, despite these differences in signal strength, all three sparse estimators identify exactly the same changepoint date, which provides strong evidence that the estimated break location is stable with respect to the specific sparse regularization used in the normalization step.

From an economic perspective, the date 2021-02-19 is closely aligned with the market regime shift discussed in \citet{WangLiuFeng2025}. Their interpretation links the February 2021 break to a broad repricing of technology and other long-duration growth stocks during the post-pandemic reopening phase, when abundant liquidity began to be offset by a rapid rise in U.S. Treasury yields and a corresponding increase in discount-rate pressure on growth valuations. This mechanism is particularly plausible for NASDAQ data, because the index is heavily tilted toward large technology and growth firms whose valuations are especially sensitive to changes in long-horizon discount rates. In statistical terms, such an episode is consistent with a relatively dense cross-sectional shift, which is precisely the type of alternative for which the $\bm{S}_{n,p}$ statistic is designed.

\begin{table}[htbp]
\centering
\caption{Modified $\bm{S}_{n,p}$ analysis based on sparse long-run covariance normalization.}
\label{tab:nasdaq_sparse_snp}
\begin{tabular}{lcccccc}
\toprule
Method & Tuning & $\widehat{\omega}$ & $\max_k \widetilde{\bm{S}}_{n,p}(k)$ & $\widehat{k}$ & Date &  $p$-value \\
\midrule
Hard  & $7.27\times 10^{-6}$ & 0.0724 & 0.484 & 269 & 2021-02-19 & 0.173 \\
Soft  & $3.07\times 10^{-4}$ & 0.0607 & 0.577 & 269 & 2021-02-19 & 0.078 \\
Taper & 1556                 & 0.0678 & 0.517 & 269 & 2021-02-19 & 0.134 \\
\bottomrule
\end{tabular}
\end{table}

\begin{figure}[htbp]
\centering
\includegraphics[width=\textwidth]{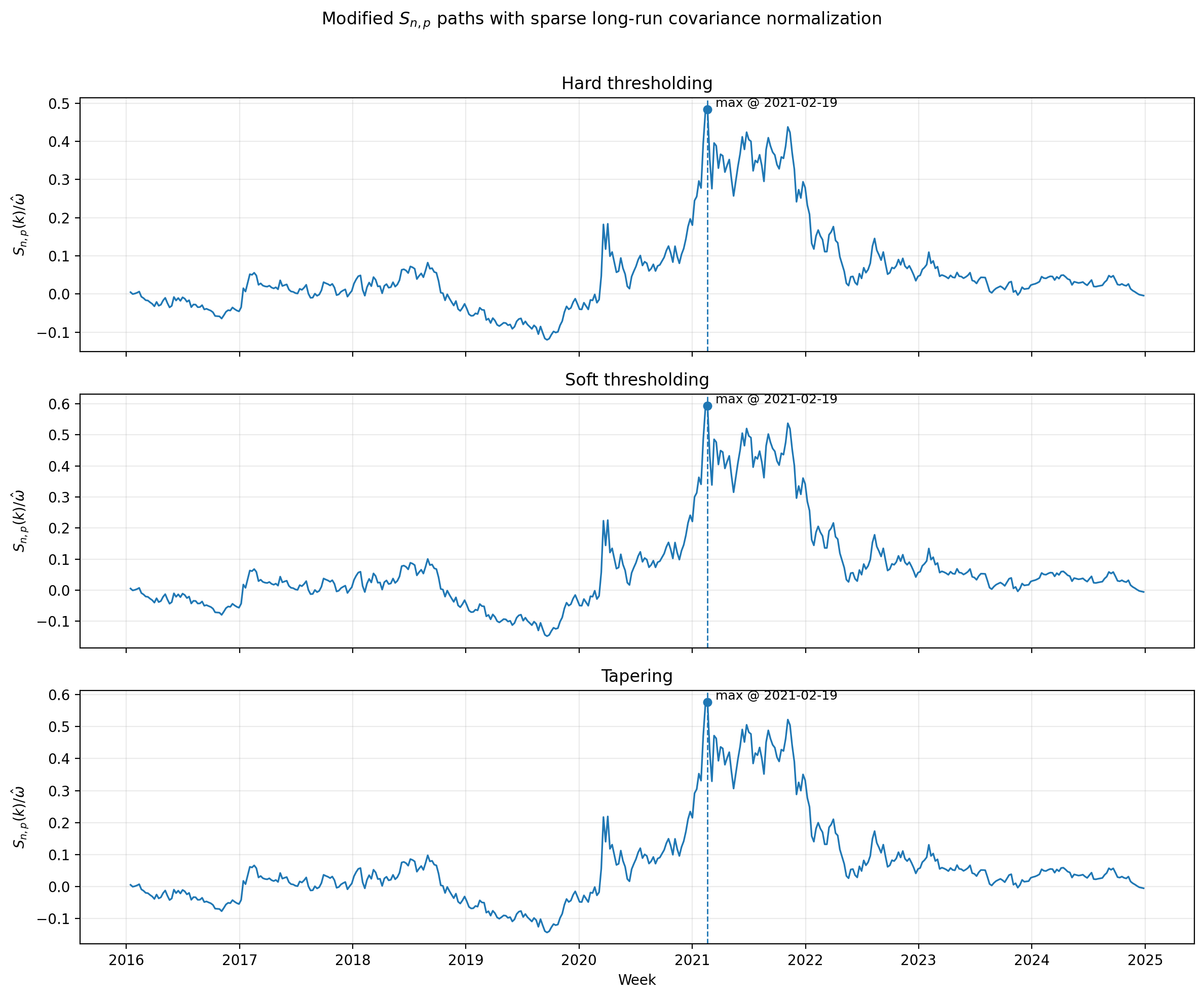}
\caption{Normalized trajectories of the modified $\bm{S}_{n,p}$ statistic based on hard-thresholding, soft-thresholding, and tapering estimators of the sparse long-run covariance matrix. The dashed vertical line marks the common maximizer of the three curves.}
\label{fig:nasdaq_sparse_snp}
\end{figure}

\section{Conclusion}\label{sec:conclusion}
This paper studies difference-based estimation of the high-dimensional long-run covariance matrix for temporally dependent time series in the presence of nonstationary mean structures. We show that the proposed difference-based estimator can effectively remove deterministic mean contamination and provide a reliable starting point for high-dimensional long-run covariance estimation. Building on this estimator, we further investigate hard-thresholding, soft-thresholding, and tapering regularization methods, and establish their theoretical convergence properties under suitable sparsity and dependence conditions. The numerical results demonstrate that the regularized difference-based procedures substantially improve estimation accuracy over classical HAC-type methods and remain effective across a range of dependence structures and sample sizes.

There are several important directions for future research. First, the estimated high-dimensional long-run covariance matrix can be incorporated into other statistical inference problems involving dependent data, such as linear and quadratic discriminant analysis, as well as related classification and testing problems; see, for example, \citet{Friedman1989RDA,BickelLevina2004,FanFengTong2012}. Second, the current framework is mainly developed for sparse long-run covariance structures. It would be of substantial interest to extend the methodology to nonsparse settings by combining difference-based ideas with low-rank or approximate factor structures, thereby accommodating more complex dependence patterns commonly encountered in finance, economics, and other high-dimensional applications; see \citet{FanFanLv2008,ChamberlainRothschild1983,FanLiaoMincheva2013}.

\appendix
\section{Appendix: Proofs of Theorems}
\addcontentsline{toc}{section}{Appendix: Proofs}

\subsection{Proof of Theorem \ref{thm:maxnorm_rate}}
By \eqref{eq:Rmu},
\[
\Vdb - \Vtrue
=
(\Vor-\Vtrue) + \mathbf B_{\mu,n} + \mathbf R_{\mu,n}.
\]
Hence,
\begin{equation}
\maxnorm{\Vdb-\Vtrue}
\le
\maxnorm{\Vor-\Vtrue}
+\maxnorm{\mathbf B_{\mu,n}}
+\maxnorm{\mathbf R_{\mu,n}}.
\label{eq:proof_t1_decomp0}
\end{equation}
By Assumption~\ref{ass:mean_remainder},
\[
\maxnorm{\mathbf R_{\mu,n}}=O_p(r_{\mu,n}).
\]
Therefore, it remains to bound \(\maxnorm{\Vor-\Vtrue}\).
Write
\[
\Vor-\Vtrue
=
\underbrace{\bigl(\Vor-\E\Vor\bigr)}_{\text{oracle fluctuation}}
+
\underbrace{\bigl(\E\Vor-\Vtrue\bigr)}_{\text{oracle bias}}.
\]
Taking max norm yields
\begin{equation}
\maxnorm{\Vor-\Vtrue}
\le
\maxnorm{\Vor-\E\Vor}
+
\maxnorm{\E\Vor-\Vtrue}.
\label{eq:proof_t1_decomp1}
\end{equation}
By Assumption~\ref{ass:oracle_bias},
\begin{equation}
\maxnorm{\E\Vor-\Vtrue}
\le
C_B\left\{\ell_n^{-q_0}+\frac{(m_n+1)h_n}{n}\right\}.
\label{eq:proof_t1_bias}
\end{equation}

Thus the key step is to derive a high-dimensional max-norm concentration bound for \(\Vor-\E\Vor\). Define
\[
S_{rs}:=\Vor[r,s]-\E\Vor[r,s],\qquad 1\le r,s\le p_n.
\]
For any \(t>0\),
\[
\Prob\!\left(\max_{1\le r,s\le p_n}|S_{rs}|>t\right)
\le
\sum_{r=1}^{p_n}\sum_{s=1}^{p_n}\Prob(|S_{rs}|>t).
\]
Applying Assumption~\ref{ass:bernstein},
\begin{equation}
\Prob\!\left(\max_{r,s}|S_{rs}|>t\right)
\le
2p_n^2 \exp\!\left[
-c_1 n \min\!\left\{
\frac{t^2}{\ell_n+m_n h_n},\,
\frac{t}{\ell_n+m_n h_n}
\right\}
\right].
\label{eq:proof_t1_union}
\end{equation}

We now choose \(t\) to balance the polynomial factor \(p_n^2\).
Set
\[
t_1 := M_1 \sqrt{\frac{(\ell_n+m_n h_n)\log p_n}{n}}
\]
for a constant \(M_1>0\) to be chosen. Then
\[
\frac{n t_1^2}{\ell_n+m_n h_n}=M_1^2 \log p_n.
\]
Substituting into \eqref{eq:proof_t1_union} gives
\[
\Prob\!\left(\max_{r,s}|S_{rs}|>t_1\right)
\le
2p_n^{2-c_1M_1^2}
\to 0
\]
as soon as \(M_1\) is chosen large enough so that \(c_1M_1^2>3\), say. Therefore,
\begin{equation*}
\maxnorm{\Vor-\E\Vor}
=
O_p\!\left(
\sqrt{\frac{(\ell_n+m_n h_n)\log p_n}{n}}
\right)
\end{equation*}
whenever the quadratic term is the active branch.

To obtain a uniform bound valid across the two branches of the Bernstein inequality, define
\[
t_2 := M_2 \frac{(\ell_n+m_n h_n)\log p_n}{n},
\]
with \(M_2>0\) large. Then
\[
\frac{n t_2}{\ell_n+m_n h_n}=M_2\log p_n,
\]
hence
\[
\Prob\!\left(\max_{r,s}|S_{rs}|>t_2\right)
\le
2p_n^{2-c_1M_2}\to 0
\]
if \(M_2\) is sufficiently large. Combining the two regimes yields the standard Bernstein-type max-norm bound
\begin{equation}
\maxnorm{\Vor-\E\Vor}
=
O_p\!\left(
\sqrt{\frac{(\ell_n+m_n h_n)\log p_n}{n}}
+
\frac{(\ell_n+m_n h_n)\log p_n}{n}
\right).
\label{eq:proof_t1_stoch2}
\end{equation}

Substitute \eqref{eq:proof_t1_bias} and \eqref{eq:proof_t1_stoch2} into \eqref{eq:proof_t1_decomp1}:
\[
\maxnorm{\Vor-\Vtrue}
=
O_p\!\left(
\ell_n^{-q_0}
+\frac{(m_n+1)h_n}{n}
+\sqrt{\frac{(\ell_n+m_n h_n)\log p_n}{n}}
+\frac{(\ell_n+m_n h_n)\log p_n}{n}
\right).
\]
Then by \eqref{eq:proof_t1_decomp0} and Assumption~\ref{ass:mean_remainder},
\[
\maxnorm{\Vdb-\Vtrue}
=
O_p\!\left(
\maxnorm{\mathbf B_{\mu,n}}
+r_{\mu,n}
+\ell_n^{-q_0}
+\frac{(m_n+1)h_n}{n}
+\sqrt{\frac{(\ell_n+m_n h_n)\log p_n}{n}}
+\frac{(\ell_n+m_n h_n)\log p_n}{n}
\right).
\]
Then Theorem~\ref{thm:maxnorm_rate} follows. \qed

\subsection{Proof of Corollary \ref{cor:balanced_rate}}

Under the additional conditions of the corollary, the terms
\[
\maxnorm{\mathbf B_{\mu,n}}+r_{\mu,n}
\quad\text{and}\quad
\frac{(m_n+1)h_n}{n}
\]
are of smaller order than the leading terms, and \(m_n h_n\lesssim \ell_n\) implies
\[
\ell_n+m_n h_n \asymp \ell_n.
\]
Hence Theorem~\ref{thm:maxnorm_rate} yields
\[
\maxnorm{\Vdb-\Vtrue}
=
O_p\!\left(
\ell_n^{-q_0}
+\sqrt{\frac{\ell_n \log p_n}{n}}
+\frac{\ell_n \log p_n}{n}
\right).
\]
If \(\ell_n\log p_n/n\to 0\), then the last term is dominated by the square-root term, so the leading rate is
\[
\ell_n^{-q_0}+\sqrt{\frac{\ell_n\log p_n}{n}}.
\]
Balancing these two terms gives
\[
\ell_n^{-q_0}\asymp \sqrt{\frac{\ell_n\log p_n}{n}}
\quad\Longrightarrow\quad
\ell_n^{2q_0+1}\asymp \frac{n}{\log p_n}
\quad\Longrightarrow\quad
\ell_n \asymp \left(\frac{n}{\log p_n}\right)^{1/(2q_0+1)}.
\]
Substituting this choice back yields
\[
\ell_n^{-q_0}\asymp \left(\frac{\log p_n}{n}\right)^{q_0/(2q_0+1)},
\qquad
\sqrt{\frac{\ell_n\log p_n}{n}}
\asymp
\left(\frac{\log p_n}{n}\right)^{q_0/(2q_0+1)}.
\]
Therefore \eqref{eq:balanced_rate} follows. \qed

\subsection{Proof of Theorem \ref{thm:thresholding}}
By Theorem~\ref{thm:maxnorm_rate}, there exists a constant \(C_0>0\) such that the event
\[
\mathcal E_n:=\left\{\maxnorm{\Vdb-\Vtrue}\le C_0 a_n\right\}
\]
satisfies \(\Prob(\mathcal E_n)\to 1\). Choose \(\tau_n=C_\tau a_n\) with \(C_\tau\ge 2C_0\). Then on \(\mathcal E_n\),
\begin{equation}
\maxnorm{\Vdb-\Vtrue}\le \tau_n/2.
\label{eq:proof_t2_event}
\end{equation}

Below we work on \(\mathcal E_n\); all bounds then hold with probability tending to one.

Fix \(r\neq s\). Let \(x=\Vdb[r,s]\), \(v=\Vtrue[r,s]\), and \(\widehat v=\Vhard[r,s]=x\1(|x|\ge\tau_n)\). Since \(|x-v|\le \tau_n/2\) on \(\mathcal E_n\), we consider two cases.

\paragraph{Case 1: \(|x|\ge \tau_n\).}
Then \(\widehat v=x\), and
\[
|\widehat v-v|=|x-v|\le \tau_n/2.
\]

\paragraph{Case 2: \(|x|< \tau_n\).}
Then \(\widehat v=0\), and
\[
|\widehat v-v|=|v|
\le |v-x|+|x|
< \tau_n/2+\tau_n = \frac{3}{2}\tau_n.
\]

Thus, for all \(r\neq s\),
\[
|\Vhard[r,s]-\Vtrue[r,s]| \le \frac{3}{2}\tau_n.
\]
For diagonal entries, \(\Vhard[r,r]=\Vdb[r,r]\), so by \eqref{eq:proof_t2_event},
\[
|\Vhard[r,r]-\Vtrue[r,r]|\le \tau_n/2.
\]
Therefore,
\begin{equation}
\maxnorm{\Vhard-\Vtrue}\le \frac{3}{2}\tau_n = O(a_n),
\label{eq:proof_t2_max}
\end{equation}
which proves \eqref{eq:thr_max}.

Fix a row \(r\). We show
\begin{equation}
\sum_{s=1}^{p_n} |\Vhard[r,s]-\Vtrue[r,s]|
\le
C c_{n,p} \tau_n^{1-\alpha}
\label{eq:proof_t2_rowgoal}
\end{equation}
on \(\mathcal E_n\), uniformly in \(r\).

Let \(e_{rs}:=\Vhard[r,s]-\Vtrue[r,s]\). For \(r\neq s\), from the hard-thresholding definition and \(|x-v|\le \tau_n/2\), one has the standard decomposition
\[
|e_{rs}|
\le
|x-v|\1(|x|\ge\tau_n)+|v|\1(|x|<\tau_n).
\]
Hence
\begin{equation}
\sum_{s\neq r}|e_{rs}|
\le
\underbrace{\sum_{s\neq r}|x-v|\1(|x|\ge\tau_n)}_{T_{1r}}
+
\underbrace{\sum_{s\neq r}|v|\1(|x|<\tau_n)}_{T_{2r}}.
\label{eq:proof_t2_rowdecomp}
\end{equation}

On \(\mathcal E_n\), if \(|x|\ge\tau_n\), then \(|v|\ge |x|-|x-v|\ge \tau_n/2\). Therefore
\[
T_{1r}
\le
\frac{\tau_n}{2}\sum_{s\neq r}\1\!\left(|v_{rs}|\ge \frac{\tau_n}{2}\right).
\]
Using the weak row-sparsity condition~\ref{ass:sparsity},
\[
\sum_{s\neq r}\1\!\left(|v_{rs}|\ge \frac{\tau_n}{2}\right)
\le
\left(\frac{2}{\tau_n}\right)^\alpha \sum_{s\neq r}|v_{rs}|^\alpha
\le
C c_{n,p}\tau_n^{-\alpha}.
\]
Hence
\begin{equation}
T_{1r}\le C c_{n,p}\tau_n^{1-\alpha}.
\label{eq:proof_t2_T1}
\end{equation}

On \(\mathcal E_n\), if \(|x|<\tau_n\), then \(|v|<|x|+|x-v|<\tfrac32\tau_n\). Therefore
\[
|v|\1(|x|<\tau_n)
\le
|v|^\alpha \left(\frac{3\tau_n}{2}\right)^{1-\alpha}\1(|x|<\tau_n).
\]
Summing over \(s\neq r\) gives
\begin{equation}
T_{2r}
\le
\left(\frac{3\tau_n}{2}\right)^{1-\alpha}
\sum_{s\neq r}|v_{rs}|^\alpha
\le
C c_{n,p}\tau_n^{1-\alpha}.
\label{eq:proof_t2_T2}
\end{equation}

Combining \eqref{eq:proof_t2_rowdecomp}, \eqref{eq:proof_t2_T1}, and \eqref{eq:proof_t2_T2},
\[
\sum_{s\neq r}|e_{rs}|
\le C c_{n,p}\tau_n^{1-\alpha}.
\]
The diagonal error contributes at most \(\tau_n/2\), which is absorbed into the same order since \(c_{n,p}\ge 1\). Thus \eqref{eq:proof_t2_rowgoal} holds.

Taking the maximum over \(r\), we obtain
\begin{equation}
\onenorm{\Vhard-\Vtrue}
=
\max_{1\le r\le p_n}\sum_{s=1}^{p_n}|e_{rs}|
\le
C c_{n,p}\tau_n^{1-\alpha}.
\label{eq:proof_t2_l1}
\end{equation}

Since \(\Vhard\) and \(\Vtrue\) are symmetric (if desired, one can symmetrize \(\Vdb\) before thresholding by replacing it with \((\Vdb+\widehat{\Vtrue}^{DB\T})/2\), which does not alter the max-norm rate order), we have
\[
\twonorm{\Vhard-\Vtrue}
\le
\onenorm{\Vhard-\Vtrue}.
\]
Using \eqref{eq:proof_t2_l1} and \(\tau_n\asymp a_n\),
\[
\twonorm{\Vhard-\Vtrue}
=
O_p\!\bigl(c_{n,p}a_n^{1-\alpha}\bigr),
\]
which proves \eqref{eq:thr_op}.

For any matrix \(\mathbf A\),
\[
\frobnorm{\mathbf A}^2
=
\sum_{r,s} A_{rs}^2
\le
\max_{r,s}|A_{rs}|
\sum_{r,s}|A_{rs}|
\le
p_n \maxnorm{\mathbf A}\,\onenorm{\mathbf A}.
\]
Applying this to \(\mathbf A=\Vhard-\Vtrue\), together with \eqref{eq:proof_t2_max} and \eqref{eq:proof_t2_l1}, yields
\[
\frac{1}{p_n}\frobnorm{\Vhard-\Vtrue}^2
\le
\maxnorm{\Vhard-\Vtrue}\,\onenorm{\Vhard-\Vtrue}
\le
C a_n \cdot C c_{n,p} a_n^{1-\alpha}
=
C c_{n,p} a_n^{2-\alpha}.
\]
Thus \eqref{eq:thr_frob} follows.
 \qed

\subsection{Proof of Theorem \ref{thm:soft_thresholding}}
\label{app:proof_soft_thresholding}

We parallel the proof of Theorem~\ref{thm:thresholding}, but now for the soft-thresholding map
\[
S_\tau(x):=\sign(x)(|x|-\tau)_+.
\]

By Theorem~\ref{thm:maxnorm_rate}, there exists \(C_0>0\) such that
\[
\mathcal E_n:=\left\{\maxnorm{\Vdb-\Vtrue}\le C_0 a_n\right\}
\]
satisfies \(\Prob(\mathcal E_n)\to 1\). Choose \(\tau_n=C_\tau a_n\) with \(C_\tau\ge 2C_0\). Then on \(\mathcal E_n\),
\begin{equation*}
\maxnorm{\Vdb-\Vtrue}\le \tau_n/2.
\end{equation*}

We work on \(\mathcal E_n\) throughout.

Fix \(r\neq s\). Let \(x=\Vdb[r,s]\), \(v=\Vtrue[r,s]\), and \(\widehat v=S_{\tau_n}(x)\). Consider two cases.

\emph{Case 1: \(|x|\ge\tau_n\).} Then \(\widehat v = x-\tau_n\sign(x)\), so
\[
|\widehat v-v|
\le |x-v|+\tau_n
\le \frac{3}{2}\tau_n.
\]

\emph{Case 2: \(|x|<\tau_n\).} Then \(\widehat v=0\), and
\[
|\widehat v-v|=|v|\le |v-x|+|x|<\frac{\tau_n}{2}+\tau_n = \frac{3}{2}\tau_n.
\]

Thus, for all off-diagonal entries,
\[
|\Vsoft[r,s]-\Vtrue[r,s]|\le \frac{3}{2}\tau_n.
\]
Diagonal entries are not thresholded, hence
\[
|\Vsoft[r,r]-\Vtrue[r,r]|=|\Vdb[r,r]-\Vtrue[r,r]|\le \tau_n/2.
\]
Therefore
\[
\maxnorm{\Vsoft-\Vtrue}\le \frac{3}{2}\tau_n = O(a_n),
\]
which proves \eqref{eq:soft_max}.

Fix a row \(r\). Let \(e_{rs}:=\Vsoft[r,s]-\Vtrue[r,s]\). For \(r\neq s\), note that
\[
|S_{\tau_n}(x)-v|
\le
|x-v| + \tau_n \1(|x|\ge\tau_n) + |v|\1(|x|<\tau_n),
\]
because when \(|x|\ge\tau_n\), the soft-thresholding shrinkage introduces an extra \(\tau_n\), and when \(|x|<\tau_n\), the estimator is zero.

Hence
\begin{align*}
\sum_{s\neq r}|e_{rs}|
&\le
\underbrace{\sum_{s\neq r}|x_{rs}-v_{rs}|}_{A_{1r}}
+
\underbrace{\tau_n\sum_{s\neq r}\1(|x_{rs}|\ge\tau_n)}_{A_{2r}}
+
\underbrace{\sum_{s\neq r}|v_{rs}|\1(|x_{rs}|<\tau_n)}_{A_{3r}}.
\end{align*}

The first term satisfies \(A_{1r}\le p_n(\tau_n/2)\), which is too crude if taken directly. Instead, refine it by restricting to indices with \(|x_{rs}|\ge\tau_n\), since the universal \(|x-v|\) term only matters in the first case:
\[
|S_{\tau_n}(x)-v|
\le
\Big(|x-v|+\tau_n\Big)\1(|x|\ge\tau_n)
+
|v|\1(|x|<\tau_n).
\]
Thus
\begin{equation*}
\sum_{s\neq r}|e_{rs}|
\le
\underbrace{\sum_{s\neq r}|x_{rs}-v_{rs}|\1(|x_{rs}|\ge\tau_n)}_{T_{1r}}
+
\underbrace{\tau_n\sum_{s\neq r}\1(|x_{rs}|\ge\tau_n)}_{T_{2r}}
+
\underbrace{\sum_{s\neq r}|v_{rs}|\1(|x_{rs}|<\tau_n)}_{T_{3r}}.
\end{equation*}

On \(\mathcal E_n\), if \(|x_{rs}|\ge\tau_n\), then \(|v_{rs}|\ge \tau_n/2\). Hence
\[
T_{1r}\le \frac{\tau_n}{2}\sum_{s\neq r}\1(|v_{rs}|\ge\tau_n/2),
\qquad
T_{2r}\le \tau_n\sum_{s\neq r}\1(|v_{rs}|\ge\tau_n/2).
\]
By weak row-sparsity,
\[
\sum_{s\neq r}\1(|v_{rs}|\ge\tau_n/2)
\le
\left(\frac{2}{\tau_n}\right)^\alpha\sum_{s\neq r}|v_{rs}|^\alpha
\le
C c_{n,p}\tau_n^{-\alpha}.
\]
Therefore,
\[
T_{1r}+T_{2r}\le C c_{n,p}\tau_n^{1-\alpha}.
\]

For \(T_{3r}\), on \(\mathcal E_n\), \(|x_{rs}|<\tau_n\) implies \(|v_{rs}|<3\tau_n/2\). Hence
\[
|v_{rs}|\1(|x_{rs}|<\tau_n)
\le
|v_{rs}|^\alpha \left(\frac{3\tau_n}{2}\right)^{1-\alpha},
\]
and summing gives
\[
T_{3r}\le C c_{n,p}\tau_n^{1-\alpha}.
\]
Combining,
\[
\sum_{s\neq r}|e_{rs}|
\le C c_{n,p}\tau_n^{1-\alpha}.
\]
The diagonal contribution is \(O(\tau_n)\), absorbed by the same order since \(c_{n,p}\ge 1\). Therefore
\[
\onenorm{\Vsoft-\Vtrue}\le C c_{n,p}\tau_n^{1-\alpha}.
\]

The operator norm bound follows exactly as in Theorem~\ref{thm:thresholding}:
\[
\twonorm{\Vsoft-\Vtrue}
\le
\onenorm{\Vsoft-\Vtrue}
=
O_p(c_{n,p}a_n^{1-\alpha}).
\]
For the Frobenius norm,
\[
\frac1{p_n}\frobnorm{\Vsoft-\Vtrue}^2
\le
\maxnorm{\Vsoft-\Vtrue}\,\onenorm{\Vsoft-\Vtrue}
=
O_p(a_n)\cdot O_p(c_{n,p}a_n^{1-\alpha})
=
O_p(c_{n,p}a_n^{2-\alpha}).
\]
This proves \eqref{eq:soft_op} and \eqref{eq:soft_frob}. \qed

\subsection{Proof of Theorem \ref{thm:tapering}}
\label{app:proof_tapering}

Let
\[
\mathbf E_n := \Vdb-\Vtrue.
\]
Then the tapering estimator satisfies
\[
\mathbf V^{\mathrm{tap}}-\Vtrue
=
\mathbf W^{(k_n)}\circ \mathbf E_n
+
\bigl(\mathbf W^{(k_n)}-\mathbf 1\mathbf 1^\T\bigr)\circ \Vtrue.
\]
We bound the two terms separately.

By Theorem~\ref{thm:maxnorm_rate}, \(\maxnorm{\mathbf E_n}=O_p(a_n)\). On the corresponding high-probability event, for each row \(r\),
\[
\sum_{s=1}^{p_n}\left|w_{rs}^{(k_n)} E_{n,rs}\right|
\le
\max_{r,s}|E_{n,rs}|\sum_{s=1}^{p_n} w_{rs}^{(k_n)}
\le
C_w k_n \maxnorm{\mathbf E_n}
\le
C k_n a_n.
\]
Hence
\begin{equation}
\onenorm{\mathbf W^{(k_n)}\circ \mathbf E_n}
\le C k_n a_n
\quad\text{and thus}\quad
\twonorm{\mathbf W^{(k_n)}\circ \mathbf E_n}\le C k_n a_n,
\label{eq:taper_estimation_op}
\end{equation}
using symmetry.

For the Frobenius norm, each row has at most \(C_w k_n\) nonzero taper weights and \(0\le w_{rs}^{(k_n)}\le 1\), so
\begin{equation}
\frac1{p_n}\frobnorm{\mathbf W^{(k_n)}\circ \mathbf E_n}^2
\le
\frac1{p_n}\cdot p_n \cdot C_w k_n \cdot \maxnorm{\mathbf E_n}^2
=
O_p(k_n a_n^2).
\label{eq:taper_estimation_frob}
\end{equation}

By Definition~\ref{def:tapering}, \(w_{rs}^{(k_n)}=1\) whenever \(|r-s|\le k_n/2\), so the approximation error is supported on \(|r-s|>k_n/2\). Since \(0\le w_{rs}^{(k_n)}\le 1\),
\[
\left|
\left[(\mathbf W^{(k_n)}-\mathbf 1\mathbf 1^\T)\circ \Vtrue\right]_{rs}
\right|
\le
|V_{rs}|\,\1(|r-s|>k_n/2).
\]
Therefore, by Assumption~\ref{ass:bandable},
\begin{align*}
\onenorm{(\mathbf W^{(k_n)}-\mathbf 1\mathbf 1^\T)\circ \Vtrue}
&\le
\max_{1\le r\le p_n}\sum_{|r-s|>k_n/2}|V_{rs}|
\notag\\
&\le
C_{\mathrm{bd}}\left(\frac{k_n}{2}\right)^{-\nu}
\le
C k_n^{-\nu}.
\end{align*}
Hence
\begin{equation}
\twonorm{(\mathbf W^{(k_n)}-\mathbf 1\mathbf 1^\T)\circ \Vtrue}
\le
C k_n^{-\nu}.
\label{eq:taper_approx_op}
\end{equation}

Similarly, by \eqref{eq:bandable_tail_l2},
\begin{align}
\frac1{p_n}\frobnorm{(\mathbf W^{(k_n)}-\mathbf 1\mathbf 1^\T)\circ \Vtrue}^2
&\le
\frac1{p_n}\sum_{r=1}^{p_n}\sum_{|r-s|>k_n/2}V_{rs}^2
\notag\\
&\le
C_{\mathrm{bd},2}\left(\frac{k_n}{2}\right)^{-2\nu}
\le
C k_n^{-2\nu}.
\label{eq:taper_approx_frob}
\end{align}

Using the triangle inequality with \eqref{eq:taper_estimation_op} and \eqref{eq:taper_approx_op},
\[
\twonorm{\mathbf V^{\mathrm{tap}}-\Vtrue}
\le
\twonorm{\mathbf W^{(k_n)}\circ \mathbf E_n}
+
\twonorm{(\mathbf W^{(k_n)}-\mathbf 1\mathbf 1^\T)\circ \Vtrue}
=
O_p(k_n a_n + k_n^{-\nu}),
\]
which proves \eqref{eq:taper_op}.

For the Frobenius norm, by \((a+b)^2\le 2a^2+2b^2\), together with \eqref{eq:taper_estimation_frob} and \eqref{eq:taper_approx_frob},
\[
\frac1{p_n}\frobnorm{\mathbf V^{\mathrm{tap}}-\Vtrue}^2
=
O_p(k_n a_n^2 + k_n^{-2\nu}),
\]
proving \eqref{eq:taper_frob}.

Finally, the balancing choice \(k_n\asymp a_n^{-1/(\nu+1)}\) equalizes \(k_n a_n\) and \(k_n^{-\nu}\), yielding
\[
\twonorm{\mathbf V^{\mathrm{tap}}-\Vtrue}
=
O_p\!\left(a_n^{\nu/(\nu+1)}\right).
\]
This completes the proof. \qed

\section{Proofs for the explicit mean-term bounds and additional refinements}

\subsection{Proof of Proposition \ref{prop:Bmu_generic}}
\label{app:proof_Bmu_generic}

Recall
\[
\mathbf B_{\mu,n}
=
\sum_{|k|<\ell_n} K\!\left(\frac{k}{\ell_n}\right)
\left[
\frac{1}{n}\sum_{t=m_n h_n+|k|+1}^{n} \mathbf M_t \mathbf M_{t-|k|}^\T
\right].
\]
Fix any entry \((r,s)\). By the triangle inequality and \(|K(\cdot)|\le \norm{K}_{\infty}\),
\begin{align*}
|B_{\mu,n}[r,s]|
&\le
\sum_{|k|<\ell_n}\norm{K}_{\infty}
\cdot
\frac{1}{n}\sum_{t=m_n h_n+|k|+1}^{n}
|M_t^{(r)} M_{t-|k|}^{(s)}| \\
&\le
\sum_{|k|<\ell_n}\norm{K}_{\infty}
\cdot
\frac{1}{2n}\sum_{t=m_n h_n+|k|+1}^{n}
\left\{ |M_t^{(r)}|^2 + |M_{t-|k|}^{(s)}|^2 \right\}\\
&\le
\sum_{|k|<\ell_n}\norm{K}_{\infty}
\cdot
\frac{1}{n}\sum_{u=m_n h_n+1}^{n}\norm{\mathbf M_u}_{\max}^2 \\
&=
(2\ell_n-1)\norm{K}_{\infty}\,\bar M_{2,n}.
\end{align*}
Taking the maximum over \((r,s)\) proves proposition~\ref{prop:Bmu_generic}. \qed

\subsection{Proof of Proposition \ref{prop:Rmu_generic}}
\label{app:proof_Rmu_generic}

By Assumption~\ref{ass:cross_bernstein}, conditional on \(\{\bm\mu_t\}_{t=1}^n\),
\[
\Prob\!\left(
\maxnorm{\mathbf R_{\mu,n}} > t
\ \middle|\ \{\bm\mu_t\}
\right)
\le
4p_n^2 \exp\!\left[
-c_3 n \min\!\left\{
\frac{t^2}{L_n^2 \bar M_{2,n}},
\frac{t}{L_n \bar M_{\infty,n}}
\right\}
\right].
\]
We now choose \(t\) to offset the factor \(p_n^2\).

Let
\[
t_1 := A_1 \sqrt{\bar M_{2,n}}\,L_n\sqrt{\frac{\log p_n}{n}},
\qquad
t_2 := A_2 \bar M_{\infty,n}\frac{L_n\log p_n}{n},
\]
where \(A_1,A_2>0\) are large constants. Then
\[
\frac{n t_1^2}{L_n^2 \bar M_{2,n}} = A_1^2 \log p_n,
\qquad
\frac{n t_2}{L_n \bar M_{\infty,n}} = A_2 \log p_n.
\]
By the union-bound-adjusted Bernstein form, choosing \(A_1,A_2\) sufficiently large gives
\[
\Prob\!\left(
\maxnorm{\mathbf R_{\mu,n}} > C(t_1+t_2)
\ \middle|\ \{\bm\mu_t\}
\right)
\to 0
\]
for a universal constant \(C\). Integrating out the conditioning yields
\[
\maxnorm{\mathbf R_{\mu,n}}
=
O_p(t_1+t_2),
\]
which is exactly \eqref{eq:Rmu_generic_bound}. The choice \eqref{eq:rmu_generic_choice} follows immediately. \qed

\subsection{Proof of Proposition \ref{prop:holder_mean_bounds}}
\label{app:proof_holder_mean_bounds}

We prove the claims in order.
Using \(\sum_{j=0}^{m_n} d_{n,j}=0\),
\begin{equation}
\mathbf M_t
=
\sum_{j=0}^{m_n} d_{n,j}\bm\mu_{t-jh_n}
=
\sum_{j=1}^{m_n} d_{n,j}\left(\bm\mu_{t-jh_n}-\bm\mu_t\right).
\label{eq:Mt_zero_sum_repr}
\end{equation}

Let
\[
\mathcal B_n
:=
\left\{
t\in\{m_n h_n+1,\ldots,n\}:
[t-m_n h_n,t]\ \text{crosses at least one partition boundary } \eta_j n
\right\}.
\]
Each boundary can contaminate at most \(m_n h_n+1\) indices, so
\begin{equation}
|\mathcal B_n|\le C J_n m_n h_n.
\label{eq:Bn_cardinality}
\end{equation}
For \(t\notin\mathcal B_n\), all indices \(t-jh_n\), \(j=0,\ldots,m_n\), lie in the same H\"older segment.

For \(t\notin\mathcal B_n\), by \eqref{eq:Mt_zero_sum_repr} and the H\"older condition,
\begin{align*}
\norm{\mathbf M_t}_{\max}
&\le
\sum_{j=1}^{m_n}|d_{n,j}|\,
\norm{\bm\mu_{t-jh_n}-\bm\mu_t}_{\max}
\notag\\
&\le
L_\mu\sum_{j=1}^{m_n}|d_{n,j}|\left(\frac{j h_n}{n}\right)^\beta
=
L_\mu C_{d,\beta}\left(\frac{h_n}{n}\right)^\beta.
\end{align*}

For any \(t\), using \(\sup_u\norm{\bm\mu_u}_{\max}\le M_\mu\),
\[
\norm{\mathbf M_t}_{\max}
\le
\sum_{j=0}^{m_n}|d_{n,j}|\norm{\bm\mu_{t-jh_n}}_{\max}
\le
C_{d,1}M_\mu.
\]

By splitting the sum over \(t\) into \(\mathcal B_n\) and \(\mathcal B_n^c\),
\begin{align*}
\bar M_{2,n}
&=
\frac1n\sum_{t=m_n h_n+1}^{n}\norm{\mathbf M_t}_{\max}^2 \\
&\le
\frac1n\sum_{t\notin\mathcal B_n}
\left\{
L_\mu C_{d,\beta}\left(\frac{h_n}{n}\right)^\beta
\right\}^2
+
\frac1n\sum_{t\in\mathcal B_n}
\left(C_{d,1}M_\mu\right)^2 \\
&\le
L_\mu^2 C_{d,\beta}^2\left(\frac{h_n}{n}\right)^{2\beta}
+
C_{d,1}^2M_\mu^2\,\frac{|\mathcal B_n|}{n}.
\end{align*}
Using \eqref{eq:Bn_cardinality} gives \eqref{eq:Mbar2_holder}.

Taking maxima in the good/bad bounds,
\[
\bar M_{\infty,n}
\le
\max\!\left\{
L_\mu C_{d,\beta}\left(\frac{h_n}{n}\right)^\beta,\,
C_{d,1}M_\mu
\right\}
\le
C\left\{
C_{d,\beta}\left(\frac{h_n}{n}\right)^\beta + C_{d,1}M_\mu
\right\},
\]
which proves \eqref{eq:Mbarinf_holder}.

Equation \eqref{eq:Bmu_holder} follows immediately from Proposition~\ref{prop:Bmu_generic} and \eqref{eq:Mbar2_holder}. Equation \eqref{eq:Rmu_holder} follows from Proposition~\ref{prop:Rmu_generic}, \eqref{eq:Mbar2_holder}, and \eqref{eq:Mbarinf_holder}. \qed

\subsection{Proof of Proposition \ref{prop:bv_mean_bounds}}
\label{app:proof_bv_mean_bounds}

Let \(\Delta\bm\mu_u:=\bm\mu_u-\bm\mu_{u-1}\) for \(u=2,\ldots,n\). By Assumption~\ref{ass:mean_bv},
\[
\sum_{u=2}^{n}\norm{\Delta\bm\mu_u}_{\max}
\le
\sum_{u=2}^{n}\max_{1\le r\le p_n}|\mu_u^{(r)}-\mu_{u-1}^{(r)}|
\le
\max_{1\le r\le p_n}\sum_{u=2}^{n}|\mu_u^{(r)}-\mu_{u-1}^{(r)}|
\le V_{\mu,n}.
\]

Using \(\sum_{j=0}^{m_n}d_{n,j}=0\) as in \eqref{eq:Mt_zero_sum_repr},
\[
\norm{\mathbf M_t}_{\max}
\le
\sum_{j=1}^{m_n}|d_{n,j}|\,\norm{\bm\mu_t-\bm\mu_{t-jh_n}}_{\max}
\le
C_{d,1}\max_{1\le j\le m_n}\norm{\bm\mu_t-\bm\mu_{t-jh_n}}_{\max}.
\]
For each \(j\le m_n\),
\[
\norm{\bm\mu_t-\bm\mu_{t-jh_n}}_{\max}
\le
\sum_{u=t-jh_n+1}^{t}\norm{\Delta\bm\mu_u}_{\max}
\le
\sum_{u=t-m_n h_n+1}^{t}\norm{\Delta\bm\mu_u}_{\max}.
\]
Hence
\begin{equation*}
\norm{\mathbf M_t}_{\max}
\le
C_{d,1}\sum_{u=t-m_n h_n+1}^{t}\norm{\Delta\bm\mu_u}_{\max}.
\end{equation*}

Squaring and using Cauchy--Schwarz,
\[
\left(\sum_{u=t-m_n h_n+1}^{t}a_u\right)^2
\le
(m_n h_n)\sum_{u=t-m_n h_n+1}^{t} a_u^2,
\]
with \(a_u=\norm{\Delta\bm\mu_u}_{\max}\). Therefore
\[
\norm{\mathbf M_t}_{\max}^2
\le
C_{d,1}^2 (m_n h_n)\sum_{u=t-m_n h_n+1}^{t}\norm{\Delta\bm\mu_u}_{\max}^2.
\]
Summing over \(t\) and dividing by \(n\), each \(\norm{\Delta\bm\mu_u}_{\max}^2\) is counted at most \(m_n h_n\) times, so
\begin{align*}
\bar M_{2,n}
&\le
\frac{C_{d,1}^2 (m_n h_n)^2}{n}\sum_{u=2}^{n}\norm{\Delta\bm\mu_u}_{\max}^2 \\
&\le
\frac{C_{d,1}^2 (m_n h_n)^2}{n}
\left(\sum_{u=2}^{n}\norm{\Delta\bm\mu_u}_{\max}\right)^2
\le
C\,C_{d,1}^2 (m_n h_n)^2\frac{V_{\mu,n}^2}{n},
\end{align*}
which proves \eqref{eq:Mbar2_bv}.

Next, \(\bar M_{\infty,n}\le \sum_{j=0}^{m_n}|d_{n,j}|\sup_t\norm{\bm\mu_t}_{\max}\le C_{d,1}M_\mu\), and enlarging constants yields \eqref{eq:Mbarinf_bv}. The bounds \eqref{eq:Bmu_bv} and \eqref{eq:Rmu_bv} then follow from Propositions~\ref{prop:Bmu_generic} and \ref{prop:Rmu_generic}. \qed

\section{A unified sufficient condition for Assumptions \ref{ass:oracle_bias}, \ref{ass:bernstein} and \ref{ass:cross_bernstein}}

\begin{assumption}[A unified sufficient condition]
\label{ass:suff_unified}
Assume that the noise process admits a stable linear representation
\[
\mathbf Z_t=\sum_{a=0}^{\infty}\mathbf A_a\,\boldsymbol\varepsilon_{t-a},
\qquad t\in\mathbb Z,
\]
where the innovations $\{\boldsymbol\varepsilon_t\}_{t\in\mathbb Z}$ are i.i.d.\ in
$\mathbb R^{p_n}$ and satisfy the following conditions.

\begin{enumerate}
\item[(U1)] (\emph{Independent sub-Gaussian coordinates})
For each $t$, the coordinates of $\boldsymbol\varepsilon_t
=(\varepsilon_{t,1},\ldots,\varepsilon_{t,p_n})^\top$ are independent,
$\E(\boldsymbol\varepsilon_t)=\mathbf 0$, and
\[
\max_{1\le j\le p_n}\|\varepsilon_{0,j}\|_{\psi_2}\le K_\varepsilon<\infty.
\]

\item[(U2)] (\emph{Stable and polynomially decaying filter})
There exist constants $K_A<\infty$, $\widetilde K_A<\infty$, and $\eta>0$ such that
\[
\sum_{a=0}^{\infty}\|\mathbf A_a\|_{\mathrm{op}}\le K_A,
\]
and
\[
\|\mathbf A_a\|_{\mathrm{op}}
\le \widetilde K_A (1+a)^{-(q_0+2+\eta)},
\qquad a\ge0.
\]

\item[(U3)] (\emph{Kernel and difference weights})
The kernel is bounded on $[-1,1]$:
\[
\sup_{|x|\le1}|K(x)|\le K_0,
\]
and the difference sequence satisfies
\[
\sum_{j=0}^{m_n}|d_{n,j}|\le K_d
\]
in addition to \eqref{eq:diff_normalized}.

\item[(U4)] (\emph{Bandwidth/spacing rates})
\[
\ell_n=o(h_n),\qquad
\frac{h_n^{\,q_0+1+\eta}}{\ell_n^{\,q_0+1}}\to\infty,
\qquad
\ell_n\le h_n.
\]
\end{enumerate}
\end{assumption}

\begin{lemma}
\label{lem:basic_unified}
Under Assumption~\ref{ass:suff_unified}, for every fixed $q\ge4$, the following hold.

\begin{enumerate}
\item[(i)] (\emph{Uniform moment bound})
There exists a constant $C_q>0$, depending only on
$q,K_\varepsilon,K_A$, such that
\[
\max_{1\le r\le p_n}\|Z_{0,r}\|_q\le C_q.
\]

\item[(ii)] (\emph{Physical dependence bound})
Let $\mathbf Z_k^{(0)}$ be the coupled version obtained by replacing
$\boldsymbol\varepsilon_0$ with an i.i.d.\ copy $\boldsymbol\varepsilon_0'$.
Then
\[
\delta_q(k):=\max_{1\le r\le p_n}\|Z_{k,r}-Z_{k,r}^{(0)}\|_q
\le C_q (1+k)^{-(q_0+2+\eta)},\qquad k\ge1.
\]

\item[(iii)] (\emph{Covariance decay})
Let $\Gamma_k=\E(\mathbf Z_0\mathbf Z_k^\top)$. Then there exists $C_\Gamma>0$ such that
\[
\|\Gamma_k\|_{\max}\le C_\Gamma (1+|k|)^{-(q_0+2+\eta)},
\qquad k\in\mathbb Z.
\]
Consequently,
\[
\sum_{k\in\mathbb Z}\|\Gamma_k\|_{\max}<\infty,
\qquad
\sum_{k\in\mathbb Z}(1+|k|)^{q_0}\|\Gamma_k\|_{\max}<\infty.
\]
\end{enumerate}
\end{lemma}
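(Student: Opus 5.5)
The three parts rest on one device: each coordinate of $\mathbf Z_t$ is a linear combination of independent sub-Gaussian variables with square-summable coefficients. Write $Z_{t,r}=\sum_{a\ge0}\mathbf e_r^\T\mathbf A_a\boldsymbol\varepsilon_{t-a}=\sum_{a\ge0}\sum_{j}(\mathbf A_a)_{rj}\varepsilon_{t-a,j}$. The innovations $\{\varepsilon_{t-a,j}\}_{a,j}$ are independent across both $a$ and $j$, by (U1) and the i.i.d.\ assumption in time. Hence $Z_{t,r}$ is a weighted sum of independent centred sub-Gaussian variables.

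For (i), apply the general Hoeffding inequality for sub-Gaussian sums (e.g.\ Vershynin's Proposition 2.6.1). It gives
\[
\|Z_{0,r}\|_{\psi_2}^2\le C K_\varepsilon^2\sum_{a\ge0}\sum_j (\mathbf A_a)_{rj}^2=CK_\varepsilon^2\sum_{a\ge0}\|\mathbf e_r^\T\mathbf A_a\|_2^2 .
\]
Using $\|\mathbf e_r^\T\mathbf A_a\|_2\le\|\mathbf A_a\|_{\op}$, this is at most $CK_\varepsilon^2\sum_a\|\mathbf A_a\|_{\op}^2\le CK_\varepsilon^2K_A^2$, since $\sum_a\|\mathbf A_a\|_{\op}^2\le(\sum_a\|\mathbf A_a\|_{\op})^2$. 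The $\psi_2$ norm controls all $L^q$ norms, with $\|X\|_q\le C\sqrt q\,\|X\|_{\psi_2}$. This yields $C_q$ depending only on $q,K_\varepsilon,K_A$, uniformly in $r$ and $p_n$. The infinite series converges in $L^2$, and hence almost surely by independence, so this limiting step is harmless.

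For (ii), note that replacing $\boldsymbol\varepsilon_0$ by $\boldsymbol\varepsilon_0'$ changes only the $a=k$ term, so
\[
Z_{k,r}-Z_{k,r}^{(0)}=\mathbf e_r^\T\mathbf A_k(\boldsymbol\varepsilon_0-\boldsymbol\varepsilon_0').
\]
This is again a linear form in independent sub-Gaussian coordinates, namely the $2p_n$ entries of $\boldsymbol\varepsilon_0$ and $\boldsymbol\varepsilon_0'$. The same Hoeffding bound gives $\psi_2$ norm at most $CK_\varepsilon\|\mathbf e_r^\T\mathbf A_k\|_2\le CK_\varepsilon\|\mathbf A_k\|_{\op}$. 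Then (U2) gives $\delta_q(k)\le C\sqrt q\,K_\varepsilon\widetilde K_A(1+k)^{-(q_0+2+\eta)}$, which is the claim.

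For (iii), compute directly. For $k\ge0$, with $\Sigma_\varepsilon=\E\boldsymbol\varepsilon_0\boldsymbol\varepsilon_0^\T$ diagonal with entries bounded by $CK_\varepsilon^2$, independence gives
\[
\Gamma_k=\E(\mathbf Z_0\mathbf Z_k^\T)=\sum_{a\ge0}\mathbf A_a\Sigma_\varepsilon\mathbf A_{a+k}^\T .
\]
Therefore
\[
\|\Gamma_k\|_{\max}\le\|\Gamma_k\|_{\op}\le CK_\varepsilon^2\sum_{a\ge0}\|\mathbf A_a\|_{\op}\|\mathbf A_{a+k}\|_{\op}.
\]
Bound $\|\mathbf A_{a+k}\|_{\op}\le\widetilde K_A(1+a+k)^{-(q_0+2+\eta)}\le\widetilde K_A(1+k)^{-(q_0+2+\eta)}$. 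The remaining sum is $\sum_a\|\mathbf A_a\|_{\op}\le K_A$, which gives the stated decay with $C_\Gamma=CK_\varepsilon^2K_A\widetilde K_A$. For $k<0$, use $\Gamma_{k}=\Gamma_{-k}^\T$. The two summability statements follow because $(1+|k|)^{q_0}(1+|k|)^{-(q_0+2+\eta)}=(1+|k|)^{-2-\eta}$ is summable over $\mathbb Z$.

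The only step needing care is the dimension-free uniformity in (i) and (ii). The key is to bound $\|\mathbf e_r^\T\mathbf A_a\|_2$ by the operator norm rather than by row $\ell_1$ norms, so that no factor of $p_n$ enters. Independence of coordinates in (U1) is exactly what makes the Hoeffding-type $\psi_2$ bound available. Everything else is routine. The restriction $q\ge4$ is not needed for these bounds and only matters for later use.
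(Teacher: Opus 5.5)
Your proposal is correct and follows essentially the paper's argument. Parts (ii) and (iii) are identical: the one-term coupling difference $\mathbf e_r^\top\mathbf A_k(\boldsymbol\varepsilon_0-\boldsymbol\varepsilon_0')$, and $\Gamma_k=\sum_{a}\mathbf A_a\Sigma_\varepsilon\mathbf A_{a+k}^\top$ bounded through operator norms. The only variation is in (i): you apply the general Hoeffding bound to the whole double sum over $(a,j)$, which gives a $\psi_2$ bound needing only $\sum_a\|\mathbf A_a\|_{\mathrm{op}}^2<\infty$, whereas the paper bounds each lag term $\mathbf e_r^\top\mathbf A_a\boldsymbol\varepsilon_{-a}$ in $L^q$ separately and sums via Minkowski using $\sum_a\|\mathbf A_a\|_{\mathrm{op}}\le K_A$; both routes give a dimension-free constant.
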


\begin{proof}
We prove the three statements in turn.

\medskip\noindent
\textbf{Proof of (i).}
Fix $r$. Since
\[
Z_{0,r}=\sum_{a=0}^{\infty} e_r^\top \mathbf A_a \boldsymbol\varepsilon_{-a},
\]
by Minkowski's inequality,
\[
\|Z_{0,r}\|_q
\le
\sum_{a=0}^{\infty}
\|e_r^\top \mathbf A_a \boldsymbol\varepsilon_{-a}\|_q.
\]
Because the coordinates of $\boldsymbol\varepsilon_{-a}$ are independent and uniformly
sub-Gaussian, every linear form is sub-Gaussian, hence for each fixed $q\ge2$,
\[
\|e_r^\top \mathbf A_a \boldsymbol\varepsilon_{-a}\|_q
\le C\sqrt q\,K_\varepsilon\,\| \mathbf A_a^\top e_r\|_2
\le C\sqrt q\,K_\varepsilon\,\|\mathbf A_a\|_{\mathrm{op}}.
\]
Therefore,
\[
\|Z_{0,r}\|_q
\le
C\sqrt q\,K_\varepsilon \sum_{a=0}^{\infty}\|\mathbf A_a\|_{\mathrm{op}}
\le C\sqrt q\,K_\varepsilon K_A.
\]
Taking the maximum over $r$ proves (i).

\medskip\noindent
\textbf{Proof of (ii).}
For $k\ge1$,
\[
\mathbf Z_k-\mathbf Z_k^{(0)}
=
\mathbf A_k(\boldsymbol\varepsilon_0-\boldsymbol\varepsilon_0').
\]
Hence, for each coordinate $r$,
\[
Z_{k,r}-Z_{k,r}^{(0)}
=
e_r^\top \mathbf A_k(\boldsymbol\varepsilon_0-\boldsymbol\varepsilon_0').
\]
Applying again the linear-form bound for independent sub-Gaussian coordinates yields
\[
\|Z_{k,r}-Z_{k,r}^{(0)}\|_q
\le
C\sqrt q\,K_\varepsilon\,\|\mathbf A_k\|_{\mathrm{op}}.
\]
Using (U2),
\[
\delta_q(k)
\le
C\sqrt q\,K_\varepsilon\,\widetilde K_A (1+k)^{-(q_0+2+\eta)}.
\]
This proves (ii).

\medskip\noindent
\textbf{Proof of (iii).}
Let
\[
\Sigma_\varepsilon:=\E(\boldsymbol\varepsilon_0\boldsymbol\varepsilon_0^\top).
\]
Since the coordinates of $\boldsymbol\varepsilon_0$ are independent and centered,
$\Sigma_\varepsilon$ is diagonal, and
\[
\|\Sigma_\varepsilon\|_{\mathrm{op}}
=
\max_{1\le j\le p_n}\Var(\varepsilon_{0,j})
\le
C K_\varepsilon^2.
\]
For $k\ge0$, by the linear-process representation and independence across times,
\[
\Gamma_k
=
\sum_{a=0}^{\infty}
\mathbf A_a \Sigma_\varepsilon \mathbf A_{a+k}^\top.
\]
Therefore,
\[
\|\Gamma_k\|_{\max}
\le
\|\Gamma_k\|_{\mathrm{op}}
\le
\sum_{a=0}^{\infty}
\|\mathbf A_a\|_{\mathrm{op}}\,
\|\Sigma_\varepsilon\|_{\mathrm{op}}\,
\|\mathbf A_{a+k}\|_{\mathrm{op}}.
\]
Using $\sup_{a\ge0}\|\mathbf A_{a+k}\|_{\mathrm{op}}
\le \widetilde K_A(1+k)^{-(q_0+2+\eta)}$ and
$\sum_{a\ge0}\|\mathbf A_a\|_{\mathrm{op}}\le K_A$, we get
\[
\|\Gamma_k\|_{\max}
\le
C K_\varepsilon^2 K_A \widetilde K_A (1+k)^{-(q_0+2+\eta)}.
\]
For negative $k$, use $\Gamma_{-k}=\Gamma_k^\top$, so the same bound holds with $|k|$.
Hence
\[
\|\Gamma_k\|_{\max}\le C_\Gamma (1+|k|)^{-(q_0+2+\eta)}.
\]
Since $q_0+2+\eta>q_0+1$, both series
\[
\sum_{k\in\mathbb Z}\|\Gamma_k\|_{\max},
\qquad
\sum_{k\in\mathbb Z}(1+|k|)^{q_0}\|\Gamma_k\|_{\max}
\]
converge. This proves (iii).
\end{proof}

\begin{proposition}[Entrywise Bernstein concentration for the oracle estimator]
\label{prop:bernstein_unified}
Under Assumption~\ref{ass:suff_unified}, there exist constants $c_1,c_2>0$
depending only on $K_\varepsilon,K_A,K_0,K_d$ such that for all sufficiently large $n$,
all $1\le r,s\le p_n$, and all $t>0$,
\[
\Prob\!\left(\big|\Vor[r,s]-\E\Vor[r,s]\big|>t\right)
\le
2\exp\!\left[
-c_1 n \min\!\left\{
\frac{t^2}{\ell_n},\,
\frac{t}{\ell_n}
\right\}
\right].
\]
In particular, Assumption~\ref{ass:bernstein} holds, because
\[
\ell_n\le \ell_n+m_n h_n.
\]
\end{proposition}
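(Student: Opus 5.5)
The plan is to write $\Vor[r,s]$ as a quadratic form in the innovations and apply a Hanson--Wright-type inequality. Using the linear representation in Assumption~\ref{ass:suff_unified}, each difference noise coordinate is
\[
U_{t,r}=\sum_{j=0}^{m_n}d_{n,j}Z_{t-jh_n,r}=\sum_{b}\sum_{c=1}^{p_n}\beta^{(r)}_{t,b,c}\,\varepsilon_{b,c},
\qquad
\beta^{(r)}_{t,b,c}=\sum_{j=0}^{m_n} d_{n,j}\,(\mathbf A_{t-jh_n-b})_{rc},
\]
with $\mathbf A_a=\mathbf 0$ for $a<0$. Stack all innovations into one long vector $\bm\xi$; its coordinates are independent, centered and uniformly sub-Gaussian by (U1). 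Then
\[
\Vor[r,s]=\frac1n\sum_{|k|<\ell_n}K(k/\ell_n)\sum_{t}U_{t,r}U_{t-|k|,s}=\bm\xi^\T \mathbf Q^{(r,s)}\bm\xi .
\]
Since the representation is infinite, I will first truncate the filter at lag $a\le n$ and control the tail in $L^2$ through the decay in (U2). The truncation error is deterministic-small and can be absorbed for $t$ above a negligible level, or handled by a limiting argument, since Hanson--Wright constants do not depend on the dimension of $\bm\xi$.

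Hanson--Wright then gives
\[
\Prob\bigl(|\bm\xi^\T\mathbf Q\bm\xi-\E\bm\xi^\T\mathbf Q\bm\xi|>t\bigr)\le 2\exp\bigl[-c\min\{t^2/(K_\varepsilon^4\frobnorm{\mathbf Q}^2),\,t/(K_\varepsilon^2\norm{\mathbf Q}_{\op})\}\bigr].
\]
So it suffices to show $\frobnorm{\mathbf Q}^2\lesssim \ell_n/n$ and $\norm{\mathbf Q}_{\op}\lesssim \ell_n/n$.

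For the operator norm I will write $\mathbf Q=n^{-1}\sum_{|k|<\ell_n}K(k/\ell_n)\mathbf B_r^\T\mathbf S_k\mathbf B_s$ (symmetrized). Here $\mathbf B_r$ maps $\bm\xi$ to the sequence $(U_{t,r})_t$, and $\mathbf S_k$ is the lag-$k$ shift. Then $\norm{\mathbf Q}_{\op}\le n^{-1}(2\ell_n-1)K_0\norm{\mathbf B_r}_{\op}\norm{\mathbf B_s}_{\op}$. The map $\mathbf B_r$ is a composition of the difference filter, with $\ell^2$-gain at most $\sum_j|d_{n,j}|\le K_d$, and the row-$r$ part of the convolution with $\{\mathbf A_a\}$. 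The latter has operator norm at most $\sum_a\norm{\mathbf A_a}_{\op}\le K_A$ (a block-Toeplitz/Young bound). Hence $\norm{\mathbf Q}_{\op}\le C\ell_n/n$.

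The Frobenius norm is the \emph{main obstacle}. The crude bound $\frobnorm{\mathbf Q}^2\le \mathrm{rank}\cdot\norm{\mathbf Q}^2_{\op}$ gives $n\cdot\ell_n^2/n^2=\ell_n^2/n$, which is too large by a factor $\ell_n$. Instead I would use the identity
\[
\frobnorm{\mathbf Q}^2\asymp \Var(\bm\xi^\T\mathbf Q\bm\xi)\ \text{(up to fourth-cumulant terms bounded by } K_\varepsilon\text{)},
\]
and bound $\Var(\Vor[r,s])$ directly. The standard argument for kernel long-run variance estimators, namely the cumulant/Gaussian-part expansion $\Cov(U_{t,r}U_{t-k,s},U_{t',r}U_{t'-k',s})$ summed over $t,t'$ and $k,k'$, gives $\Var\lesssim \ell_n/n$. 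This needs the absolute summability $\sum_k\norm{\Gamma_k^{U}}_{\max}<\infty$ of the difference process's autocovariances. That summability follows from Lemma~\ref{lem:basic_unified}(iii) together with $\sum_j|d_{n,j}|\le K_d$, and the fact that the cross-lag terms $\Gamma_{k+(j-j')h_n}$ are summable in $k$ uniformly. The fourth-cumulant part is controlled by $\sum_a\norm{\mathbf A_a}_{\op}^{2}$ and the independence of coordinates.

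Alternatively, and more cleanly, I would compute $\frobnorm{\mathbf Q}^2=n^{-2}\sum_{k,k'}K K'\,\tr(\mathbf B_s^\T\mathbf S_k^\T\mathbf B_r\mathbf B_r^\T\mathbf S_{k'}\mathbf B_s)$ and note that $\mathbf B_r\mathbf B_r^\T$ is the (Toeplitz) autocovariance matrix of $U_{\cdot,r}$ up to $\Sigma_\varepsilon$-scaling. Its entries are summable uniformly, so each trace is $O(n)$ times a summable-in-$(k-k')$ sequence, yielding $O(\ell_n/n)$. Plugging both norm bounds into Hanson--Wright gives the claimed inequality with $c_1$ depending only on $K_\varepsilon,K_A,K_0,K_d$. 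The final remark then follows from $\ell_n\le\ell_n+m_nh_n$, since the right side is monotone in the denominator.
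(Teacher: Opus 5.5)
Your architecture matches the paper's proof. You write the truncated oracle entry as the symmetrized form $n^{-1}\bm\xi_M^\top\mathbf L_{r,M}^\top\mathbf H_n\mathbf L_{s,M}\bm\xi_M$ with $\mathbf L_{r,M}=\mathbf F_n\mathbf G_{r,M}$, bound $\norm{\mathbf F_n}_{\op}\le K_d$, $\norm{\mathbf G_{r,M}}_{\op}\le K_A$ and $\norm{\mathbf H_n}_{\op}\le C\ell_n$, apply Hanson--Wright, and remove the truncation by letting $M\to\infty$ for fixed $n$.

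You diverge only at the Frobenius step, and that step is not an obstacle. Put the Frobenius norm on the $n\times n$ kernel matrix instead of bounding $\mathbf Q$ through its rank:
$\frobnorm{\mathbf L_{r,M}^\top\mathbf H_n\mathbf L_{s,M}}\le\norm{\mathbf L_{r,M}}_{\op}\frobnorm{\mathbf H_n}\norm{\mathbf L_{s,M}}_{\op}$.
Each row of $\mathbf H_n$ has at most $2\ell_n-1$ nonzero entries bounded by $K_0$, so $\frobnorm{\mathbf H_n}^2\le K_0^2\,n(2\ell_n-1)$. This gives $\frobnorm{\mathbf Q}^2\lesssim\ell_n/n$ in one line, which is exactly what the paper does.

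Your explicit trace computation is also valid. The entries of $\mathbf B_r\mathbf B_r^\top$ are dominated by a function of $t-t'$ with total mass at most $K_d^2K_A^2$, so the double sum over $k,k'$ is $O(n\ell_n)$. However, it re-derives by hand what submultiplicativity gives for free. Also, $\mathbf B_r\mathbf B_r^\top$ is the autocovariance under identity-covariance innovations, not a rescaling of the true one; this is harmless for the bound.

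Drop the variance route. The inequality $\frobnorm{\mathbf Q}^2\lesssim\Var(\bm\xi^\top\mathbf Q\bm\xi)$ would need innovation variances bounded below, which (U1) does not provide. Moreover, the diagonal contribution $\sum_i Q_{ii}^2\Var(\xi_i^2)$ can vanish (e.g.\ Rademacher coordinates), so the variance does not control $\frobnorm{\mathbf Q}$.

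Absorbing the truncation error ``for $t$ above a negligible level'' would not give the inequality for every $t>0$. Use the limiting argument instead, as the paper does: convergence in $L^1$ as $M\to\infty$, followed by a slack $\tau\downarrow 0$.
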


\begin{proof}
Fix $1\le r,s\le p_n$.
We first prove the result for a truncated linear process and then let the truncation level tend to infinity.

For $M\ge0$, define
\[
\mathbf Z_t^{(M)}:=\sum_{a=0}^{M}\mathbf A_a \boldsymbol\varepsilon_{t-a},
\qquad
\mathbf U_t^{(M)}:=\sum_{j=0}^{m_n} d_{n,j}\mathbf Z_{t-jh_n}^{(M)}.
\]
Let
\[
{\hat{\mathbf{V}}^{or(M)}}[r,s]
\]
be the oracle estimator obtained from $\{\mathbf U_t^{(M)}\}$ in place of $\{\mathbf U_t\}$.

Write
\[
\mathbf z_r^{(M)}:=(Z_{1,r}^{(M)},\ldots,Z_{n,r}^{(M)})^\top,
\qquad
\mathbf z_s^{(M)}:=(Z_{1,s}^{(M)},\ldots,Z_{n,s}^{(M)})^\top.
\]
Also let
\[
\mathbf u_r^{(M)}:=(U_{1,r}^{(M)},\ldots,U_{n,r}^{(M)})^\top,
\qquad
\mathbf u_s^{(M)}:=(U_{1,s}^{(M)},\ldots,U_{n,s}^{(M)})^\top.
\]

Define the $n\times n$ matrix
\[
(\mathbf H_n)_{tu}
:=
K\!\left(\frac{t-u}{\ell_n}\right)\mathbf 1\{|t-u|<\ell_n\},
\qquad 1\le t,u\le n.
\]
Then, up to the usual boundary convention (equivalently, by extending the series by zero
outside the valid range),
\[
{\hat{\mathbf{V}}^{or(M)}}[r,s]
=
\frac1n (\mathbf u_r^{(M)})^\top \mathbf H_n \mathbf u_s^{(M)}.
\]

Next define the $n\times n$ linear filter matrix
\[
(\mathbf F_n)_{tv}
:=
\sum_{j=0}^{m_n} d_{n,j}\,\mathbf 1\{v=t-jh_n\},
\qquad 1\le t,v\le n,
\]
so that
\[
\mathbf u_r^{(M)}=\mathbf F_n \mathbf z_r^{(M)},
\qquad
\mathbf u_s^{(M)}=\mathbf F_n \mathbf z_s^{(M)}.
\]

Now collect the innovations into the finite vector
\[
\boldsymbol\xi_M
:=
(\boldsymbol\varepsilon_{1-M}^\top,\ldots,\boldsymbol\varepsilon_n^\top)^\top
\in\mathbb R^{(n+M)p_n}.
\]
There exist matrices $\mathbf G_{r,M}$ and $\mathbf G_{s,M}$ such that
\[
\mathbf z_r^{(M)}=\mathbf G_{r,M}\boldsymbol\xi_M,
\qquad
\mathbf z_s^{(M)}=\mathbf G_{s,M}\boldsymbol\xi_M.
\]
Hence
\[
\mathbf u_r^{(M)}=\mathbf L_{r,M}\boldsymbol\xi_M,
\qquad
\mathbf u_s^{(M)}=\mathbf L_{s,M}\boldsymbol\xi_M,
\]
where
\[
\mathbf L_{r,M}:=\mathbf F_n\mathbf G_{r,M},
\qquad
\mathbf L_{s,M}:=\mathbf F_n\mathbf G_{s,M}.
\]
Therefore,
\[
{\hat{\mathbf{V}}^{or(M)}}[r,s]
=
\frac1n\,\boldsymbol\xi_M^\top \mathbf Q_M \boldsymbol\xi_M,
\]
where
\[
\mathbf Q_M
:=
\frac12\Big(
\mathbf L_{r,M}^\top \mathbf H_n \mathbf L_{s,M}
+
\mathbf L_{s,M}^\top \mathbf H_n^\top \mathbf L_{r,M}
\Big).
\]
The matrix $\mathbf Q_M$ is symmetric, so Hanson--Wright applies directly.

First, $\mathbf F_n$ satisfies
\[
\|\mathbf F_n\|_{\infty}
\le \sum_{j=0}^{m_n}|d_{n,j}|
\le K_d,
\qquad
\|\mathbf F_n\|_{1}
\le \sum_{j=0}^{m_n}|d_{n,j}|
\le K_d.
\]
Hence
\[
\|\mathbf F_n\|_{\mathrm{op}}
\le
\sqrt{\|\mathbf F_n\|_1\|\mathbf F_n\|_\infty}
\le K_d.
\]

Second, since $\mathbf H_n$ has at most $2\ell_n-1$ nonzero entries in each row and each column,
and each nonzero entry is bounded by $K_0$, we have
\[
\|\mathbf H_n\|_{\infty}\le (2\ell_n-1)K_0,
\qquad
\|\mathbf H_n\|_{1}\le (2\ell_n-1)K_0,
\]
so
\[
\|\mathbf H_n\|_{\mathrm{op}}
\le
\sqrt{\|\mathbf H_n\|_1\|\mathbf H_n\|_\infty}
\le C\ell_n.
\]
Moreover, $\mathbf H_n$ has at most $n(2\ell_n-1)$ nonzero entries, each of magnitude at most $K_0$,
so
\[
\|\mathbf H_n\|_F^2\le C n\ell_n.
\]

Third, we bound $\mathbf G_{r,M}$. Let
\[
x=(x_{1-M}^\top,\ldots,x_n^\top)^\top \in \mathbb R^{(n+M)p_n},
\qquad x_\tau\in\mathbb R^{p_n}.
\]
Then
\[
(\mathbf G_{r,M}x)_t=\sum_{a=0}^{M} e_r^\top \mathbf A_a x_{t-a},
\qquad 1\le t\le n.
\]
Hence, by the triangle inequality in $\ell_2$,
\[
\|\mathbf G_{r,M}x\|_2
=
\left\|
\sum_{a=0}^{M}
\big(e_r^\top\mathbf A_a x_{1-a},\ldots,e_r^\top\mathbf A_a x_{n-a}\big)^\top
\right\|_2
\le
\sum_{a=0}^{M}
\left(
\sum_{t=1}^{n}|e_r^\top\mathbf A_a x_{t-a}|^2
\right)^{1/2}.
\]
Since
\[
|e_r^\top \mathbf A_a x_{t-a}|
\le
\|\mathbf A_a\|_{\mathrm{op}}\|x_{t-a}\|_2,
\]
we obtain
\[
\|\mathbf G_{r,M}x\|_2
\le
\sum_{a=0}^{M}\|\mathbf A_a\|_{\mathrm{op}}
\left(
\sum_{t=1}^{n}\|x_{t-a}\|_2^2
\right)^{1/2}
\le
\left(\sum_{a=0}^{M}\|\mathbf A_a\|_{\mathrm{op}}\right)\|x\|_2
\le K_A \|x\|_2.
\]
Therefore,
\[
\|\mathbf G_{r,M}\|_{\mathrm{op}}\le K_A,
\qquad
\|\mathbf G_{s,M}\|_{\mathrm{op}}\le K_A.
\]
It follows that
\[
\|\mathbf L_{r,M}\|_{\mathrm{op}}
\le
\|\mathbf F_n\|_{\mathrm{op}}\|\mathbf G_{r,M}\|_{\mathrm{op}}
\le K_d K_A,
\]
and similarly
\[
\|\mathbf L_{s,M}\|_{\mathrm{op}}\le K_d K_A.
\]

Using the previous bounds,
\[
\|\mathbf Q_M\|_{\mathrm{op}}
\le
\|\mathbf L_{r,M}\|_{\mathrm{op}}
\|\mathbf H_n\|_{\mathrm{op}}
\|\mathbf L_{s,M}\|_{\mathrm{op}}
\le
C\ell_n,
\]
and
\[
\|\mathbf Q_M\|_F
\le
\|\mathbf L_{r,M}\|_{\mathrm{op}}
\|\mathbf H_n\|_F
\|\mathbf L_{s,M}\|_{\mathrm{op}}
\le
C\sqrt{n\ell_n}.
\]
All constants here depend only on $K_A,K_d,K_0$.

By (U1), the entries of $\boldsymbol\xi_M$ are independent, centered, and uniformly sub-Gaussian.
Therefore the Hanson--Wright inequality gives
\[
\Prob\!\left(
\left|
\boldsymbol\xi_M^\top \mathbf Q_M \boldsymbol\xi_M
-
\E(\boldsymbol\xi_M^\top \mathbf Q_M \boldsymbol\xi_M)
\right|
>x
\right)
\le
2\exp\!\left[
-c \min\!\left\{
\frac{x^2}{\|\mathbf Q_M\|_F^2},
\frac{x}{\|\mathbf Q_M\|_{\mathrm{op}}}
\right\}
\right].
\]
Using the norm bounds above and setting $x=nt$ yields
\[
\Prob\!\left(
\left|
{\hat{\mathbf{V}}^{or(M)}}[r,s]-\E{\hat{\mathbf{V}}^{or(M)}}[r,s]
\right|
>t
\right)
\le
2\exp\!\left[
-c_1 n \min\!\left\{
\frac{t^2}{\ell_n},
\frac{t}{\ell_n}
\right\}
\right],
\]
where $c_1>0$ depends only on $K_\varepsilon,K_A,K_d,K_0$, and crucially is
independent of $M,n,r,s$.

For each fixed $n$, since
\[
\sum_{a=0}^\infty \|\mathbf A_a\|_{\mathrm{op}}<\infty,
\]
we have
\[
\max_{1\le r\le p_n}\|Z_{t,r}-Z_{t,r}^{(M)}\|_2
\le
C \sum_{a>M}\|\mathbf A_a\|_{\mathrm{op}}
\to 0,
\qquad M\to\infty,
\]
uniformly in the finitely many $t$ appearing in the estimator.
By the definition of $\mathbf U_t$ and $\sum_j|d_{n,j}|\le K_d$,
\[
\max_{1\le r\le p_n}\|U_{t,r}-U_{t,r}^{(M)}\|_2\to0.
\]
Since $\Vor[r,s]$ is a finite weighted average of products
$U_{t,r}U_{u,s}$, it follows that
\[
{\hat{\mathbf{V}}^{or(M)}}[r,s]\to \Vor[r,s]
\quad\text{in }L^1,
\]
and hence
\[
\E{\hat{\mathbf{V}}^{or(M)}}[r,s]\to \E\Vor[r,s].
\]

Now fix $\tau\in(0,t)$. Then
\begin{align*}
\Prob\!\left(\big|\Vor[r,s]-\E\Vor[r,s]\big|>t\right)
\le
&\Prob\!\left(\big|{\hat{\mathbf{V}}^{or(M)}}[r,s]-\E{\hat{\mathbf{V}}^{or(M)}}[r,s]\big|>t-\tau\right)
\\
&+
\Prob\!\left(
\big|\Vor[r,s]-{\hat{\mathbf{V}}^{or(M)}}[r,s]\big|
+
\big|\E\Vor[r,s]-\E{\hat{\mathbf{V}}^{or(M)}}[r,s]\big|
>\tau
\right).
\end{align*}
Letting $M\to\infty$, the second probability tends to zero, while the first is bounded by the
truncated-process inequality. Hence
\[
\Prob\!\left(\big|\Vor[r,s]-\E\Vor[r,s]\big|>t\right)
\le
2\exp\!\left[
-c_1 n \min\!\left\{
\frac{(t-\tau)^2}{\ell_n},
\frac{t-\tau}{\ell_n}
\right\}
\right].
\]
Finally let $\tau\downarrow0$ to obtain
\[
\Prob\!\left(\big|\Vor[r,s]-\E\Vor[r,s]\big|>t\right)
\le
2\exp\!\left[
-c_1 n \min\!\left\{
\frac{t^2}{\ell_n},
\frac{t}{\ell_n}
\right\}
\right].
\]
This proves the proposition.
\end{proof}

\begin{proposition}
\label{prop:oracle_bias_unified}
Under Assumption~\ref{ass:suff_unified}, there exists a constant $C_B>0$,
independent of $n$ and $(r,s)$, such that
\[
\|\E(\Vor)-\Vtrue\|_{\max}
\le
C_B\left\{\ell_n^{-q_0}+\frac{(m_n+1)h_n}{n}\right\}.
\]
Consequently, Assumption~\ref{ass:oracle_bias} holds.
\end{proposition}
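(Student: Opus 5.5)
We compute $\E\Vor$ entrywise in closed form and compare it with $\Vtrue=\sum_k\Gammak$ using the decay bound of Lemma~\ref{lem:basic_unified}(iii). By stationarity of $\bm Z_t$ and the definition of $\mathbf U_t$ in \eqref{eq:MU}, for $k\ge 0$,
\[
\E\bigl(\mathbf U_t\mathbf U_{t-k}^\T\bigr)
=\sum_{j,j'=0}^{m_n} d_{n,j}d_{n,j'}\,\mathbf\Gamma_{k+(j'-j)h_n}
=:\mathbf\Gamma^{U}_k ,
\]
with the paper's convention $\mathbf\Gamma_k=\E(\bm Z_0\bm Z_k^\T)$; the sign convention on the lag is harmless after symmetrizing. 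This gives
\[
\E\Goorhatk=\frac{n-m_nh_n-|k|}{n}\,\mathbf\Gamma^U_{|k|}.
\]
Split $\mathbf\Gamma^U_k$ into the diagonal part $j=j'$ and the off-diagonal part $j\neq j'$. By \eqref{eq:diff_normalized}, the diagonal part equals $\sum_j d_{n,j}^2\,\mathbf\Gamma_k=\mathbf\Gamma_k$.

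We then decompose $\E\Vor-\Vtrue$ into four pieces.
\begin{itemize}
\item \emph{Kernel bias.} The first piece is $\sum_{|k|<\ell_n}\{K(k/\ell_n)-1\}\mathbf\Gamma_k$. Using $|1-K(x)|\le C_K|x|^{q_0}$ and Lemma~\ref{lem:basic_unified}(iii), it is bounded in max norm by $C\ell_n^{-q_0}\sum_k|k|^{q_0}\|\mathbf\Gamma_k\|_{\max}=O(\ell_n^{-q_0})$.
\item \emph{Truncation tail.} The second piece is $\sum_{|k|\ge\ell_n}\mathbf\Gamma_k$. It is bounded by $\ell_n^{-q_0}\sum_{|k|\ge \ell_n}|k|^{q_0}\|\mathbf\Gamma_k\|_{\max}=o(\ell_n^{-q_0})$.
\item \emph{Edge factor.} The third piece comes from $(n-m_nh_n-|k|)/n-1$. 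Its size is at most $(m_nh_n+\ell_n)/n$, multiplied by $\sum_k\|\mathbf\Gamma^U_k\|_{\max}$, which we show is bounded.
\item \emph{Cross-lag terms.} The fourth piece is $\sum_{|k|<\ell_n}K(k/\ell_n)\sum_{j\ne j'}d_{n,j}d_{n,j'}\mathbf\Gamma_{k+(j'-j)h_n}$.
\end{itemize}
For the edge factor, the term $m_nh_n/n$ is of the stated order $(m_n+1)h_n/n$. The term $\ell_n/n$ is absorbed because (U4) gives $\ell_n\le h_n$, so $\ell_n/n\le h_n/n\le (m_n+1)h_n/n$. For the bound on $\sum_k\|\mathbf\Gamma^U_k\|_{\max}$, use $\sum_j|d_{n,j}|\le K_d$ from (U3); this gives $\sum_k\|\mathbf\Gamma^U_k\|_{\max}\le K_d^2\sum_k\|\mathbf\Gamma_k\|_{\max}<\infty$.

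The main obstacle is the cross-lag term. For $j\neq j'$ and $|k|<\ell_n$, we have $|k+(j'-j)h_n|\ge h_n-\ell_n\ge c\,h_n$, since $\ell_n=o(h_n)$. Lemma~\ref{lem:basic_unified}(iii) then gives
\[
\|\mathbf\Gamma_{k+(j'-j)h_n}\|_{\max}\le C\bigl(|j'-j|h_n\bigr)^{-(q_0+2+\eta)}.
\]
Summing, and bounding $\sum_{j\ne j'}|d_{n,j}d_{n,j'}|$ by $K_d^2$, the cross-lag term is at most
\[
C K_0K_d^2\,\ell_n h_n^{-(q_0+2+\eta)} .
\]
This needs to be $O(\ell_n^{-q_0})$, that is, $\ell_n^{q_0+1}h_n^{-(q_0+2+\eta)}=O(1)$. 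The second condition in (U4), $h_n^{q_0+1+\eta}/\ell_n^{q_0+1}\to\infty$, gives this directly, with an extra factor $h_n^{-1}\to0$ to spare. If I have mis-tracked an exponent in this step, the fallback is to bound the cross-lag sum instead by $\sum_{|u|\ge h_n/2}\|\mathbf\Gamma_u\|_{\max}\le Ch_n^{-(q_0+1+\eta)}$ and then invoke (U4) in the same way.

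All constants depend only on $K_0$, $K_d$, $C_K$ and $C_\Gamma$, and hence are uniform in $(r,s)$ and $n$. Collecting the four pieces gives the stated bound with a single $C_B$.
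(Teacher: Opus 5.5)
Your proposal is essentially the paper's proof. It uses the same four-way split of $\E\Vor-\Vtrue$ into kernel bias, truncation tail, edge factor and cross-lag terms, and controls the cross-lag piece through $|k+ah_n|\ge h_n-\ell_n$ and (U4). The only cosmetic difference is that you attach the edge factor to $\mathbf\Gamma^U_k$ (using $\sum_k\maxnorm{\mathbf\Gamma^U_k}\le K_d^2\sum_k\maxnorm{\mathbf\Gamma_k}$), while the paper attaches it to $\mathbf\Gamma_k$ and leaves the factor $1-(m_nh_n+|k|)/n$ inside the cross term.

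One caveat, which the paper's own argument shares: the lag-sign issue is not merely a convention. Your remark that it is ``harmless after symmetrizing'' carries real content. With $\Goorhatk$ built literally from $\mathbf U_t\mathbf U_{t-|k|}^\T$ for both signs of $k$, $\E\Vor$ approximates $\mathbf\Gamma_0+2\sum_{k\ge1}\mathbf\Gamma_k^\T$ (with $\mathbf\Gamma_k=\E(\bm Z_0\bm Z_k^\T)$), not $\Vtrue$. These differ whenever the autocovariances are asymmetric. So the bound genuinely requires taking the $k<0$ blocks to be the transposes of the $k>0$ ones, which is equivalent to symmetrizing $\Vor$.
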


\begin{proof}
Let $\Gamma_k=\E(\mathbf Z_0\mathbf Z_k^\top)$ and
\[
\Gamma_k^U:=\E(\mathbf U_0\mathbf U_k^\top).
\]
By Lemma~\ref{lem:basic_unified},
\[
\|\Gamma_k\|_{\max}\le C_\Gamma (1+|k|)^{-(q_0+2+\eta)},
\qquad k\in\mathbb Z,
\]
and
\[
\sum_{k\in\mathbb Z}\|\Gamma_k\|_{\max}<\infty,
\qquad
\sum_{k\in\mathbb Z}(1+|k|)^{q_0}\|\Gamma_k\|_{\max}<\infty.
\tag{A.1}
\label{eq:Gamma_summable_unified}
\]
By stationarity of $\{\mathbf U_t\}$,
\[
\E\Goorhatk
=
\frac1n
\sum_{t=m_n h_n+|k|+1}^{n}
\E(\mathbf U_t\mathbf U_{t-|k|}^\top)
=
\left(1-\frac{m_n h_n+|k|}{n}\right)\Gamma_{|k|}^U.
\]
Therefore,
\[
\E\Vor
=
\sum_{|k|<\ell_n}
K\!\left(\frac{k}{\ell_n}\right)
\left(1-\frac{m_n h_n+|k|}{n}\right)
\Gamma_{|k|}^U.
\]
Recall
\[
\Vtrue=\sum_{k\in\mathbb Z}\Gamma_k.
\]
Add and subtract appropriate terms to write
\[
\E\Vor-\Vtrue
=
B_{\mathrm{ker}}+B_{\mathrm{tail}}+B_{\mathrm{bdry}}+B_{\mathrm{cross}},
\]
where
\begin{align*}
B_{\mathrm{ker}}
&:=
\sum_{|k|<\ell_n}
\left\{
K\!\left(\frac{k}{\ell_n}\right)-1
\right\}\Gamma_k,\\
B_{\mathrm{tail}}
&:=
-\sum_{|k|\ge\ell_n}\Gamma_k,\\
B_{\mathrm{bdry}}
&:=
-\sum_{|k|<\ell_n}
K\!\left(\frac{k}{\ell_n}\right)
\frac{m_n h_n+|k|}{n}\Gamma_k,\\
B_{\mathrm{cross}}
&:=
\sum_{|k|<\ell_n}
K\!\left(\frac{k}{\ell_n}\right)
\left(1-\frac{m_n h_n+|k|}{n}\right)
(\Gamma_k^U-\Gamma_k).
\end{align*}
Hence
\[
\|\E\Vor-\Vtrue\|_{\max}
\le
\|B_{\mathrm{ker}}\|_{\max}
+\|B_{\mathrm{tail}}\|_{\max}
+\|B_{\mathrm{bdry}}\|_{\max}
+\|B_{\mathrm{cross}}\|_{\max}.
\]
By Assumption~\ref{ass:kernel_diff},
\[
|1-K(x)|\le C_K |x|^{q_0},
\qquad |x|\le1.
\]
Therefore,
\[
\|B_{\mathrm{ker}}\|_{\max}
\le
\sum_{|k|<\ell_n}
\left|1-K\!\left(\frac{k}{\ell_n}\right)\right|
\|\Gamma_k\|_{\max}
\le
C_K \ell_n^{-q_0}
\sum_{|k|<\ell_n}|k|^{q_0}\|\Gamma_k\|_{\max}.
\]
Using \eqref{eq:Gamma_summable_unified},
\[
\|B_{\mathrm{ker}}\|_{\max}\lesssim \ell_n^{-q_0}.
\]
Similarly,
\[
\|B_{\mathrm{tail}}\|_{\max}
\le
\sum_{|k|\ge\ell_n}\|\Gamma_k\|_{\max}
\le
\ell_n^{-q_0}
\sum_{|k|\ge\ell_n}|k|^{q_0}\|\Gamma_k\|_{\max}
\lesssim \ell_n^{-q_0}.
\]
Using $\sup_{|x|\le1}|K(x)|\le K_0$,
\[
\|B_{\mathrm{bdry}}\|_{\max}
\le
\frac{K_0}{n}
\sum_{|k|<\ell_n}(m_n h_n+|k|)\|\Gamma_k\|_{\max}.
\]
Hence
\[
\|B_{\mathrm{bdry}}\|_{\max}
\le
\frac{K_0(m_n h_n+\ell_n)}{n}
\sum_{k\in\mathbb Z}\|\Gamma_k\|_{\max}
\lesssim
\frac{m_n h_n+\ell_n}{n}.
\]
Since $\ell_n\le h_n$ by (U4),
\[
\|B_{\mathrm{bdry}}\|_{\max}
\lesssim
\frac{(m_n+1)h_n}{n}.
\]
By definition,
\[
\mathbf U_t=\sum_{j=0}^{m_n}d_{n,j}\mathbf Z_{t-jh_n},
\]
so
\[
\Gamma_k^U
=
\sum_{j=0}^{m_n}\sum_{j'=0}^{m_n}
d_{n,j}d_{n,j'}\,\Gamma_{k+(j-j')h_n}.
\]
Using the normalization in \eqref{eq:diff_normalized}, the diagonal part reproduces $\Gamma_k$,
thus
\[
\Gamma_k^U-\Gamma_k
=
\sum_{\substack{0\le j,j'\le m_n\\ j\neq j'}}
d_{n,j}d_{n,j'}\,\Gamma_{k+(j-j')h_n}.
\]
Therefore,
\[
\|\Gamma_k^U-\Gamma_k\|_{\max}
\le
\sum_{j\neq j'} |d_{n,j}d_{n,j'}|\,
\|\Gamma_{k+(j-j')h_n}\|_{\max}.
\]
Since $\sum_j|d_{n,j}|\le K_d$,
\[
\|\Gamma_k^U-\Gamma_k\|_{\max}
\le
K_d^2
\max_{1\le |a|\le m_n}\|\Gamma_{k+a h_n}\|_{\max}.
\]
Now fix $|k|<\ell_n$ and $1\le |a|\le m_n$. Then
\[
|k+a h_n|\ge h_n-\ell_n.
\]
Hence, by Lemma~\ref{lem:basic_unified},
\[
\|\Gamma_{k+a h_n}\|_{\max}
\le
C_\Gamma (1+h_n-\ell_n)^{-(q_0+2+\eta)}
\lesssim
h_n^{-(q_0+2+\eta)},
\]
where we used $\ell_n=o(h_n)$.
Thus uniformly in $|k|<\ell_n$,
\[
\|\Gamma_k^U-\Gamma_k\|_{\max}
\lesssim
h_n^{-(q_0+2+\eta)}.
\]
Therefore,
\[
\|B_{\mathrm{cross}}\|_{\max}
\le
\sum_{|k|<\ell_n}
\left|K\!\left(\frac{k}{\ell_n}\right)\right|
\|\Gamma_k^U-\Gamma_k\|_{\max}
\lesssim
\ell_n h_n^{-(q_0+2+\eta)}.
\]
By (U4),
\[
\ell_n h_n^{-(q_0+1+\eta)}=o(\ell_n^{-q_0}),
\]
hence a fortiori
\[
\ell_n h_n^{-(q_0+2+\eta)}=o(\ell_n^{-q_0}).
\]
Therefore,
\[
\|B_{\mathrm{cross}}\|_{\max}=o(\ell_n^{-q_0}).
\]
Collecting the estimates from Steps 3--6 gives
\[
\|\E(\Vor)-\Vtrue\|_{\max}
\le
C_1\ell_n^{-q_0}
+
C_2\frac{(m_n+1)h_n}{n}
+
o(\ell_n^{-q_0}).
\]
For all sufficiently large $n$, the little-$o$ term can be absorbed into the first term, so
\[
\|\E(\Vor)-\Vtrue\|_{\max}
\le
C_B\left\{
\ell_n^{-q_0}
+
\frac{(m_n+1)h_n}{n}
\right\}
\]
for some constant $C_B>0$ independent of $n$ and $(r,s)$.
This completes the proof.
\end{proof}

\begin{proposition}
\label{prop:Rmu_unified_direct}
Under Assumption~\ref{ass:suff_unified}, Assumption~\ref{ass:cross_bernstein} holds. More precisely, there exists a constant \(c_3>0\),
depending only on \(K_\varepsilon,K_A,K_d,K_0\), such that, conditional on
\(\{\bm\mu_t\}_{t=1}^n\), for all sufficiently large \(n\) and all \(t>0\),
\begin{equation}
\Prob\!\left(
\maxnorm{\mathbf R_{\mu,n}} > t
\ \middle|\ \{\bm\mu_t\}_{t=1}^n
\right)
\le
4p_n^2 \exp\!\left(
-c_3 n \frac{t^2}{\ell_n^2 \bar M_{2,n}}
\right).
\label{eq:Rmu_unified_direct}
\end{equation}
Consequently,
\begin{equation}
\Prob\!\left(
\maxnorm{\mathbf R_{\mu,n}} > t
\ \middle|\ \{\bm\mu_t\}_{t=1}^n
\right)
\le
4p_n^2 \exp\!\left[
-c_3 n \min\!\left\{
\frac{t^2}{L_n^2 \bar M_{2,n}},
\frac{t}{L_n \bar M_{\infty,n}}
\right\}
\right].
\label{eq:Rmu_unified_direct_weakened}
\end{equation}
In particular, this verifies Assumption~\ref{ass:cross_bernstein}.
\end{proposition}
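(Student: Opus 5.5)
The plan is to write each entry of $\mathbf R_{\mu,n}$ as a linear form in the innovation vector and apply a sub-Gaussian tail bound conditionally on the mean path. Since $\mathbf D_t=\mathbf M_t+\mathbf U_t$, expanding $\Vdb$ shows that $\mathbf R_{\mu,n}=\mathbf R^{(1)}+\mathbf R^{(2)}$. The two pieces are
\[
\mathbf R^{(1)}=\sum_{|k|<\ell_n}K\!\left(\frac{k}{\ell_n}\right)\frac1n\sum_{t}\mathbf M_t\mathbf U_{t-|k|}^\T,
\qquad
\mathbf R^{(2)}=\sum_{|k|<\ell_n}K\!\left(\frac{k}{\ell_n}\right)\frac1n\sum_{t}\mathbf U_t\mathbf M_{t-|k|}^\T .
\]
Conditional on $\{\bm\mu_t\}$, the $\mathbf M_t$ are deterministic. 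Fix $(r,s)$ and use the matrices $\mathbf H_n,\mathbf F_n,\mathbf G_{s,M}$ from the proof of Proposition~\ref{prop:bernstein_unified}, with $\mathbf m_r:=(M_1^{(r)},\dots,M_n^{(r)})^\T$ extended by zero outside the valid range. Then
\[
\mathbf R^{(1)}[r,s]=\frac1n\,\mathbf m_r^\T\mathbf H_n\mathbf F_n\mathbf G_{s,M}\boldsymbol\xi_M,
\]
at least for the truncated process. This is a centered linear form $\mathbf a^\T\boldsymbol\xi_M$ with coefficient vector $\mathbf a^\T=n^{-1}\mathbf m_r^\T\mathbf H_n\mathbf L_{s,M}$, and $\mathbf R^{(2)}[r,s]$ has the same structure with $\mathbf H_n^\T$ and the roles of $r,s$ swapped.

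The key estimate is on the $\ell_2$ norm of the coefficient vector:
\[
\|\mathbf a\|_2\le n^{-1}\|\mathbf m_r\|_2\,\|\mathbf H_n\|_{\op}\,\|\mathbf L_{s,M}\|_{\op}\le n^{-1}\sqrt{n\bar M_{2,n}}\cdot C\ell_n\cdot K_dK_A .
\]
Here I use $\|\mathbf m_r\|_2^2\le\sum_t\|\mathbf M_t\|_{\max}^2=n\bar M_{2,n}$ together with the operator-norm bounds on $\mathbf H_n$ and $\mathbf L_{s,M}$ already established earlier. This gives $\|\mathbf a\|_2^2\le C\ell_n^2\bar M_{2,n}/n$, with constants independent of $M$, $r$ and $s$. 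Under (U1) the entries of $\boldsymbol\xi_M$ are independent, centered and uniformly sub-Gaussian, so the Hoeffding-type bound for linear forms (the linear analogue of Hanson--Wright) yields
\[
\Prob\left(|\mathbf a^\T\boldsymbol\xi_M|>t/2\right)\le 2\exp\left(-c\,n t^2/(\ell_n^2\bar M_{2,n})\right).
\]

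The truncation is removed exactly as in Proposition~\ref{prop:bernstein_unified}. For fixed $n$, the truncated entries converge to $\mathbf R^{(i)}[r,s]$ in $L^2$ as $M\to\infty$; I then use the $\tau$-slack argument and let $\tau\downarrow0$. Next I combine the two pieces via $\{|\mathbf R[r,s]|>t\}\subset\{|\mathbf R^{(1)}[r,s]|>t/2\}\cup\{|\mathbf R^{(2)}[r,s]|>t/2\}$. A union bound over the $p_n^2$ entries then gives the factor $4p_n^2$ and proves \eqref{eq:Rmu_unified_direct}.

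For \eqref{eq:Rmu_unified_direct_weakened}, note that $\ell_n\le L_n$, so $t^2/(\ell_n^2\bar M_{2,n})\ge t^2/(L_n^2\bar M_{2,n})\ge\min\{t^2/(L_n^2\bar M_{2,n}),\,t/(L_n\bar M_{\infty,n})\}$. Since the exponential is decreasing in this quantity, the sub-Gaussian bound implies the Bernstein-form bound, and hence Assumption~\ref{ass:cross_bernstein}.

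The main obstacle is bookkeeping at the boundaries: I must check that the index ranges in $\mathbf R_{\mu,n}$ (namely $t\ge m_nh_n+|k|+1$) match the zero-extension convention used for $\mathbf H_n$. Boundary terms must either be exactly represented or be absorbed without changing the operator-norm bound $C\ell_n$. I would handle this by defining a modified $\mathbf H_n'$ whose entries equal $K((t-u)/\ell_n)$ only on the valid index set and zero elsewhere. This preserves the row- and column-sum bound $(2\ell_n-1)K_0$, so the estimate $\|\mathbf H_n'\|_{\op}\le C\ell_n$ still holds.
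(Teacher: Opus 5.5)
Your proposal is correct and follows the paper's proof essentially step for step. The shared steps are:
\begin{itemize}
\item split $\mathbf R_{\mu,n}[r,s]$ into the two mean--noise cross terms;
\item write each, for the truncated process, as a linear form in $\boldsymbol\xi_M$ with coefficient norm at most $C\ell_n\sqrt{n\bar M_{2,n}}$, using $\|\mathbf H_n\|_{\op}\le C\ell_n$ and $\|\mathbf L_{\cdot,M}\|_{\op}\le K_dK_A$;
\item apply the sub-Gaussian linear-form tail bound, remove the truncation, and take a union bound;
\item obtain the Bernstein form from $\ell_n\le L_n$.
\end{itemize}
On your bookkeeping concern: $\Vdb$ uses $\widehat{\mathbf \Gamma}^{\,D}_{|k|}$ for both signs of $k$, so the exact kernel matrix is one-sided, not the symmetric $\mathbf H_n$. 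Its entries are $K((t-u)/\ell_n)$ doubled for $0<t-u<\ell_n$ and zero for $u>t$. Its row and column sums are still at most $2\ell_nK_0$, so the operator-norm bound and the rest of your argument go through unchanged.
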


\begin{proof}
Fix any \(1\le r,s\le p_n\). As in the proof of Proposition~\ref{prop:bernstein_unified},
for \(M\ge0\), define the truncated linear process
\[
\mathbf Z_t^{(M)}:=\sum_{a=0}^{M}\mathbf A_a\boldsymbol\varepsilon_{t-a},
\qquad
\mathbf U_t^{(M)}:=\sum_{j=0}^{m_n} d_{n,j}\mathbf Z_{t-jh_n}^{(M)}.
\]
Let \(\mathbf R_{\mu,n}^{(M)}\) be the analogue of \(\mathbf R_{\mu,n}\) obtained by replacing
\(\mathbf U_t\) with \(\mathbf U_t^{(M)}\) in \eqref{eq:Rmu}. Then, by \eqref{eq:Rmu},
\begin{align}
\mathbf R_{\mu,n}^{(M)}[r,s]
&=
\sum_{|k|<\ell_n} K\!\left(\frac{k}{\ell_n}\right)
\frac{1}{n}
\sum_{t=m_n h_n+|k|+1}^{n}
\Bigl\{
M_t^{(r)} U_{t-|k|}^{(M,s)}
+
U_t^{(M,r)} M_{t-|k|}^{(s)}
\Bigr\}
\notag\\
&=: I_{r,s}^{(M)}+J_{r,s}^{(M)}.
\label{eq:RmuM_cross_decomp}
\end{align}

Introduce the deterministic vectors
\[
\mathbf m_r
:=
(0,\ldots,0,M_{m_n h_n+1}^{(r)},\ldots,M_n^{(r)})^\top\in\mathbb R^n,
\]
and similarly
\[
\mathbf m_s
:=
(0,\ldots,0,M_{m_n h_n+1}^{(s)},\ldots,M_n^{(s)})^\top\in\mathbb R^n.
\]
Then, using the same boundary convention as in the proof of Proposition~\ref{prop:bernstein_unified},
\begin{equation*}
I_{r,s}^{(M)}=\frac1n\,\mathbf m_r^\top \mathbf H_n \mathbf u_s^{(M)},
\qquad
J_{r,s}^{(M)}=\frac1n\,(\mathbf u_r^{(M)})^\top \mathbf H_n \mathbf m_s,
\end{equation*}
where \(\mathbf H_n\) is the kernel matrix introduced in that proof, and
\[
\mathbf u_q^{(M)}=(U_{1,q}^{(M)},\ldots,U_{n,q}^{(M)})^\top,\qquad q=r,s.
\]

Next, let
\[
\boldsymbol\xi_M
:=
(\boldsymbol\varepsilon_{1-M}^\top,\ldots,\boldsymbol\varepsilon_n^\top)^\top
\in\mathbb R^{(n+M)p_n},
\]
and recall from the proof of Proposition~\ref{prop:bernstein_unified} that there exist matrices
\(\mathbf L_{r,M}\) and \(\mathbf L_{s,M}\) such that
\[
\mathbf u_r^{(M)}=\mathbf L_{r,M}\boldsymbol\xi_M,
\qquad
\mathbf u_s^{(M)}=\mathbf L_{s,M}\boldsymbol\xi_M,
\]
with
\begin{equation}
\|\mathbf L_{r,M}\|_{\op}\le K_dK_A,
\qquad
\|\mathbf L_{s,M}\|_{\op}\le K_dK_A,
\qquad
\|\mathbf H_n\|_{\op}\le C\ell_n.
\label{eq:L_H_bounds_Rmu}
\end{equation}
Therefore,
\begin{equation*}
I_{r,s}^{(M)}
=
\frac1n\,\mathbf g_{r,s,M}^\top \boldsymbol\xi_M,
\qquad
\mathbf g_{r,s,M}:=\mathbf L_{s,M}^\top \mathbf H_n^\top \mathbf m_r,
\end{equation*}
and
\begin{equation*}
J_{r,s}^{(M)}
=
\frac1n\,\widetilde{\mathbf g}_{r,s,M}^\top \boldsymbol\xi_M,
\qquad
\widetilde{\mathbf g}_{r,s,M}:=\mathbf L_{r,M}^\top \mathbf H_n \mathbf m_s.
\end{equation*}

We now bound the Euclidean norms of these coefficient vectors. By the definition of
\(\bar M_{2,n}\) in \eqref{eq:Mbar_def},
\begin{equation*}
\|\mathbf m_r\|_2^2
\le
\sum_{t=m_n h_n+1}^{n}\norm{\mathbf M_t}_{\max}^2
=
n\bar M_{2,n},
\qquad
\|\mathbf m_s\|_2^2
\le
n\bar M_{2,n}.
\end{equation*}
Hence, by \eqref{eq:L_H_bounds_Rmu},
\begin{align}
\|\mathbf g_{r,s,M}\|_2
&\le
\|\mathbf L_{s,M}\|_{\op}\,\|\mathbf H_n\|_{\op}\,\|\mathbf m_r\|_2
\le
C\,\ell_n \sqrt{n\bar M_{2,n}},
\label{eq:g_bound}
\\
\|\widetilde{\mathbf g}_{r,s,M}\|_2
&\le
\|\mathbf L_{r,M}\|_{\op}\,\|\mathbf H_n\|_{\op}\,\|\mathbf m_s\|_2
\le
C\,\ell_n \sqrt{n\bar M_{2,n}}.
\label{eq:gtilde_bound}
\end{align}

Since the entries of \(\boldsymbol\xi_M\) are independent, centered, and uniformly sub-Gaussian by Assumption~\ref{ass:suff_unified}(U1), the standard concentration inequality for linear forms in
independent sub-Gaussian random variables yields
\[
\Prob\!\left(
\bigl|\mathbf v^\top \boldsymbol\xi_M\bigr|>x
\right)
\le
2\exp\!\left(
-c\frac{x^2}{\|\mathbf v\|_2^2}
\right)
\]
for every deterministic vector \(\mathbf v\), where \(c>0\) depends only on \(K_\varepsilon\).
Applying this with \(\mathbf v=\mathbf g_{r,s,M}\) and \(x=nt/2\), and using \eqref{eq:g_bound}, we obtain
\begin{equation}
\Prob\!\left(
|I_{r,s}^{(M)}|>\frac t2
\ \middle|\ \{\bm\mu_t\}_{t=1}^n
\right)
\le
2\exp\!\left(
-c\,n\frac{t^2}{\ell_n^2\bar M_{2,n}}
\right).
\label{eq:I_tail_direct}
\end{equation}
Similarly, by \eqref{eq:gtilde_bound},
\begin{equation}
\Prob\!\left(
|J_{r,s}^{(M)}|>\frac t2
\ \middle|\ \{\bm\mu_t\}_{t=1}^n
\right)
\le
2\exp\!\left(
-c\,n\frac{t^2}{\ell_n^2\bar M_{2,n}}
\right).
\label{eq:J_tail_direct}
\end{equation}

Combining \eqref{eq:RmuM_cross_decomp}, \eqref{eq:I_tail_direct}, and \eqref{eq:J_tail_direct},
we obtain
\begin{equation}
\Prob\!\left(
|\mathbf R_{\mu,n}^{(M)}[r,s]|>t
\ \middle|\ \{\bm\mu_t\}_{t=1}^n
\right)
\le
4\exp\!\left(
-c\,n\frac{t^2}{\ell_n^2\bar M_{2,n}}
\right).
\label{eq:RmuM_entry_tail}
\end{equation}

It remains to let \(M\to\infty\). For each fixed \(n\), since
\(\sum_{a=0}^\infty \|\mathbf A_a\|_{\op}<\infty\),
\[
\max_{1\le q\le p_n}\|Z_{t,q}-Z_{t,q}^{(M)}\|_2
\le
C\sum_{a>M}\|\mathbf A_a\|_{\op}\to0,
\qquad M\to\infty.
\]
Using \(\sum_{j=0}^{m_n}|d_{n,j}|\le K_d\), it follows that
\[
\max_{1\le q\le p_n}\|U_{t,q}-U_{t,q}^{(M)}\|_2\to0
\qquad M\to\infty.
\]
Since \(\mathbf R_{\mu,n}[r,s]\) is a finite weighted sum of terms of the form
\(M_t^{(r)}U_{t-|k|}^{(s)}\) and \(U_t^{(r)}M_{t-|k|}^{(s)}\), with
\(\{\bm\mu_t\}_{t=1}^n\) treated as deterministic conditioning information, we have
\[
\mathbf R_{\mu,n}^{(M)}[r,s]\to \mathbf R_{\mu,n}[r,s]
\qquad\text{in }L^1
\]
for each fixed \(r,s\). Letting \(M\to\infty\) in \eqref{eq:RmuM_entry_tail} gives
\begin{equation*}
\Prob\!\left(
|\mathbf R_{\mu,n}[r,s]|>t
\ \middle|\ \{\bm\mu_t\}_{t=1}^n
\right)
\le
4\exp\!\left(
-c\,n\frac{t^2}{\ell_n^2\bar M_{2,n}}
\right).
\end{equation*}

Finally, applying the union bound over all \(1\le r,s\le p_n\), we obtain
\begin{align*}
\Prob\!\left(
\maxnorm{\mathbf R_{\mu,n}}>t
\ \middle|\ \{\bm\mu_t\}_{t=1}^n
\right)
&\le
\sum_{r=1}^{p_n}\sum_{s=1}^{p_n}
\Prob\!\left(
|\mathbf R_{\mu,n}[r,s]|>t
\ \middle|\ \{\bm\mu_t\}_{t=1}^n
\right)
\notag\\
&\le
4p_n^2\exp\!\left(
-c_3 n\frac{t^2}{\ell_n^2\bar M_{2,n}}
\right),
\end{align*}
which proves \eqref{eq:Rmu_unified_direct}.

Since \(\ell_n\le L_n\), we have
\[
\frac{t^2}{L_n^2\bar M_{2,n}}
\le
\frac{t^2}{\ell_n^2\bar M_{2,n}},
\]
and therefore
\[
\min\!\left\{
\frac{t^2}{L_n^2\bar M_{2,n}},
\frac{t}{L_n\bar M_{\infty,n}}
\right\}
\le
\frac{t^2}{\ell_n^2\bar M_{2,n}}.
\]
Hence \eqref{eq:Rmu_unified_direct_weakened} follows immediately from
\eqref{eq:Rmu_unified_direct}, and therefore Assumption~\ref{ass:cross_bernstein} holds. This completes the proof.
\end{proof}

\bibliographystyle{apa}
\bibliography{ref}

\end{document}